\newtheorem{thm}{Theorem}[section]
\newtheorem{lemma}[thm]{Lemma}
\newtheorem{prop}[thm]{Proposition}
\newtheorem{cor}[thm]{Corollary}
\newtheorem{defin}[thm]{Definition}
\newtheorem{rem}[thm]{Remark}
\newtheorem{exam}[thm]{Example}
\newtheorem{conjecture}[thm]{Conjecture}
\numberwithin{equation}{section}
\newcommand{\R}{{\mathbb{R}}}
\newcommand{\T}{{\mathbb{T}}}
\newcommand{\Z}{{\mathbb{Z}}}
\newcommand{\N}{{\mathbb{N}}}
\newcommand{\C}{{\mathbb{C}}}
\newcommand{\cA}{{\mathcal{A}}}
\newcommand{\cL}{{\mathcal{L}}}
\newcommand{\cX}{{\mathcal{X}}}
\def\id{{1\hskip-2.5pt{\rm l}}}
\newcommand{\tr}{{\rm tr}}
\DeclareMathOperator{\sgrad}{sgrad}
\newcommand{\til}[1]{\widetilde{#1}}
\newcommand{\End}{\textup{End}}
\newcommand{\om}{\omega}
\newcommand{\Om}{\Omega}
\newcommand{\mZ}{{\mathbb Z}}
\newcommand{\bigo}{\mathcal{O}}
\newcommand{\op}{{\textrm{op}}}
\begin{document}

\title{Almost representations of algebras and quantization}

\author{Louis Ioos$^1$, David Kazhdan and Leonid Polterovich$^2$}
\date{}
\maketitle
\newcommand{\Addresses}{{
  \bigskip
  \footnotesize

  \textsc{School of Mathematical Sciences, Tel Aviv University, Ramat Aviv, Tel Aviv 69978
Israel}\par\nopagebreak
  \textit{E-mail address}: \texttt{louisioos@mail.tau.ac.il}
	
	\medskip
	
	\textsc{Einstein Institute of Mathematics, Hebrew University, Givat Ram, Jerusalem, 91904, Israel}\par\nopagebreak
  \textit{E-mail address}: \texttt{kazhdan@math.huji.ac.il}
	\medskip
	
	  \textsc{School of Mathematical Sciences, Tel Aviv University, Ramat Aviv, Tel Aviv 69978
Israel}\par\nopagebreak
  \textit{E-mail address}: \texttt{polterov@tauex.tau.ac.il}

}}

\footnotetext[1]{Partially supported by the European Research Council Starting grant 757585}
\footnotetext[2]{Partially supported by the Israel Science Foundation
grant 1102/20}


\begin{abstract}
We introduce  the notion of almost representations of Lie algebras and quantum tori, and establish an Ulam-stability type phenomenon: every irreducible almost representation is close to a genuine irreducible representation. As an application, we prove that geometric quantizations of the two-dimensional sphere and the two-dimensional torus are conjugate in the semi-classical limit up to a small error.
\end{abstract}

\tableofcontents

\section{Introduction and main results}\label{sec-intro}
The goal of this article is twofold.  The first objective is to establish an
Ulam-stability type phenomenon for   almost  representations of  algebras such as compact
Lie algebras and quantum tori: we show that under certain assumptions, every irreducible almost representation of such an algebra is close to a genuine irreducible representation.
In the case of Lie algebras, we present two versions
of an Ulam-stability, associated to two different
notions of an almost representation.

\begin{rem}{\rm
While a similar problem has been studied for representation of  groups \cite{Grove, K, DGLT}, to the best of our knowledge  it has not
been addressed in the framework of representations of algebras.}
\end{rem}

\noindent
Our second goal is an application of the results on stability to geometric quantization.

\subsection{Almost representations}
Let us pass to a more detailed discussion of almost representations in the three cases we consider.
\newline

\medskip
\noindent
{\sc First case:} For a finite-dimensional
Hilbert space $H$, write $\|\cdot\|_{op}$ for the operator norm on the space $\mathfrak{su}(H)$ of skew-Hermitian operators
acting on $H$.
Recall that the Lie algebra $\mathfrak{su}(2)$
has real dimension $3$,
and admits a basis $L_1,\,L_2,\,L_3\in\mathfrak{su}(2)$
satisfying the commutation relations
\begin{equation} \label{eq-gener}
[L_j,L_{j+1}] =L_{j+2}\quad\text{for all}\quad
j\in\Z/3\Z\;.
\end{equation}
An \emph{irreducible representation} is a linear map
$\rho:\mathfrak{su}(2)\to\mathfrak{su}(H)$
preserving the commutation relations and such that
the triple of skew-Hermitian operators
$X_j:=\rho(L_j),\,j\in\Z/3\Z$,
does not preserve any proper subspace of $H$.
As well known in such a case, writing $n:=\dim H$ for the complex
dimension of $H$, we have
\begin{equation}\label{Casimirfla}
X_1^2+X_2^2+X_3^2=-\frac{n^2-1}{4}\id\,.
\end{equation}
Our first main result is as follows.

\begin{thm}\label{mainthm-1} $\;$
Fix $r>0$, and for every $k\in \N$ and $c \in \R$,
consider the following assumptions on a
finite-dimensional Hilbert space $H$ and
a triple of operators $x_j\in\mathfrak{su}(H), j\in \mZ /3\mZ$:
\begin{itemize}
\item [$(\textup{R}1)$] \quad\quad\quad\quad\quad\,
$\left\|\,x_1^2 + x_2^2 + x_3^2 + \left(\frac{k^2}{4}+\frac{kc}{2}\right)
\id\,\right\|_{op}\leq r\,;$
\item [$(\textup{R}2)$] \quad\quad\quad
$\big\|\,[x_j,x_{j+1}] -x_{j+2}\,\big\|_{op}
\leq r/k\quad\text{for all}\quad
j\in\Z/3\Z\,;$
\item [$(\textup{R}3)$] \quad\quad\quad\quad\quad\quad\quad\quad\quad~
$\dim H <2(k+c)\;.$
\end{itemize}
Then the following holds:
\begin{enumerate}
\item For any $c \notin \Z$, there exists $k_0\in\N$ such that
the system of assumptions $(R1)$ and $(R2)$
cannot be fullfilled for $k\geq k_0$.
\item For any $c\in\Z$, there exists $k_0\in\N$ and
$C_1,\,C_2>0$,
such that
for all $k\geq k_0$,
for any finite
dimensional
$H$ and triple of operators $x_j\in\mathfrak{su}(H), j\in \mZ /3\mZ$,
satisfying $(R1),(R2)$ and $(R3)$, one has
\begin{equation}\label{dimH=k+c}
\dim H = k+c\,.
\end{equation}
Furthermore,
\begin{equation}\label{mainfla0}
k/2-C_1\leq\|x_j\|_{op}\leq k/2+C_1\quad\text{for all}\quad j\in\Z/3\Z\,,
\end{equation}
and there exists an irreducible representation
$\rho:\mathfrak{su}(2)\to\mathfrak{su}(H)$ satisfying
\begin{equation}\label{mainfla}
\|x_j-\rho (L_j)\|_{op}\leq C_2\quad\text{for all}\quad j\in\Z/3\Z\,.
\end{equation}
\end{enumerate}
\end{thm}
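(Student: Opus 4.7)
The plan is to emulate the classical ladder-operator proof of the classification of irreducible $\mathfrak{su}(2)$-representations, with (R1) supplying the Casimir constraint, (R2) driving an approximate ladder structure on $H$, and (R3) precluding more than one ladder. I set $T := -ix_1$ (Hermitian), $J_\pm := x_2 \pm ix_3$ (so that $J_\mp^* = -J_\pm$), and $\mu_0 := k^2/4 + kc/2$. A direct computation from (R2) yields the commutators $[T, J_\pm] = \mp J_\pm + \cO_{op}(r/k)$, and combining with (R1) yields the approximate raising--lowering identities
\[
J_\mp J_\pm = T(T \mp 1) - \mu_0 \Id + \cO_{op}(r).
\]
For a unit eigenvector $v$ of $T$ at eigenvalue $\lambda$, this gives $\|J_\pm v\|^2 = \mu_0 - \lambda(\lambda \mp 1) + O(r)$, and non-negativity confines the spectrum of $T$ to $[-\Lambda, \Lambda]$ where $\Lambda := (-1 + \sqrt{1 + 4\mu_0})/2 + O(r/k) = (k+c-1)/2 + O(1/k)$.

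Starting from a top eigenvector $v_0$ of $T$ with $\lambda_0 = \Lambda + O(1/k)$, I iterate $v_{j+1} := J_+ v_j$. An induction using the identities above shows that $v_j$ is an approximate $T$-eigenvector at eigenvalue $\lambda_0 - j$, with $\|v_j\|^2$ matching the exact $\mathfrak{su}(2)$-ladder formula up to a $(1 + O(r))$ multiplicative error. The chain terminates, i.e.~$\|v_m\|$ falls below the error threshold, precisely when $\lambda_0 - m$ reaches $-\Lambda + O(1/k)$, giving integer length $m+1 = 2\Lambda + 1 + O(1/k) = (k+c) + O(1/k)$. This immediately proves part (1): if $c \notin \Z$, the fractional part of $k + c$ is bounded away from $0$ and $1$ uniformly in $k$, so for $k$ large it cannot lie within $O(1/k)$ of any integer. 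For part (2), with $c \in \Z$, the ladder produces $k + c$ approximately orthogonal unit vectors, hence $\dim H \geq k + c$; applying the same construction to the compressed operators $P_{V^\perp} x_j P_{V^\perp}$ on the orthogonal complement of the ladder span $V$, which is approximately $x_j$-invariant by the identities above, yields a second chain of length $k+c$ unless $V^\perp = 0$, so (R3) forces $\dim H = k + c$, giving (\ref{dimH=k+c}). The norm bound (\ref{mainfla0}) then follows from $\|x_1\|_{op} = \max|\lambda| = k/2 + O(1)$ combined with the cyclic symmetry of (R1) and (R2).

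To build the genuine representation $\rho$, I Gram--Schmidt the ladder vectors $\{v_j\}_{j=0}^{k+c-1}$ into an orthonormal basis $(u_j)$ of $H$, declare $\rho(L_1)$ to be the diagonal operator with eigenvalues $i((k+c-1)/2 - j)$ on $u_j$, and define $\rho(L_2 \pm iL_3)$ to be the exact ladder operators of the $(k+c)$-dimensional irreducible representation in this basis, extended $\R$-linearly to $\mathfrak{su}(2)$. The main obstacle is then the uniform operator-norm bound (\ref{mainfla}): the per-step commutator errors are only $O(r/k)$, but naively accumulating them over the $\sim k$ ladder steps would fill a $(k+c) \times (k+c)$ matrix with entries of size $O(r)$, whose operator norm could a priori be as large as $O(k)$. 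Upgrading to a constant $C_2$ independent of $k$ requires exploiting (i) the sharp match between $\|v_{j+1}\|/\|v_j\|$ and the exact $\mathfrak{su}(2)$-ladder coefficient, guaranteed by the polynomial identities above, and (ii) the smallness of the Gram--Schmidt correction from $v_j/\|v_j\|$ to $u_j$. The final estimate splits $x_j - \rho(L_j)$ into a diagonal part, controlled by the $T$-eigenvalue estimate, and an off-diagonal part, controlled by the raising--lowering identity; each is then bounded in operator norm using the polynomial identities derived in the first step.
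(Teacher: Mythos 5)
Your overall strategy (ladder operators built from the $x_j$, a quasi-commutation relation with a distinguished Hermitian generator, and a Casimir-type raising--lowering identity controlling the chain length) is the same as the paper's, up to the immaterial choice of singling out $x_1$ rather than $x_3$. However, your iteration scheme differs in a way that matters, and the error analysis — which is the entire technical content of the proof — is not carried out and contains a quantitative misstatement.

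The paper does not iterate $v_{j+1}=J_+v_j$ directly. At each step it applies the ladder operator, then uses the quasimode Lemma (their Lemma 2.1) to \emph{re-project onto a genuine eigenvector} of $-ix_3$ before iterating. This "reset" is essential. The per-step eigenvalue error delivered by the quasimode lemma is $\|w\|/\|v\|$ where $\|w\|=O(1/k)$ is the commutation defect, and — crucially — $\|v\|=\|y_-f\|\gtrsim\sqrt{k}$ in the bulk of the spectrum by the Casimir identity. This yields a per-step eigenvalue error of $O(1/k^{3/2})$, so the cumulative drift over $\sim k$ steps is $O(1/\sqrt{k})=o(1)$, which is exactly what keeps the $k+c$ eigenvalues distinct, makes the dimension count in Step 2 work, and forces $c\in\Z$. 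Your proposal never identifies this $O(1/k^{3/2})$ improvement; you state the per-step defect as $O(r/k)$, and your claimed conclusion that the chain length is $(k+c)+O(1/k)$ is not achievable by either scheme — even with the refined estimate the cumulative error is only $O(1/\sqrt{k})$. Worse, your direct iteration $v_{j+1}=J_+v_j$ lets the "non-eigenvector" component of $v_j$ propagate: to pass from an approximate eigenrelation $Tv_j=\lambda_jv_j+w_j$ to the norm identity for $\|v_{j+1}\|^2$ you must control $\langle T(T\mp 1)v_j,v_j\rangle$, and the error term picks up a factor of $\|T\|_{op}\sim k$, so the claimed $(1+O(r))$ multiplicative match of $\|v_j\|$ with the exact $\mathfrak{su}(2)$ coefficients does not follow from what you have written.

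Two further gaps. For Step 2 you propose to compress $x_j$ to $V^\perp$ and re-run the construction; for this to go through you must show that the compressed operators still satisfy (R1)--(R2) with comparable constants, which requires a quantitative estimate on the approximate invariance of $V$ that you have not established (and which again hinges on the refined eigenvalue drift). The paper instead restarts the ladder from any eigenvector outside the first chain and carefully tracks eigenspace multiplicities, handling the case where the new eigenvalues coincide with old ones. For Step 3 you correctly flag that naively accumulating $O(r/k)$ entry errors over a $k\times k$ matrix would give $O(k)$ in operator norm, and you correctly identify the two ingredients of a fix, but you do not carry it out; the paper's solution is the band-plus-remainder decomposition $y_\pm-Y_\pm=A+B$ together with the explicit verification that off-band matrix entries are $O(1/k)$ (via the commutation with $x_3$ and the $1/2$-gap between shifted eigenvalues) and that on-band entries are $O(1)$ via a derivative estimate on the ladder coefficient. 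Without these ingredients the proposal does not establish \eqref{mainfla}.
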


The proof of this theorem is given in Section \ref{sec-arpr}.
Let us point out that only the inequalities $(R1)$ and $(R2)$ are needed to
establish \eqref{mainfla0}.
Note also that for genuine irreducible representations of
$\mathfrak{su}(2)$, assumption $(R1)$ holds with $r=1/4$ by
\eqref{Casimirfla}, while $(R2)$ is valid for any $r>0$ by \eqref{eq-gener}.

\begin{rem}{\rm
Theorem \ref{mainthm-1} shows in particular
that for $k\in\N$ big enough,
any triple of operators $x_1,\,x_2,\,x_3\in\mathfrak{su}(H)$
with $\dim H<2(k+c)$ satisfying (R1) and (R2) acts irreducibly,
i.e., has no common proper invariant subspace $V\subset H$.
Conversely,
the direct sum of two
$k$-dimensional irreducible representations satisfy the assumptions
(R1) and (R2) for $c=0$, so that the assumption
(R3) is optimal in order to get irreducible representations.
For a related discussion on almost irreducibility, see
the paragraph after Remark \ref{rem-compar}.}
\end{rem}

\begin{conjecture}
\label{conj-other} The analogue of Theorem \ref{mainthm-1} holds
for all real compact Lie algebras,
with the appropriate relation in the left hand side of (R1)
given by the Casimir element.
\end{conjecture}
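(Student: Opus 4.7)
The plan is to reduce the conjecture to the rank-one case already established in Theorem~\ref{mainthm-1}, by exploiting the root space decomposition of a compact Lie algebra $\mathfrak{g}$ together with an approximate joint spectral theorem for the almost-commuting image of the Cartan subalgebra. Fix a maximal torus $\mathfrak{t}\subset\mathfrak{g}$, a positive root system $R^+\subset i\mathfrak{t}^*$, and a basis $\{L_i\}_{i=1}^{\dim\mathfrak{g}}$ of $\mathfrak{g}$ adapted to $\mathfrak{g}=\mathfrak{t}\oplus\bigoplus_{\alpha\in R^+}\mathfrak{g}_\alpha$; each positive root $\alpha$ carries an embedded compact $\mathfrak{su}(2)_\alpha\subset\mathfrak{g}$. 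With respect to a Killing-normalized inner product, the Casimir element $\Omega=-\sum_i L_iL^i$ acts on the irreducible representation $V_\lambda$ by the scalar $\|\lambda+\rho\|^2-\|\rho\|^2$, where $\rho$ is half the sum of positive roots. The natural analogue of $(R1)$ should then read
\[
\Big\|\sum_i x_i^2 + \Lambda(k,c)\id\Big\|_{op}\leq r,
\]
with $\Lambda(k,c)=\|k\lambda_0+\mu(c)+\rho\|^2-\|\rho\|^2$ for a fixed dominant weight $\lambda_0$ and a bounded perturbation $\mu(c)$; the analogue of $(R2)$ becomes $\|[x_i,x_j]-\sum_{\ell} c_{ij}^\ell x_\ell\|_{op}\leq r/k$ for the structure constants of $\{L_i\}$; and $(R3)$ becomes an upper bound on $\dim H$ derived from the Weyl dimension formula applied to $k\lambda_0+\mu(c)$, which is the mechanism that forces uniqueness of the highest weight.

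The proof would then proceed in three steps. First, apply Theorem~\ref{mainthm-1} inside each subalgebra $\mathfrak{su}(2)_\alpha$: after extracting the $\mathfrak{su}(2)_\alpha$-Casimir from $(R1)$ by a projection and Cauchy-Schwarz argument, the triple of images satisfies the hypotheses of Theorem~\ref{mainthm-1} with a new parameter $k_\alpha$ of size $O(k)$, giving both the norm bound $\|x_\alpha\|_{op}=O(k)$ (the analogue of \eqref{mainfla0}) and a genuine irreducible $\mathfrak{su}(2)_\alpha$-representation within operator distance $O(1)$ of $(x_{e_\alpha},x_{f_\alpha},x_{h_\alpha})$. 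Second, the images of an orthonormal basis of $\mathfrak{t}$ are skew-Hermitian, almost commute by $(R2)$ up to error $r/k$, and have spectrum of size $O(k)$; a quantitative joint spectral theorem for almost commuting matrices then produces an orthogonal decomposition $H=\bigoplus_\mu H_\mu$ into approximate weight spaces indexed by integral weights $\mu$, with each root vector $x_\alpha$ approximately shifting weights by $\alpha$ by virtue of $[\mathfrak{t},\mathfrak{g}_\alpha]=\alpha(\cdot)\mathfrak{g}_\alpha$ and $(R2)$. Third, $(R3)$ combined with the Weyl dimension formula identifies the multiset of approximate weights with that of $V_{k\lambda_0+\mu(c)}$ up to $O(1/k)$ corrections, and a weight-space-by-weight-space gauge fixing builds an isometry $U\colon V_{k\lambda_0+\mu(c)}\to H$ intertwining the $x_i$ with the genuine $\rho(L_i)$ up to operator norm $O(1)$.

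The main obstacle is the coherent synchronization of unitary gauges in Step~3. Each application of Theorem~\ref{mainthm-1} along $\mathfrak{su}(2)_\alpha$ pins down the genuine representation only up to an $O(1)$ unitary on each weight space, and these local gauges must be chosen compatibly across all positive roots so that the bracket relations $[x_\alpha,x_\beta]=N_{\alpha,\beta}x_{\alpha+\beta}$ hold with a single $O(1)$ error rather than one that accumulates with the combinatorial length of the Weyl group or with the number of positive roots. A promising alternative is to avoid the root-by-root reduction altogether: combine a Lojasiewicz-type inequality on the algebraic variety of $\mathfrak{g}$-representations inside $\mathfrak{su}(H)$ with infinitesimal rigidity of $V_{k\lambda_0+\mu(c)}$ provided by Schur's lemma, applying a quantitative implicit function theorem to the map $\phi\mapsto\bigl([\phi(L_i),\phi(L_j)]-\sum_\ell c_{ij}^\ell\phi(L_\ell)\bigr)_{i,j}$ linearized at the genuine representation. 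A secondary difficulty is to formulate the correct analogue of $(R3)$: the role of $c\in\Z$ in the $\mathfrak{su}(2)$ case becomes a Diophantine condition on the weight lattice ensuring that no two distinct dominant weights of comparable norm share nearly identical Casimir eigenvalues and Weyl dimensions, which must be analyzed separately for each simple compact Lie algebra.
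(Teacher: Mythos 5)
The statement you are addressing is Conjecture \ref{conj-other}, which the paper leaves open: the authors explicitly remark that their proof of Theorem \ref{mainthm-1} relies on the explicit description of all irreducible $\mathfrak{su}(2)$-representations and that ``new ideas are needed'' for general compact Lie algebras. So there is no proof in the paper to compare against, and your text is a program rather than a proof — you yourself flag the gauge-synchronization problem in Step~3 and the formulation of the correct analogue of (R3) as unresolved. Those are not minor technicalities: the coherent choice of unitary gauges across all positive roots is essentially the whole difficulty, and the alternative route you sketch (Lojasiewicz inequality on the representation variety plus a quantitative implicit function theorem) is a different unproven program, not a patch.

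Beyond the acknowledged gaps, Step~1 contains a concrete flaw. Your hypothesis (R1) only controls the full Casimir $\sum_i x_i^2$, and you propose to extract from it a near-scalar bound on the partial Casimir $x_{e_\alpha}^2+x_{f_\alpha}^2+x_{h_\alpha}^2$ so that Theorem \ref{mainthm-1} applies to the triple attached to each root $\alpha$. But even for a genuine irreducible representation $V_\lambda$ of $\mathfrak{g}$ (say $\mathfrak{g}=\mathfrak{su}(3)$ with $\lambda$ large), the restriction to a root subalgebra $\mathfrak{su}(2)_\alpha$ is highly reducible: its Casimir has eigenvalues spread over a range of order $k^2$, so no projection/Cauchy--Schwarz argument can make it $r$-close to a single scalar, and the dimension bound (R3) for the sub-triple fails by a factor polynomial in $k$ (since $\dim H\sim k^{|R^+|}$ by the Weyl dimension formula while the sub-parameter $k_\alpha=O(k)$). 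Hence Theorem \ref{mainthm-1} simply does not apply to the restricted triples, and the intended reduction to rank one collapses at the first step; what one could at best hope for is an approximate isotypic decomposition of $H$ under each $\mathfrak{su}(2)_\alpha$, which is a much weaker statement and reintroduces the multiplicity/gauge bookkeeping you defer. Similarly, Step~2 invokes a ``quantitative joint spectral theorem'' for $\dim\mathfrak{t}$ almost-commuting skew-Hermitian matrices with bounds uniform in $\dim H$; no such general theorem is available (Lin-type results degrade with dimension and fail for triples in general), so this too would need to be proved using the specific structure at hand. The conjecture remains open, and your proposal, while a reasonable outline of one possible attack, does not establish it.
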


Our proof of Theorem \ref{mainthm-1} uses the explicit
description of representations of $\mathfrak{su}(2)$ for all
$k\in\N$, and new ideas
are needed in order to find a uniform proof
for all compact Lie algebras in this setting.
\newline

\medskip\noindent {\sc Second case:}
The next result is a counterpart of Theorem \ref{mainthm-1}
for \emph{two-dimensional quantum tori}. Following
\cite{HB80,Rie}, recall that {\it the quantum torus}
$\cA_\theta$ is a $C^*$-algebra over $\C$ depending on a
parameter $\theta \in S^1= \R/\Z$. It is generated by two elements
$W_1,W_2\in\cA_\theta$ with relations
\begin{equation}
W_1^*W_1=W_2^*W_2=\id\quad\quad\text{and}
\quad\quad W_1W_2 = e^{2\pi i\theta}W_2W_1\,.
\end{equation}
A \emph{$*$-representation}
$\rho:\cA_\theta\to\End(H)$ on a Hilbert space $H$
is then determined by the data
of two unitary operators
$X_1:=\rho(W_1),\,X_2:=\rho(W_2)\in\End(H)$ satisfying
$X_1X_2 = e^{2\pi i\theta}X_2X_1$.
The $C^*$-algebra $\cA_\theta$ admits
an irreducible finite-dimensional $*$-representation
whenever $e^{2\pi i\theta}$ is an $n$-th prime root of unity,
and in that case we have $\dim H = n$.

\begin{thm}\label{mainthmT2}
Fix $r>0$, and for any $c \in \R$ and $k\in \N$,
consider the following assumptions on a
finite-dimensional Hilbert space $H$ and
a pair of operators
$x_1,\,x_2\in\End(H)$:
\begin{itemize}
\item [$(\textup{R}1)$] ~\quad\quad\quad\quad $\big\|x_jx_j^*-\id\big\|_{op}
\leq r/k^3\quad\text{for all}\quad
j=1,\,2\,;$
\item [$(\textup{R}2)$] \quad\quad\quad\quad\quad\quad $\left\|\,x_1x_2-e^{2i\pi/(k+c)}x_2x_1
\,\right\|_{op}\leq r/k^3\,;$
\item [$(\textup{R}3)$] \quad\quad\quad\quad\quad\quad\quad\quad\quad~
$\dim H <2(k+c)\;.$
\end{itemize}
Then the following holds:
\begin{enumerate}
\item For any $c \notin \Z$, there exists $k_0\in\N$ such that
the system of assumptions $(R1),\,(R2)$ and $(R3)$
cannot be fullfilled for $k\geq k_0$.
\item For any $c\in\Z$, there exists $k_0\in\N$ and
$C>0$
such that
for all $k\geq k_0$,
for any finite
dimensional
$H$ and a pair of operators
$x_1,\,x_2\in\End(H)$,
satisfying $(R1),\,(R2)$ and $(R3)$, one has
\begin{equation}\label{dimH=k+cT2}
\dim H = k+c\,.
\end{equation}
Furthermore, there exists a $*$-representation $\rho: \cA_\theta \to \text{End}(H)$
with $\theta = 1/(k+c)$ such that
\begin{equation}\label{UVdef}
\|x_j-\rho(W_j)\|_{op}\leq C/k^{3/2}\quad\text{for all}\quad
j=1,\,2\,.
\end{equation}
\end{enumerate}
\end{thm}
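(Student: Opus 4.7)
The strategy parallels that of Theorem \ref{mainthm-1}, adapted to the fact that $*$-representations of $\cA_\theta$ exist in finite dimension precisely when $\theta \in \Q$, in which case the canonical models are the clock-and-shift (Weyl) matrices. My first step is to pass from the almost-unitary $x_j$ to a genuine unitary $U_j$ via polar decomposition: for $k$ large, $(R1)$ forces $x_j$ to be invertible, and $U_j := x_j(x_j^* x_j)^{-1/2}$ satisfies $\|x_j - U_j\|_{op} = O(r/k^3)$. Substituting into $(R2)$ and setting $\lambda := e^{-2\pi i/(k+c)}$, one obtains $\|U_2 U_1 U_2^{*} - \lambda U_1\|_{op} = O(r/k^3)$, which upon iteration yields
\[
\|U_2^m U_1 (U_2^*)^m - \lambda^m U_1\|_{op} = O(m\, r/k^3) \quad \text{for all } m \geq 1.
\]
Since unitary conjugation preserves the spectrum, this says that multiplication by $\lambda^m$ maps $\mathrm{spec}(U_1)$ into an $O(m\, r/k^3)$-neighborhood of itself in $S^1$.

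For part (1), suppose $c \notin \Z$. The approximate invariance above induces on $\mathrm{spec}(U_1)$ an approximate permutation $\pi$ implementing multiplication by $\lambda$. Let $L$ be the length of any cycle of $\pi$; accumulating errors around such a cycle gives $|\lambda^L - 1| = O(L\, r/k^3) \leq O(r/k^2)$. On the other hand, $|\lambda^L - 1| = 2|\sin(\pi L/(k+c))|$ is bounded below by $\Omega(\mathrm{dist}(L/(k+c),\Z))$, and for $1 \leq L \leq \dim H < 2(k+c)$ the only integers competing for the minimum are $0$, $1$, and possibly $2$; the fact that $c\notin\Z$ forces this distance to be at least $\Omega(\mathrm{dist}(c,\Z)/k) = \Omega(1/k)$. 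For $k$ large this contradicts the upper bound $O(r/k^2)$, ruling out all configurations satisfying $(R1),(R2),(R3)$.

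For part (2), set $n := k + c \in \N$, so $\lambda^n = 1$. Every orbit of multiplication by $\lambda$ on $S^1$ now has cardinality exactly $n$, and the argument of part (1) becomes a structure statement: $\mathrm{spec}(U_1)$ is $O(r/k^2)$-close to a union of such orbits. Combined with $\dim H < 2n$, this forces a single orbit of simple eigenvalues, giving $\dim H = n$. Labelling the eigenvectors $v_0,\ldots,v_{n-1}$ of $U_1$ so that the corresponding eigenvalues approximate $\mu,\mu\lambda,\ldots,\mu\lambda^{n-1}$ for some $\mu \in S^1$, the relation $U_2 U_1 U_2^* \approx \lambda U_1$ then implies that $U_2 v_j$ is approximately a unit-modulus scalar multiple of $v_{j+1 \bmod n}$. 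After absorbing these scalars into a renormalization of the basis, $U_1$ and $U_2$ agree up to small errors with the standard clock and cyclic-shift matrices in this basis; these define a $*$-representation $\rho : \cA_{1/n} \to \End(H)$ satisfying \eqref{UVdef}.

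The main obstacle will be sharpening the final error from the $O(k^{-2})$ produced by the naive linear accumulation above to the stated $O(k^{-3/2})$ of \eqref{UVdef}. I expect this to require an averaged estimate exploiting the off-diagonal structure of $U_2 U_1 U_2^{*} - \lambda U_1$ in the eigenbasis of $U_1$ together with orthogonality between eigenvectors, in the same spirit as the sharp perturbation step in the proof of Theorem \ref{mainthm-1}. A secondary technical point, needed to make the ``approximate permutation'' $\pi$ rigorous when eigenvalues of $U_1$ cluster, is to group them into clusters of diameter $\ll 1/k$ and treat each cluster as a single spectral object carrying the appropriate multiplicity.
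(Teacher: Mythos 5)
Your overall architecture is sound and the route you take for the counting step is genuinely different from the paper's. After the same polar-decomposition reduction, the paper iterates the quasimode Lemma~\ref{quasimodo} with $x_2$ as a ladder operator, building an orthonormal orbit $e_0,e_1,\dots$ explicitly and then, in Step~2, showing by a careful orthogonality bookkeeping that an extra eigenvector would double every spectral cluster $V_m$ and hence force $\dim H\geq 2(k+c)$. You instead work globally: you extract an approximate permutation $\pi$ of the spectrum of $U_1$ from $U_2U_1U_2^*\approx\lambda U_1$, and derive a clean divisibility constraint on cycle lengths. This is more structural and arguably more transparent for both parts (1) and (2), at the cost of invoking a matching-with-multiplicity bound for the spectra of two close unitary matrices, which you do not justify. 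That bound (that two normal, or unitary, matrices with $\|A-B\|_{\op}\leq\epsilon$ admit an eigenvalue matching with all gaps $O(\epsilon)$ and a dimension-independent constant) is a genuine ingredient — it is not elementary, and for the argument to produce the $O(r/k^2)$ and $\Omega(1/k)$ comparison you need, the constant must be universal. You should either cite the relevant perturbation theorem or, more in the paper's spirit, replace the global matching by repeated applications of Lemma~\ref{quasimodo} to eigenvectors, which avoids the issue entirely. Related to this, the clustering concern you flag is real only because you aim for a global matching; if you go through quasimodes one eigenvalue at a time, the clusters are handled automatically by the second part of Lemma~\ref{quasimodo}, as the paper does.

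The other incomplete point is the one you flag yourself: the final bound $\|x_j-\rho(W_j)\|_{\op}\leq C/k^{3/2}$. Be careful with the direction of the estimate: after diagonalizing $x_1$, each column of $x_2-X_2$ has \emph{vector} norm $O(1/k^2)$ (that is the ``$O(k^{-2})$'' coming from the linear accumulation of eigenvalue errors), but passing from a column-wise $O(1/k^2)$ bound to the \emph{operator} norm costs a factor of $\sqrt{\dim H}\sim\sqrt{k}$, giving precisely $O(1/k^{3/2})$ — the operator-norm statement in \eqref{UVdef} is therefore \emph{weaker}, not sharper, than the column bound. This is exactly the Cauchy--Schwarz step the paper carries out in Step~3, and it is what you need to complete the proof; there is also a small normalization step you need (the ``holonomy phase'' $e^{i\theta}$ around the cycle must be spread uniformly over the $n$ columns before writing the exact clock-and-shift pair $X_1,X_2$), which the paper handles via the substitution $f_m:=e^{-i\theta m/k}e_m$. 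With the matching lemma supplied and this last estimate written out, your proof is correct.
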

The proof follows the same strategy as the proof of Theorem \ref{mainthm-1}, see
Section \ref{QT2sec} below.
\newline

\medskip
\noindent
{\sc Third case:} In Section \ref{sec-gen}, we consider
another notion of  an irreducible almost representation $t:\mathfrak{g} \to \mathfrak{su}(H)$
of a compact Lie algebra $\mathfrak{g}$. It involves the Casimir element in the
\emph{adjoint representation}. As explained in Remark \ref{rem-compar},
this definition is weaker than the previous notion of an almost representation in the
case of the Lie algebra $\mathfrak{su}(2)$, which is necessary for applications to geometric quantization of the two-sphere.

Take any orthonormal basis $e_1,\dots,e_n$ in $\mathfrak{g}$ with respect to the Killing form. We define, in the context of almost representations, a counterpart of the Casimir element in the adjoint representation,
called {\it almost-Casimir}, by
\begin{equation}
\label{aCdef}
\begin{split}
\Gamma:\mathfrak{su}(H) &\longrightarrow\mathfrak{su}(H)\\
\sigma &\longmapsto - \sum_{i=1}^n [[\sigma,t(e_i)],t(e_i)]\;.
\end{split}
\end{equation}
We define almost representations as linear maps
$t:\mathfrak{g} \to \mathfrak{su}(H)$
which satisfy approximate commutation relations,
and such that $\Gamma$ is invertible. Theorem \ref{thm-main-1}
below provides an upper bound for the distance between such a $t$
and a  genuine irreducible representation of $\mathfrak{g}$
in terms of the operator norm of the inverse of $\Gamma$.
Here we adapt a Newton-type method  as in \cite{K}.

\subsection{Preliminaries on quantization}

Geometric quantization is a mathematical recipe behind the quantum-classical correspondence, a fundamental physical principle stating that quantum mechanics contains classical mechanics in the limiting regime when the Planck constant $\hbar$ tends to zero. In Section \ref{quantsec}, we apply
Theorems \ref{mainthm-1} and \ref{mainthmT2} to show that all geometric quantizations of the two-dimensional sphere and the torus, satisfying the axioms of Definition \ref{quantdef} below,
are conjugate to each other up to an error of order $\bigo(\hbar)$.

When the classical phase space is represented by
a closed (i.e., compact without boundary) symplectic manifold
$(M,\omega)$, geometric quantization is a linear correspondence
$f \mapsto T_\hbar(f)$
between classical observables, i.e., real functions
$f\in C^\infty(M)$ on the phase space $M$, and quantum observables, i.e.,
Hermitian operators $T_\hbar(f)\in\cL(H_\hbar)$
on a complex Hilbert space $H_\hbar$. This
correspondence is assumed to respect, in the leading order as
the \emph{Planck constant} $\hbar$ tends to $0$,
a number of basic operations.

Write $\{\cdot,\cdot\}$ for the \emph{Poisson bracket}
of $(M,\om)$, defined on any $f,\,g\in C^\infty(M)$ by
$$\{f,g\}:=-\om(\text{sgrad} f,\text{sgrad} g)\,,$$ where
for any $f\in C^\infty(M)$,
the associated \emph{Hamiltonian vector field}
$\text{sgrad} f$ over $M$ is defined by $$\om(\cdot\,,\text{sgrad} f)=df\,.$$
For any complex valued function $f\in C^\infty(M,\C)$,
we write $\|f\|_\infty:=\max_{x\in M}\,|f|$ for its uniform norm.

\begin{defin}\label{quantdef}
{\rm
Given a sequence $\{H_k\}_{k\in\N}$ of finite-dimensional complex
Hilbert spaces, an associated \emph{geometric quantization}
is a collection of $\R$-linear maps
$\{T_k:C^\infty(M)\to\cL(H_k)\}_{k\in\N}$ with $T_k(1)=\id$ for
all $k\in\N$,
satisfying the following axioms
as $k\to+\infty$,
\begin{itemize}
\item[{(P1)}] \quad\quad\quad\quad\quad\quad\quad~
$\|T_k(f)\|_{\op} = \|f\|_\infty + \bigo(1/k)\;;$
\item[{(P2)}] \quad\quad\quad\quad\quad~ $[T_k(f),T_k(g)] =\frac{i}{k}
T_k\left(\{f,g\}\right)+\bigo(1/k^2)\;;$
\item[{(P3)}] \quad\quad $T_k(f)T_k(g) = T_k\left(fg+\frac{1}{k}
C_1(f,g)+\frac{1}{k^2}C_2(f,g)\right) + \bigo(1/k^3)\;.$
\end{itemize}
In axiom (P3),
we extend $\{T_k:C^\infty(S^2,\C) \to
\text{End}(H_k)\}_{k\in\N}$ by $\C$-linearity, and the maps
$C_1,\,C_2: C^\infty(S^2) \times C^\infty(S^2)
\to C^\infty(S^2,\C)$ are bi-differential operators. The
remainders are taken with respect to the operator norm,
uniformly in the $C^N$-norms of
$f,\,g\in C^\infty(S^2)$ for some $N\in\N$.}
\end{defin}

In Definition \ref{quantdef}, the integer $k\in\N$ represents
a \emph{quantum number}, and should be thought as inversely
proportional to the Planck constant. Then the limit  $k\to+\infty$ describes the so-called \emph{semi-classical
limit}, when the scale gets so large that we recover the laws
of classical mechanics from those of quantum mechanics. In particular, the
axiom (P2) is the celebrated \emph{Dirac condition},
relating the Poisson bracket on classical observables
to the commutator bracket on quantum observables.

\noindent\begin{exam}\label{exam-BT-desc}{\rm
In the case $M$ admits a complex structure $J\in\End(TM)$
compatible with $\om$ and the De Rham cohomology class
$[\om]/2\pi$ is integral,
the existence of geometric quantizations
was established by
Bordemann, Meinrenken and Schlichenmaier \cite{BMS}, using the
theory of Boutet de Monvel and Guillemin \cite{BdMG81}.
Their construction is called
\emph{Berezin-Toeplitz quantization}, and goes as follows.
Let $L$ be a holomorphic Hermitian line bundle with
Chern curvature equalt to $-i\om$, and for any $k\in\N$, write
$L^k$ for the $k$-th tensor power of $L$.
We define the
Hilbert space $H_k$ as the subspace of all global holomorphic
sections of $L^k$ inside the Hilbert space $L_2(M,L^k)$
of $L_2$-sections of $L^k$.
With this language, the \emph{Toeplitz operators}
$T_k(f)\in\cL(H_k)$
act by multiplication by $f\in C^\infty(M)$
composed with the orthogonal projection to $H_k$ inside
$L_2(M,L^k)$.
In the case of the sphere $S^2$ and the torus $\T^2$,
by an appropriate shift of the parameter $k\in\N$,
this construction actually produces a discrete family of
geometric quantizations  depending on
$m\in\N$ and satisfying $\dim H_k=k+m$.

While the construction given above is rather straightforward,
verification of the axioms of Definition \ref{quantdef}
is highly non-trivial.
For comprehensive introductions to Berezin-Toeplitz quantization,
see for instance \cite{LF,MM,Sc}.

The Berezin-Toeplitz quantizations associated to
two distinct complex structures are essentially different, so that even
for the simplest symplectic manifolds $(M,\om)$,
this construction produces a large variety of examples.
As shown by Ma and Marinescu \cite{MM12}, Xu \cite{Xu}
and Charles \cite{Csub},
for such quantizations, the bi-differential operator $C_1(f,g)$
is proportional to the Hermitian product of the Hamiltonian
vector fields of $f$ and $g$, while the bi-differential $C_2(f,g)$
involves the Ricci curvature.}
\end{exam}

A different, albeit related,
mathematical model of quantization is
{\it deformation quantization} \cite{BFFLS}, which is an
$\hbar$-linear associative algebra on the space
$C^\infty(M)[[\hbar]]$ such that for all
$f,\,g \in C^\infty(M)$,
\begin{equation}\label{DQintro}
f*g= fg+\hbar\,C_1(f,g) + \hbar^2\,C_2(f,g) + \cdots\;,
\end{equation}
with $C_1(f,g) - C_1(g,f) =i\{f,g\}$.
Here the Planck constant $\hbar$ plays the role of a formal deformation parameter, and the operation
\eqref{DQintro} is called a {\it star-product}.
In Section \ref{trsec}, we consider geometric quantizations
satisfying an extension of axiom (P3), given in Definition
\ref{totexp}, to an asymptotic
expansion up to $\bigo(1/k^m)$ for any $m\in\N$.
This defines a star product via the formal
relation
$T_\hbar(f)T_\hbar(g)=
T_\hbar(f*g)$ with $\hbar=1/k$.
In particular, the Berezin-Toeplitz quantizations
described above satisfy this extension of axiom (P3), and
thus induce a star-product \cite{BMS, Sch, Gu} over $(M,\om)$.
While deformation quantizations of closed symplectic manifolds are
completely classified up to \emph{star equivalence}
given by formula \eqref{stareq} below, the
classification of geometric quantizations up to conjugation and an error
of order $\bigo(1/k^m)$ with given $m\in\N$
is not yet
understood. The study of this classification is the main
motivation of this paper.

\subsection{Applications to quantization} The second main result of this paper is as follows.

\begin{thm}\label{mainthquant-1}
Asssume that $(M,\om)=S^2$ or $\T^2$ endowed with
the standard area form of volume $2\pi$.
Let $\{T_k: C^\infty(M) \to \cL(H_k)\}_{k\in\N}$
be a geometric quantization, and assume that
\begin{equation}\label{eq-dim-1}
\limsup_{k\to+\infty}\dim H_k/k<2\;.
\end{equation}
Then there exists an integer $m\in\Z$ such that
for all $k\in\N$ big enough, we have
\begin{equation}\label{dimquant}
\dim H_k=k+m\,.
\end{equation}
Furthermore, if $\{Q_k: C^\infty(M) \to \cL(H_k)\}_{k\in\N}$
is another geometric
quantization associated to the same sequence
of Hilbert spaces,
there exists a sequence of unitary operators $\{U_k:H_k\to H_k\}_{k\in\N}$
such that for any $f\in C^\infty(M)$, there exists $C>0$ such that
for any $k\in\N$, we have
\begin{equation}\label{quanteq}
\|U_k^{-1}Q_k(f)U_k-T_k(f)\|_{op}\leq C/k\,.
\end{equation}
\end{thm}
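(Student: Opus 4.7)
The plan is to extract, from each geometric quantization $\{T_k\}$, a sequence of almost representations of the algebra naturally associated to the phase space --- the Lie algebra $\mathfrak{su}(2)$ when $M=S^2$, and the quantum torus $\cA_{\theta}$ with $\theta=1/(k+m)$ when $M=\T^2$ --- and then to invoke the Ulam-stability results of the paper (Theorem~\ref{mainthm-1} or Theorem~\ref{thm-main-1} for $S^2$, Theorem~\ref{mainthmT2} for $\T^2$) to replace these almost representations by genuine irreducible ones on each $H_k$. Uniqueness up to unitary conjugation of the latter will then furnish the unitaries $U_k$ intertwining $T_k$ and $Q_k$ on a finite set of generating observables, and axiom (P3) together with a density argument will propagate this intertwining to all of $C^\infty(M)$.

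For $M=S^2\subset\R^3$, I would use the coordinate functions $f_1,f_2,f_3$, which satisfy $\sum_j f_j^2=1$ and, for $\omega$ of total area $2\pi$, the Poisson relations $\{f_j,f_{j+1}\}=f_{j+2}$. Setting $x_j^{(T)}:=-ikT_k(f_j)\in\mathfrak{su}(H_k)$, axioms (P2) and (P3) yield $[x_j^{(T)},x_{j+1}^{(T)}]=x_{j+2}^{(T)}+\bigo(1)$ and $\sum_j(x_j^{(T)})^2=-k^2\id/4+\bigo(k)$; these errors are not sharp enough for Theorem~\ref{mainthm-1}, so I would instead apply Theorem~\ref{thm-main-1} after verifying that the almost-Casimir $\Gamma$ constructed from the $x_j^{(T)}$ is invertible with $\|\Gamma^{-1}\|_{op}=\bigo(1/k^2)$. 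This spectral bound follows by comparing $\Gamma$ with the genuine Casimir of the $k$-dimensional irreducible $\mathfrak{su}(2)$-representation, whose spectrum on $\mathfrak{su}(H_k)$ is bounded below by $\Theta(k^2)$. For $M=\T^2$, take $u=e^{2\pi ix}$, $v=e^{2\pi iy}$ and correct $T_k(u),T_k(v)$ by subleading terms of the form $T_k(u+k^{-1}h_1+k^{-2}h_2)$, with $h_1,h_2$ dictated by the bidifferential operators $C_1,C_2$ of (P3), chosen so that $\|x_jx_j^*-\id\|_{op}=\bigo(1/k^3)$ and $\|x_1x_2-e^{2\pi i/(k+m)}x_2x_1\|_{op}=\bigo(1/k^3)$, matching the precision required by Theorem~\ref{mainthmT2}.

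Applying this construction to both $T_k$ and $Q_k$ produces, for each large $k$, irreducible representations $\rho^T_k,\rho^Q_k$ of the same algebra on $H_k$; their common dimension $k+m$ is the integer $c$ delivered by Theorem~\ref{mainthm-1}(2) or~\ref{mainthmT2}(2), and is independent of $k$ because it is determined by the fixed subleading data of the quantizations, not by $k$. Uniqueness up to unitary conjugation of $n$-dimensional irreducible $\mathfrak{su}(2)$-representations (respectively of irreducible $*$-representations of $\cA_{1/n}$, by the Stone--von~Neumann theorem for rational-$\theta$ quantum tori, after matching the central characters using the $\bigo(1/k^{3/2})$ freedom in Theorem~\ref{mainthmT2}) then yields a unitary $U_k:H_k\to H_k$ with $U_k^{-1}\rho^Q_k(\cdot)U_k=\rho^T_k(\cdot)$, delivering \eqref{quanteq} on the generators with an $\bigo(1/k)$ error. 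To extend to arbitrary $f\in C^\infty(M)$, I would use that polynomials in the coordinate functions (respectively trigonometric polynomials on $\T^2$) are $C^N$-dense in $C^\infty(M)$ and that by (P3) a Toeplitz operator of such a polynomial equals the corresponding polynomial in Toeplitz operators of the generators modulo $\bigo(1/k)$; a telescoping argument transfers the estimate to all $f\in C^\infty(M)$ with $C$ depending on $f$. The main technical obstacles are the construction of the subleading corrections $h_1,h_2$ in the $\T^2$ case --- reaching $\bigo(1/k^3)$ precision from the raw bounds of (P3) is delicate, particularly to ensure compatibility with the $*$-structure --- and the spectral lower bound on the almost-Casimir $\Gamma$ in the $S^2$ case; both ultimately reduce to careful asymptotic expansions of the products $T_k(f)T_k(g)$ provided by axiom (P3).
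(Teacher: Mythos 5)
Your overall plan---generate almost representations of $\mathfrak{su}(2)$ (resp.\ of the quantum torus) from the quantized coordinate functions, invoke the Ulam-stability theorems to replace them by genuine irreducible representations, use uniqueness of irreducibles to produce $U_k$ on the generators, and propagate the estimate by an algebraic/density argument---is exactly the architecture of the paper's proof. However, there are two serious gaps in the spherical case.

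First, you omit the preliminary change-of-variables reduction (Lemma \ref{finequant}), which is what makes the almost-representation error small enough. With the raw normalization $x_j=-ikT_k(u_j)$, the defect in the bracket relation is only $\bigo(1)$, as you observe, because $C_2^-$ is an unknown bi-differential operator. The paper first replaces $T_k$ by $T_k^D(f)=T_k(f+\frac{1}{k}Df)$ so that $C_1^+\equiv 0$ and $C_2^-=-\frac{i}{2}c\{\cdot,\cdot\}$ for a constant $c$. After this, the renormalized operators $x_j=\frac{ik}{2}\frac{k}{k-c}T_k(u_j)$ satisfy $(\textup{R}2)$ with defect $\bigo(1/k)$ and $(\textup{R}1)$ with the correct centering $-\big(\frac{k^2}{4}+\frac{kc}{2}\big)\id$, which is precisely what Theorem \ref{mainthm-1} requires. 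Without this step, Theorem \ref{mainthm-1} is not applicable, and you are correct to notice that---but the fix is the change of variables, not a detour through Theorem \ref{thm-main-1}.

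Second, the detour through Theorem \ref{thm-main-1} does not work. Your claimed spectral bound $\|\Gamma^{-1}\|=\bigo(1/k^{2})$ is incorrect: for a genuine $k$-dimensional irreducible representation, the adjoint Casimir on $\End(H_k)\cong V_1\oplus V_3\oplus\cdots\oplus V_{2k-1}$ has eigenvalue $\frac{m^2-1}{4}$ on the component $V_m$, so the smallest nonzero eigenvalue is $2$ (on $V_3$), giving $\mu\sim 1$, not $\mu\sim k^{-2}$. With $\mu\sim 1$ and $K\sim k$, the hypothesis $\epsilon\leq\gamma\min(\mu^{-2}K^{-2},\mu^{-1},1)$ of Theorem \ref{thm-main-1} forces $\epsilon\lesssim k^{-2}$, which the geometric-quantization data cannot deliver (even after the change of variables one only gets $\epsilon=\bigo(1/k)$). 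This is exactly the obstruction flagged by the authors in Remark \ref{rem-compar}. Moreover, your argument for bounding $\Gamma^{-1}$ by comparing with the genuine Casimir is circular, since that comparison presupposes the $x_j$ are close to a genuine representation, which is the conclusion you are after. In short, for $S^2$ you must use Theorem \ref{mainthm-1} after the change-of-variables reduction, as the paper does.

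For the torus your sketch of correcting by subleading terms so as to match the Moyal--Weyl product at order $k^{-2}$ and then invoking Theorem \ref{mainthmT2} is the right idea and matches the paper's argument (the reduction is via Lemma \ref{finequant} followed by the second-order change of variable \eqref{TD2ndorder}). Finally, the propagation step to general $f\in C^\infty(M)$ is glossed over: the polynomial errors grow like $n^M$ in the degree $n$, so one must (as the paper does) exploit the rapid decay of Fourier/spherical-harmonic coefficients of a smooth $f$, and one must reduce by transitivity to the case where one of the two quantizations is a Berezin--Toeplitz quantization in order to use the norm bound $\|T_k(f)\|_{\op}\leq\|f\|_\infty$ uniformly in the degree. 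These are not mere bookkeeping; without them the telescoping argument is unjustified.
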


From a physical point of view,
property \eqref{quanteq} is the statement
that two different quantizations
of the same classical phase space reproduce the same physics
at the semi-classical limit, when $k\to+\infty$.
Two geometric quantizations
satisfying \eqref{quanteq} are called
{\it semi-classically equivalent}.
Note on the other hand that for any $m\in\N$,
a geometric quantization
satisfying \eqref{dimquant} can be realized
through the construction of Example \ref{exam-BT-desc}.
We thus get the following corollary.

\medskip\noindent
\begin{cor}\label{cor-BT}
Under the dimension assumption \eqref{eq-dim-1},
every geometric quantization of
the sphere or the torus
is semi-classically
equivalent to a Berezin-Toeplitz quantization
of Example \ref{exam-BT-desc}.
\end{cor}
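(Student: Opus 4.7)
The corollary is essentially a one-step deduction from Theorem \ref{mainthquant-1} combined with the explicit construction of Example \ref{exam-BT-desc}. The plan is to produce, for the sequence $\{H_k\}$ carrying the given quantization $\{T_k\}$, a second geometric quantization on the same spaces that arises from Berezin-Toeplitz, and then let part 2 of Theorem \ref{mainthquant-1} supply the conjugating unitaries.

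First, I would invoke part 1 of Theorem \ref{mainthquant-1}, whose hypotheses are exactly the dimension bound \eqref{eq-dim-1} together with the axioms (P1)--(P3) of Definition \ref{quantdef}. This yields an integer $m\in\Z$ and a threshold $k_0\in\N$ such that $\dim H_k = k+m$ for all $k\geq k_0$, giving \eqref{dimquant}.

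Second, I would exhibit a Berezin-Toeplitz quantization on a sequence of spaces of matching dimension. When $m\geq 0$, Example \ref{exam-BT-desc} directly supplies a family $\{T_k^{BT}: C^\infty(M)\to\cL(H_k^{BT})\}_{k\in\N}$ with $\dim H_k^{BT}=k+m$. When $m<0$, one picks any $m'\in\N$ with $m'\geq -m$, takes the corresponding Berezin-Toeplitz quantization $\{\widetilde T_k\}$ on spaces $\{\widetilde H_k\}$ with $\dim\widetilde H_k=k+m'$, and reindexes by setting $\hat H_k:=\widetilde H_{k+m-m'}$ and $\hat T_k(f):=\widetilde T_{k+m-m'}(f)$ for $k\geq m'-m$. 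The only verification needed is that axioms (P1)--(P3) persist under this shift of index, which follows because $1/(k+m-m')=1/k+\bigo(1/k^2)$ and the operators $\widetilde T_{k+m-m'}(f)$ are uniformly bounded by (P1); in particular the leading-order Dirac relation in (P2) is unchanged, and the remainders stay $\bigo(1/k^m)$.

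Third, for each $k\geq k_0$ large enough, the Hilbert spaces $H_k$ and (the reindexed) $H_k^{BT}$ have the same finite dimension $k+m$, so I fix an arbitrary unitary isomorphism and transport the Berezin-Toeplitz operators along it, obtaining a second geometric quantization $\{Q_k:C^\infty(M)\to\cL(H_k)\}_{k\in\N}$ on the given sequence $\{H_k\}$. Applying part 2 of Theorem \ref{mainthquant-1} to the pair $(\{T_k\},\{Q_k\})$ produces the sequence of unitaries $\{U_k\}$ realizing \eqref{quanteq}, which is precisely the semi-classical equivalence asserted by the corollary. There is no real obstacle here beyond the bookkeeping of the reindexing step; the substantive content is contained in Theorem \ref{mainthquant-1}, and hence ultimately in Theorems \ref{mainthm-1} and \ref{mainthmT2}.
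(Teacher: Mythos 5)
Your argument is correct and coincides with the paper's intended deduction: part 1 of Theorem \ref{mainthquant-1} pins down $\dim H_k = k+m$, Example \ref{exam-BT-desc} supplies a Berezin-Toeplitz quantization on spaces of matching dimension, and part 2 of Theorem \ref{mainthquant-1} produces the conjugating unitaries. The one place you go slightly beyond the paper's terse remark (which says ``for any $m\in\N$'') is your explicit handling of the case $m<0$ via reindexing -- a worthwhile clarification since Theorem \ref{mainthquant-1} only guarantees $m\in\Z$, and your observation that $1/(k+m-m')=1/k+\bigo(1/k^2)$ keeps the axioms (P1)--(P3) intact under the shift is exactly the check needed.
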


The semi-classical equivalence of the Berezin-Toeplitz quantization
and the Kostant-Souriau quantization was first shown by Schlichenmaier in \cite{Sch}. For Berezin-Toeplitz quantizations
associated with a different compatible complex structure,
Corollary \ref{cor-BT} follows from the
work of Charles \cite{Cha07}.
These results were established for more general symplectic
manifolds than the sphere and the torus, which
leads to the following
conjecture.

\begin{conjecture}
\label{question-other2} Two geometric
quantizations of a closed symplectic manifold $(M,\om)$
associated with sequences of Hilbert spaces of the same
dimension are semi-classically equivalent.
\end{conjecture}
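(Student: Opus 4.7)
The plan is to apply the stability results Theorems \ref{mainthm-1} and \ref{mainthmT2} to operators built from the quantization $T_k$ on a natural system of algebra generators of $C^\infty(M)$. For $(M,\om)=S^2\subset\R^3$ with area form of total mass $2\pi$, the Cartesian coordinate functions $x_1,x_2,x_3$ satisfy $\sum_j x_j^2=1$ and $\{x_j,x_{j+1}\}=2x_{j+2}$ with indices cyclic in $\Z/3\Z$, so the skew-Hermitian operators $-\frac{ik}{2}T_k(x_j)$ should approximate an irreducible $\mathfrak{su}(2)$-representation. For $(M,\om)=\T^2$ of area $2\pi$, the complex exponentials $f_1=e^{2\pi iy_1},\,f_2=e^{2\pi iy_2}$ satisfy $|f_j|=1$ and $\{f_1,f_2\}=2\pi f_1f_2$, and the $T_k(f_j)$ should approximate unitary generators of an irreducible $*$-representation of $\cA_\theta$ at $\theta=1/(k+c)$.

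I will describe the $S^2$ case in detail; the torus follows the same blueprint. Given $T_k$, set $X_j:=-\frac{ik}{2}T_k(x_j)\in\mathfrak{su}(H_k)$. Axiom (P3) yields, writing $C_j^-:=C_j-C_j^{\op}$,
\[
\textstyle\sum_{j=1}^3 X_j^2+\bigl(\tfrac{k^2}{4}+\tfrac{kc_T}{2}\bigr)\id=O(1)\quad\text{and}\quad
[X_j,X_{j+1}]-X_{j+2}=-\tfrac{1}{4}T_k\!\bigl(C_2^-(x_j,x_{j+1})\bigr)+O(1/k),
\]
with $c_T\in\R$ determined by the asymptotic behaviour of $\sum_j T_k(C_1(x_j,x_j))$. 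The second estimate is only $O(1)$, short of the $O(1/k)$ demanded by (R2). I would therefore pass to corrected operators $X_j':=X_j+iT_k(a_j)$, choosing real functions $a_j\in C^\infty(S^2)$ so as to solve the linear cohomological equation
\[
a_{j+2}-\tfrac{1}{2}\bigl(\{x_j,a_{j+1}\}+\{a_j,x_{j+1}\}\bigr)=h_j,
\]
with right-hand side an appropriate real combination of $C_2^-(x_j,x_{j+1})$. Solvability follows from the $SO(3)$-equivariant decomposition of $C^\infty(S^2)$ into spherical harmonics and from the fact that $\{x_j,\cdot\}$ is the infinitesimal rotation around the $j$-axis. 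A parallel adjustment of the Casimir-like identity (absorbing the $O(k)$ correction $-\tfrac{k}{4}T_k(\sum_j C_1(x_j,x_j))$ into a choice of $c_T$) shows that the modified $X_j'$ satisfy (R1) and (R2) of Theorem \ref{mainthm-1} with some fixed $r>0$ and $c=c_T$, while (R3) is immediate from \eqref{eq-dim-1} for large $k$.

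Part (1) of Theorem \ref{mainthm-1} forces $c_T\in\Z$ for large $k$, and part (2) yields $\dim H_k=k+c_T$ together with an irreducible representation $\rho^T_k:\mathfrak{su}(2)\to\mathfrak{su}(H_k)$ satisfying $\|X_j'-\rho^T_k(L_j)\|_{op}\leq C$. Since $\dim H_k$ is a fixed integer for each $k$ and the sequence $(c_T)$ takes only integer values, the bound $\|X_j'\|_{op}=k/2+O(1)$ from \eqref{mainfla0} and \eqref{eq-dim-1} together force $c_T$ to stabilise to a fixed integer $m$, proving \eqref{dimquant}. Applying the same construction to a second quantization $Q_k$ on $H_k$ gives $\rho^Q_k$, also of dimension $k+m$; two irreducible representations of $\mathfrak{su}(2)$ of the same dimension are unitarily equivalent, so there exists a unitary $U_k:H_k\to H_k$ with $\rho^Q_k=U_k\rho^T_k U_k^{-1}$. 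Combining the two distance estimates and unwinding the scaling $X_j'=-\frac{ik}{2}T_k(x_j)+O(1)$ yields $\|Q_k(x_j)-U_k T_k(x_j)U_k^{-1}\|_{op}\leq C'/k$ for $j=1,2,3$. To propagate to arbitrary $f\in C^\infty(S^2)$, I use that polynomials in $x_1,x_2,x_3$ are $C^N$-dense in $C^\infty(S^2)$ by Stone-Weierstrass, and that (P3) gives $T_k(P(x))=P(T_k(x_1),T_k(x_2),T_k(x_3))+O_P(1/k)$ in operator norm for each polynomial $P$, with the analogous identity for $Q_k$. The $O(1/k)$-conjugation on generators then extends to polynomials and, by density together with (P1), to all smooth $f$, proving \eqref{quanteq}.

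The $\T^2$-case runs in parallel, applied to corrected operators $X_j:=T_k(f_j)+\tfrac{1}{k}T_k(a_j^{(1)})+\tfrac{1}{k^2}T_k(a_j^{(2)})$; the corrections $a_j^{(1)},a_j^{(2)}\in C^\infty(\T^2,\C)$ are chosen to achieve the $O(1/k^3)$-precision required by (R1)-(R2) of Theorem \ref{mainthmT2}, and the corresponding cohomological equations are solved by Fourier series in the translation coordinates. The irreducible $*$-representations of $\cA_{1/(k+m)}$ are unique up to unitary equivalence once the leading form of $X_j$ is fixed, yielding the unitary $U_k$; density of trigonometric polynomials extends \eqref{quanteq} from $f_1,f_2$ to all of $C^\infty(\T^2)$. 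The main obstacle I foresee is the construction of the corrections $a_j$ (and $a_j^{(1)},a_j^{(2)}$): this is where the abstract quantization data must be matched to the rigid structure demanded by the stability theorems, amounting to a star-equivalence argument on $(M,\om)$. Once these are handled, the application of Theorems \ref{mainthm-1} and \ref{mainthmT2} together with the bootstrap from generators via axioms (P1)-(P3) is essentially routine.
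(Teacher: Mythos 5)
The statement you are trying to prove is not a theorem of the paper: it is Conjecture~\ref{question-other2}, and the paper explicitly leaves it open. What the paper does prove is Theorem~\ref{mainthquant-1}, which restricts to $M=S^2$ or $\T^2$ \emph{and} assumes the dimension bound \eqref{eq-dim-1}; and Remark~\ref{rem-compar} and the discussion around Conjecture~\ref{conj-other} make plain that the general statement would require, at minimum, a version of the Ulam-stability theorem for all compact Lie algebras (Conjecture~\ref{conj-other}), which the authors say is open and which needs new ideas.

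Your proposal in fact only argues the $S^2$ and $\T^2$ cases, and you explicitly invoke \eqref{eq-dim-1} to obtain (R3). That is, you have written a sketch of Theorem~\ref{mainthquant-1}, not of Conjecture~\ref{question-other2}. There are two concrete gaps between what you do and what the conjecture asks. First, the hypothesis of the conjecture is merely ``same sequence of dimensions,'' not $\limsup_k\dim H_k/k<2$; without (R3), Theorems~\ref{mainthm-1} and \ref{mainthmT2} do not apply (indeed the paper notes that the direct sum of two irreducible $k$-dimensional $\mathfrak{su}(2)$-representations satisfies (R1) and (R2) but violates \eqref{mainfla}, so (R3) is sharp). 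Second, the whole mechanism --- choosing a finite collection of basic observables whose Poisson relations exactly match those of a finite-dimensional Lie algebra or a quantum torus, applying an Ulam-stability theorem to the resulting almost-representation, and then bootstrapping along a generated dense subalgebra --- depends on $M$ carrying such a structure. On $S^2$ the Cartesian coordinates give $\mathfrak{su}(2)$, on $\T^2$ the exponentials give the quantum torus; for a general closed symplectic manifold there is no analogue, so the argument has no entry point. The paper's text below the conjecture says as much: even the coadjoint-orbit case is contingent on the open Conjecture~\ref{conj-other}.

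Within the special cases, your mechanics differ from the paper's but amount to the same normalization. You propose to correct the operators by $X_j' = X_j + iT_k(a_j)$ and to solve a cohomological equation for the $a_j$ by $SO(3)$-equivariance (resp. Fourier series). The paper instead normalizes the \emph{quantization} by a change of variable $T_k^D(f)=T_k(f+\tfrac1k Df)$ (Lemma~\ref{finequant}), using the Hochschild-cohomology statement that $C_1^+$ is a coboundary and that $C_2^-$ can be reduced to $-\tfrac{i}{2}c\{\cdot,\cdot\}$ because $H^2$ of a surface is one-dimensional; after this reduction the operators $x_j=\tfrac{ik}{2}\tfrac{k}{k-c}T_k(u_j)$ satisfy (R1)--(R2) directly, with no per-generator correction needed. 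Your route would also need to verify solvability of the $a_j$-equations and that the $X_j'$ remain skew-Hermitian, which you do not do; the paper's route is cleaner because the normalization is built once at the level of bidifferential operators and applies to all $f$. The bootstrap step is also handled more carefully in the paper: Stone-Weierstrass alone does not give an $O(1/k)$ bound uniform along the density approximation, and the paper instead derives a quantitative estimate $\|T_k(f_n)-Q_k(f_n)\|_{op}\le \alpha n^M/k$ for spherical harmonics by a careful induction using \eqref{Weyllaw}, then uses $SO(3)$-averaging and rapid Fourier decay; your ``density together with (P1)'' step is where uniformity would be lost.

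In short: as a proof of Conjecture~\ref{question-other2} the proposal is not viable --- it neither treats general $M$ nor removes the dimension hypothesis, and the paper records these as genuinely open problems. As a sketch of Theorem~\ref{mainthquant-1} it is broadly aligned with the paper's strategy, but the correction-by-$a_j$ step and the density bootstrap are incomplete in ways that Lemma~\ref{finequant} and the harmonic-analysis estimates in Section~\ref{proofquantsec} address.
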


In particular, it is not completely clear
to what extent the dimension assumption \eqref{eq-dim-1}
can be relaxed.
An affirmative answer to Conjecture \ref{conj-other} should yield
an affirmative answer to Conjecture \ref{question-other2}
in the case of
coadjoint orbits of general compact Lie groups,
at least with the appropriate assumption on the dimension.

\begin{rem}{\rm The dimension assumption
\eqref{eq-dim-1} of Theorem \ref{mainthquant-1}
is natural from a physical point of view. In fact,
equation \eqref{eq-dim-2} follows from an additional
{\it trace axiom} for geometric quantizations, which
is satisfied for Berezin-Toeplitz quantizations and which
we discuss in Section \ref{trsec}.
For geometric quantizations of closed symplectic manifolds
$(M,\omega)$ with $\dim M=2d$, this trace axiom implies
the following estimate as $k\to+\infty$,
\begin{equation} \label{eq-dim}
\dim H_k = \left(\frac{k}{2\pi}\right)^d
\text{Vol}(M,\omega) + \bigo(k^{d-1})\;.
\end{equation}
This reflects the physical principle that $\dim H_k$
approximately equals the maximal number of pair-wise disjoint
quantum cells, i.e. cubes of volume $(2\pi\hbar)^d$,
inside the classical phase space $(M,\omega)$.
When $\dim M=2$, formula \eqref{eq-dim} reads
\begin{equation} \label{eq-dim-2}
\dim H_k = k + \bigo(1)\;,
\end{equation}
so that the dimension assumption \eqref{eq-dim-1}
holds in particular for geometric quantizations
of $M=S^2$ or $\T^2$ satisfying
the trace axiom.
}
\end{rem}

In Theorem \ref{trconj}, we
consider geometric quantization satisfying a \emph{trace
axiom}, given in Definition \ref{trdef}, and we
express the asymptotics of the trace in terms
of the bi-differential operator $C_2$ of axiom (P3).
In Corollary \ref{thm-trace}, we show that this
implies the equality of the usual trace with
the \emph{canonical trace}
of the induced \emph{star product}
up to $\bigo(1/k)$, defined by formula \eqref{cantrdef} below.
Finally, in Theorem \ref{3/2eqth}, we show how
Theorem \ref{mainthmT2}
can be applied to get an extension of Theorem
\ref{mainthquant-1} for quantizations of the torus $\T^2$
which are invariant by translation,
making a first step towards
the classification of geometric quantizations up to
order $\bigo(1/k^m)$ with $m>1$.

\section{Proofs for $\mathfrak{su}(2)$ and quantum torus}
\label{sec-arpr}

Let $H$ be a Hilbert space
of complex dimension $\dim H=n\in\N$, and recall
that an operator $A\in\End(H)$ is called \emph{normal}
if it can be diagonalized in an orthonormal basis.
The following Lemma on the existence of quasimodes will
be a basic tool in this Section.

\begin{lemma}\label{quasimodo}
Let $A\in\End(H)$ be normal, and assume that
$v,\,w\in H$, $v\neq 0$, and
$\alpha\in\C$ satisfy
\begin{equation}\label{quasimododef}
Av=\alpha v+w\,.
\end{equation}
Then there exists $\lambda\in\textup{Spec}(A)$ satisfying
\begin{equation}\label{quasimodoev}
|\lambda-\alpha|\leq\frac{\|w\|}{\|v\|}\,.
\end{equation}
Furthermore, for any $\delta>0$, let
$V_\delta\subset H$ be the direct sum to the eigenspaces of
eigenvalues $\eta\in\textup{Spec}(A)$ satisfying
$|\eta-\alpha|<\delta$.
Then
there exists $e\in V_\delta$ with $\|e\|=1$
such that
\begin{equation}\label{quasimodofla}
\Big\|\,v-\,\|v\|e\,\Big\|\leq 2 \frac{\|w\|}{\delta}\,.
\end{equation}
\end{lemma}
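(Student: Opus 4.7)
The strategy is to diagonalize $A$ and reduce both assertions to an elementary inequality on the coefficients of $v$ in an orthonormal eigenbasis. Since $A$ is normal, I fix an orthonormal basis $\{e_i\}_{i=1}^n$ of $H$ consisting of eigenvectors, with $Ae_i=\lambda_i e_i$ and $\lambda_i\in\text{Spec}(A)$. Expanding $v=\sum_i c_i e_i$, the defining relation \eqref{quasimododef} rewritten as $w=Av-\alpha v$ gives
\begin{equation*}
w=\sum_i c_i(\lambda_i-\alpha)e_i,\qquad \|w\|^2=\sum_i|c_i|^2|\lambda_i-\alpha|^2.
\end{equation*}
This identity is the engine for both claims.

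For assertion \eqref{quasimodoev}, I set $\mu:=\min\{\,|\lambda_i-\alpha|:c_i\neq 0\,\}$; the minimum exists since $H$ is finite-dimensional and $v\neq 0$ guarantees the set is nonempty. Then the displayed identity yields $\|w\|^2\geq\mu^2\sum_i|c_i|^2=\mu^2\|v\|^2$, so the eigenvalue $\lambda\in\text{Spec}(A)$ achieving the minimum satisfies $|\lambda-\alpha|=\mu\leq\|w\|/\|v\|$.

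For assertion \eqref{quasimodofla}, I partition the eigenbasis according to whether $|\lambda_i-\alpha|<\delta$ or $\geq\delta$, and split $v=v_\delta+v_\delta^\perp$ accordingly, with $v_\delta\in V_\delta$ and $v_\delta^\perp$ in its orthogonal complement. The identity above restricted to the second piece gives
\begin{equation*}
\|w\|^2\geq\sum_{|\lambda_i-\alpha|\geq\delta}|c_i|^2|\lambda_i-\alpha|^2\geq\delta^2\|v_\delta^\perp\|^2,
\end{equation*}
so $\|v_\delta^\perp\|\leq\|w\|/\delta$. When $v_\delta\neq 0$, I take $e:=v_\delta/\|v_\delta\|\in V_\delta$; the decomposition yields $v-\|v\|e=(\|v_\delta\|-\|v\|)e+v_\delta^\perp$, whose two summands are orthogonal, so by Pythagoras and the reverse triangle inequality $\big|\|v\|-\|v_\delta\|\big|\leq\|v-v_\delta\|=\|v_\delta^\perp\|$, I obtain
\begin{equation*}
\|v-\|v\|e\|^2=(\|v\|-\|v_\delta\|)^2+\|v_\delta^\perp\|^2\leq 2\|v_\delta^\perp\|^2\leq 2\|w\|^2/\delta^2,
\end{equation*}
which is stronger than the required bound.

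The only genuine subtlety is the case $v_\delta=0$: then there is no canonical choice of $e$ from $v$, but in this case $\|v\|=\|v_\delta^\perp\|\leq\|w\|/\delta$, and assertion \eqref{quasimodoev} ensures that $V_\delta$ is nonzero provided $\delta$ exceeds the threshold $\|w\|/\|v\|$ supplied by the first claim, so any unit vector $e\in V_\delta$ works by the triangle inequality $\|v-\|v\|e\|\leq 2\|v\|\leq 2\|w\|/\delta$. No step presents a real obstacle; the proof is a bookkeeping exercise once normality has been invoked to secure the orthonormal eigenbasis.
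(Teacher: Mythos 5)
Your proof is correct and follows essentially the same route as the paper's: diagonalize $A$ in an orthonormal eigenbasis, derive the key identity $\|w\|^2=\sum_i|c_i|^2|\lambda_i-\alpha|^2$, and project $v$ onto $V_\delta$. Your use of Pythagoras on the orthogonal decomposition $v-\|v\|e=(\|v_\delta\|-\|v\|)e+v_\delta^\perp$ gives the slightly sharper constant $\sqrt{2}$ in place of the paper's triangle-inequality bound $2$, and your explicit treatment of the degenerate case $v_\delta=0$ is a point the paper leaves implicit.
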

\begin{proof}
Let $\{e_j\}_{j=1}^n$ be an orthonormal basis of $H$
diagonalizing $A$,
with complex eigenvalues $\{\lambda_j\}_{j=1}^n$.
Consider $v,\,w\in H$ and $\alpha\in\C$ satisfying formula
\eqref{quasimododef}. Then we have
\begin{equation}
\begin{split}
\|w\|=\|(A-\alpha)v\|\geq\min_{1\leq j\leq n}|\lambda_j-\alpha|
\,\|v\|\,,
\end{split}
\end{equation}
which implies that there exists $1\leq m\leq n$ such that
$\lambda_m$ satisfies formula \eqref{quasimodoev}.
%

Fix now $\delta>0$.
Then formula \eqref{quasimododef} implies
\begin{equation}
\begin{split}
\|w\|^2=\sum_{1\leq j\leq n}|\lambda_j-\alpha|^2\,
|\langle v,e_j\rangle|^2&\geq
\sum_{|\lambda_j-\alpha|\geq\delta}|\lambda_j-\alpha|^2\,
|\langle v,e_j\rangle|^2\\
&\geq\delta^2\,\Big\|v-
\sum_{|\lambda_m-\alpha|<\delta}\langle v,e_m\rangle\,e_m
\Big\|^2\,,
\end{split}
\end{equation}
Write $\til{e}:=\sum_{|\lambda_m-\alpha|<\delta}
\langle v,e_m\rangle\,e_m\in V_\delta$.
Then this implies in particular that
\begin{equation}
\|w\|\geq\delta\|v-\til{e}\|\geq\delta\,\Big|\,\|v\|-\|\til{e}\|\,\Big|\,.
\end{equation}
Taking $e:=\til{e}/\|\til{e}\|$, we then get
\begin{equation}
\begin{split}
\Big\|\,v-\|v\|\,e\,\Big\|&\leq
\Big\|\,v-\til{e}\,\Big\|+\Big|\,\|v\|-\|\til{e}\|\,\Big|\\
&\leq 2\frac{\|w\|}{\delta}\,.
\end{split}
\end{equation}
This proves the result.
\end{proof}

\subsection{Case of the Lie algebra $\mathfrak{su}(2)$}
\label{su2sec}

A triple
of skew-Hermitian operators $X_1,\,X_2,\,X_3\in\mathfrak{su}(H)$
is said to \emph{generate an irreducible representation of
$\mathfrak{su}(2)$} if they satisfy the commutation relations
\eqref{eq-gener} and do not preserve any proper subspace of $H$.
From the basic representation theory of
$\mathfrak{su}(2)$, this is equivalent with the fact that
\begin{equation}\label{irrepsu2}
\begin{split}
X_1^2 + X_2^2 + X_3^2 &=-\left(\frac{n^2-1}{4}\right)\id
\quad\quad\text{and}\\
\quad[X_j,X_{j+1}]&=X_{j+2}\quad\quad~\text{for all}\quad
j\in\Z/3\Z\;.
\end{split}
\end{equation}
Set the \emph{ladder operators} to be
\begin{equation}\label{irreplad}
Y_{\pm}:=\pm iX_1+X_2\in\End(H)\,.
\end{equation}
Then there exists an orthonormal basis $\{e_j\}_{j=1}^n$ of
$H$ such that for all $m\in\N$ with $0\leq m\leq n-1$, we have
\begin{equation}\label{irrepX3diag}
\begin{split}
X_3\,e_{n-m}&=i\left(\frac{n-1}{2}-m\right)\,e_{n-m}\,,\\
Y_\pm\,e_{n-m}&=\mp
\sqrt{\frac{n^2-1}{4}-
\left(\frac{n-1}{2}-m\right)^2\mp\left(\frac{n-1}{2}-m\right)}
\,e_{n-m\pm 1}
\end{split}
\end{equation}
Note that the $\mp$ sign in front of the square root in the
the second line of \eqref{irrepX3diag} is a matter of
convention, as one can pass to the opposite sign by
a change of orthonormal basis $e_j\mapsto(-1)^je_j$ for all
$1\leq j\leq n$.

Conversely, if we have operators $X_3,\,Y_+,\,Y_-\in\End(H)$
satisfying \eqref{irrepX3diag} in
an orthonormal basis, then setting $X_1:=i(Y_--Y_+)/2$
and $X_2:=(Y_-+Y_+)/2$, we get three operators
$X_1,\,X_2,\,X_3\in\mathfrak{su}(2)$ generating an irreducible
representation of $\mathfrak{su}(2)$ on $H$.

Let us now compare some basic
consequences of the axioms (R1) and (R2) of Theorem \ref{mainthm-1}
with the basic theory of representations of $\mathfrak{su}(2)$
described at the beginning of the Section.
For any $k\in\N$,
introduce the ladder operators
\begin{equation}\label{ypmdef}
y_{\pm}:=\pm ix_1+x_2\in\End(H_k)\,,
\end{equation}
which satisfy $y_\pm^*=-y_\mp$. Then axiom (R2) translates to
\begin{equation}\label{usefulid1}
\big\|\pm i y_{\pm}-[x_3,y_{\pm}]\,\big\|_{op}=\bigo(1/k)\,.
\end{equation}
On the other hand, one has
\begin{equation}\label{eq-cas-3}
\begin{split}
&y_+y_- = x_1^2+x_2^2 +i[x_1,x_2]\;,\\
&y_-y_+= x_1^2+x_2^2 - i[x_1,x_2]\;,
\end{split}
\end{equation}
so that axioms (R1) and (R2) imply
\begin{equation}\label{usefulid2}
\begin{split}
&\left\| y_+y_-+\frac{(k+c)^2}{4}\id +x_3^2 -ix_3\right\|_{op}= \bigo(1)\,,\\
&\left\| y_-y_++\frac{(k+c)^2}{4}\id +x_3^2 +ix_3\right\|_{op}= \bigo(1)\,.
\end{split}
\end{equation}

\medskip\noindent{\bf Proof of Theorem \ref{mainthm-1}.}
Let us fix $c\in\R$ and $r>0$, and consider all
triples of operators $x_1, x_2, x_3\mathfrak{su}(H), j\in \mZ /3\mZ$
satisfying $(R1)$ and $(R2)$ for some finite-dimensional
Hilbert space $H$.
All the estimates in the proof are with respect to the
Hilbert norm as $k\to+\infty$.

The proof will be divided into 3 steps. In Step 1, we use the ladder operators \eqref{ypmdef} to construct
$k+c$ distinct eigenvectors of $x_3\in\mathfrak{su}(H)$ with
distinct eigenvalues, showing the first statement of Theorem
\ref{mainthm-1} and the inequality \eqref{mainfla0}.
In Step 2, we show that, if these eigenvectors do not
generate the whole Hilbert space $H$, we can repeat the construction
of Step 1
to get another set of $k+c$ distinct eigenvectors of
$x_3\in\mathfrak{su}(H)$, thus establishing formula \eqref{dimH=k+c}
on the dimension under the assumption (R3).
In Step 3, we define a representation
$\rho:\mathfrak{su}(2)\to\mathfrak{su}(H)$ via formulas
\eqref{irrepX3diag} for the basis constructed in Step 1, and
establish \eqref{mainfla}.

\medskip
{\sc Step 1:}
Let $\lambda_k\in\R$ be the
highest eigenvalue of the Hermitian endomorphism $-ix_3\in\End(H)$,
and let $e_k\in H$ with $\|e_k\|=1$ be such that
\begin{equation}\label{highev}
x_3e_{k}=i\lambda_k e_{k}\,.
\end{equation}
Using formula \eqref{usefulid1}, we get the estimate
\begin{equation}
x_3(y_+e_k)=i(\lambda_k+1)y_+e_k+\bigo(1/k)\,.
\end{equation}
Applying Lemma \ref{quasimodo} to $A=-ix_3$, $v=y_+e_k$,
$w=\bigo(1/k)$, and using the fact that
$\lambda_k\in\R$ is the highest eigenvalue of $-ix_3$,
we get the estimate $\|y_+e_k\|=\bigo(1/k)$. Using now formula
\eqref{usefulid2} and Cauchy-Schwartz inequality, this implies
\begin{equation}
\bigo(1/k^2)=\|y_+ e_k\|^2=-\langle y_-y_+ e_k,e_k\rangle=\frac{(k+c)^2}{4}
-\lambda_k^2-\lambda_k+\bigo(1)\,,
\end{equation}
which readily leads to the estimate
\begin{equation}\label{lambdak}
\lambda_k=\frac{k+c-1}{2}+\bigo(1/k)\,.
\end{equation}

The strategy of the proof is based on a recursive
estimate on the eigenvalues of $-ix_3$ using the lowering operators,
which we describe now. We will use the elementary fact that for any
$\epsilon>0$, there exists $\delta>0$ such that for all $k\in\N$ and
all $\lambda\in\R$, we have
\begin{multline}\label{fctstudy}
-\frac{k+c-1}{2}+\epsilon<\lambda<\frac{k+c}{2}-\epsilon\quad\text{implies}\\
\quad
\frac{(k+c)^2}{4}-
\lambda^2+\lambda>\delta (k+c)\,.
\end{multline}
Now if $f\in H$ with $\|f\|=1$ is an eigenvector of $-ix_3$ with eigenvalue
$\lambda\in\R$ satisfying \eqref{fctstudy}, we can use
formula \eqref{usefulid2} and Cauchy-Schwartz inequality to get
\begin{equation}\label{|ye|}
\begin{split}
\|y_-f\|^2=-\langle y_+y_- f,f\rangle&=\frac{(k+c)^2}{4}
-\lambda^2+\lambda+\bigo(1)\\
&\geq\delta k+\bigo(1)\,.
\end{split}
\end{equation}
On the other hand, formula \eqref{usefulid1} implies
\begin{equation}\label{xyv=yv-}
x_3(y_-f)=i(\lambda-1)y_-f+\bigo(1/k)\,.
\end{equation}
We can thus apply Lemma \ref{quasimodo} with $A=-ix_3$, $v=y_-f$ and
$w=\bigo(1/k)$ to get an eigenvector $\til{f}\in H$
of $-ix_3$ with associated eigenvalue
$\til{\lambda}\in\R$ satisfying the recursive estimates
\begin{equation}\label{recest}
\til{\lambda}=\lambda-1+\bigo(1/k^{3/2})\,.
\end{equation}
We emphasize that the appearance of the exponent $3/2$ in the recursive
estimate \eqref{recest} is crucial for the rest of the proof.
%

Fix $\epsilon>0$ small enough in \eqref{fctstudy},
and recall the estimate \eqref{lambdak} for the highest eigenvalue
of $-ix_3$. Let us prove by induction that for all $m\in\N$ satisfying
$0\leq m<k+c$, there is a normalized eigenvector $e_{k-m}\in H$ of $-ix_3$
with associated eigenvalue $\lambda_{k-m}\in\R$ satisfying
\begin{equation}\label{lambdam}
\lambda_{k-m}=\lambda_k-m+m\,\bigo(1/k^{3/2})\,.
\end{equation}
Note that this is trivially satisfied for $m=0$.
On the other hand, we see from \eqref{lambdak} that
for any $0\leq m<k+c-1$, the right hand side of
\eqref{lambdam} satisfies \eqref{fctstudy} as soon as $k\in\N$ is big enough.
Thus if \eqref{lambdam} is satisfied for some $0\leq m<k+c-1$,
we can apply the recursive estimate \eqref{recest}
to $\lambda=\lambda_{k-m}$ to get
\begin{equation}\label{lambdak-m-1}
\lambda_{k-m-1}=\lambda_{k-m}-1+\bigo(1/k^{3/2})\,,
\end{equation}
which implies \eqref{lambdam} with $m$ replaced by $m+1$. This shows
by induction that for all $m\in\N$
satisfying $0\leq m<k+c$, we have
\begin{equation}\label{lambdamfin}
\lambda_{k-m}=\frac{k+c-1}{2}-m+\bigo(1/\sqrt{k})\,.
\end{equation}

Let us now write $\lambda_-\in\R$ for the lowest eigenvalue
of $-ix_3$. The argument leading to the estimate \eqref{lambdak}
using $y_-$ instead of $y_+$ leads to the estimate
\begin{equation}\label{lambda-}
\lambda_-=-\frac{k+c-1}{2}+\bigo(1/k)\,.
\end{equation}
Assume without loss of generality
that $c\geq 0$. Suppose, on the contrary, that $c\notin\Z$.
Then $m=\lfloor k+c\rfloor < m+c$.
Applying \eqref{lambdamfin}, we get that $\lambda_{k-m}$ is
smaller than the lowest eigenvalue
\eqref{lambda-} for
$k\in\N$ big enough. This contradiction shows that necessarily $c\in\Z$,
which proves the first statement of Theorem \ref{mainthm-1}.
We also get that $\|x_3\|_{op} = k/2 +\bigo(1)$,
and by symmetry of the assumptions (R1) and (R2) in $j\in\Z/3\Z$,
this yields \eqref{mainfla0}.

\medskip
{\sc Step 2:}
Let us now assume that (R3) holds, in addition to (R1) and (R2).
Our goal is to establish formula \eqref{dimH=k+c} on the dimension.
By the first statement of Theorem \ref{mainthm-1}
and through the shift $k\mapsto k+c$,
we will assume without loss of generality that $c=0$. Using the
estimate \eqref{lambdamfin}, we get a set
of eigenvalues of $-ix_3$ parametrized by $m\in\N$ with
$0\leq m\leq k-1$, which are pairwise distinct for $k\in\N$
big enough. This implies that $\dim H\geq k$.

To establish formula \eqref{dimH=k+c} with $c=0$,
let us consider $k\in\N$ big enough and
assume on the contrary that $\dim H\geq k+1$.
Let $E\subset H$ be the direct sum of $1$-dimensional eigenspaces
associated with each of the eigenvalues \eqref{lambdamfin}, so
that $\dim E=k$ for $k\in\N$ big enough,
and $E$ is a proper subspace of $H$. In particular, there
exists an eigenvalue $\mu\in\R$ of $-ix_3$ admitting
an eigenvector $e_{\mu}\notin E$.
Note that $\mu$ has to lie between the
highest eigenvalue \eqref{lambdak}
and the lowest eigenvalue \eqref{lambda-}.

Furthermore, we claim that for any $\epsilon>0$, there exists
$k_0\in\N$ such for any $k\geq k_0$, there exists $0\leq m_0\leq k-1$
such that
\begin{equation}\label{lambdatilm0}
\left|\mu-\lambda_{k-m_0}\right|<2\epsilon\,.
\end{equation}
Indeed, considering \eqref{fctstudy} and \eqref{lambdamfin}
with $c=0$,
we can use repeatedly the recursive estimate
\eqref{recest} as in Step 1 to produce eigenvalues
\begin{equation}\label{lambdamfintil}
\mu_m:=\mu-m+m\bigo(1/k^{3/2})\,,
\end{equation}
for all $m\in\N$ such that $\mu_m>-\frac{k-1}{2}+\epsilon$.
Assume, on the contrary, that \eqref{lambdatilm0} is not satisfied
for some $0\leq m_0\leq k-1$. Then we can
use the recursive estimate \eqref{recest} one more time
to produce an eigenvalue
$\mu_-$ satisfying $\mu_-<-\frac{k-1}{2}-\epsilon$.
Thus, $\mu_-$ is smaller than the lowest eigenvalue
\eqref{lambda-}. This contradiction proves \eqref{lambdatilm0}.

Now for any $m\in\N$ and $\theta>0$, write
\begin{equation}\label{Vmdef}
V_m(\theta):=\bigoplus_{\left|\lambda-\lambda_{k-m}\right|<\theta}\,E_\lambda\,,
\end{equation}
where $E_\lambda:=\{\,v\in H\,|\,-ix_3v=\lambda v\,\}$ for all
$\lambda\in\R$.
By assumption, there exists an eigenvector
$e_{\mu}\in H$
associated with $\mu$ which does not belong to the
eigenspace associated with $\lambda_{k-m_0}$ in $E\subset H$.
Thus inequality \eqref{lambdatilm0} implies
\begin{equation}
\dim V_{m_0}(2\epsilon)\geq 2\,.
\end{equation}
Note that for $k\in\N$ big enough, we have
either $m_0>0$ or $m_0<k-1$ (or both).
Without loss of generality, let us assume that $m_0<k-1$.

We claim that
\begin{equation}\label{Vmeps}
\dim V_m(2\epsilon+m\bigo(1/k^{3/2}))\geq 2\,,
\end{equation}
for every $m_0 \leq m \leq k-1$. The proof goes by induction in $m$.
We have already seen that \eqref{Vmeps} holds true for $m=m_0$. Assume it
is valid for some $m < k-1$. Then there exists an eigenvalue
$\til\mu_m\in\R$ of $-ix_3$ satisfying
\begin{equation}
|\lambda_{k-m}-\til\mu_m| \leq 2\epsilon+m\bigo(1/k^{3/2})\;,
\end{equation}
with associated eigenvector orthogonal to the subspace $E\subset H$.
Applying the recursive estimate \eqref{recest} and comparing
with \eqref{lambdak-m-1}, we get an eigenvalue $\til\mu_{m+1}\in\R$
of $-ix_3$, which satisfies
\begin{equation}
|\lambda_{k-m-1}-\til\mu_{m+1}|<2\epsilon+ (m+1)\bigo(1/k^{3/2})\;.
\end{equation}
Hence if $\til\mu_{m+1}\neq\lambda_{k-m-1}$, the estimate
\eqref{Vmeps} for $m+1$ holds automatically. We can thus assume
without loss of generality that $\til\mu_{m+1}=\lambda_{k-m-1}$.
In this case, the recursive estimate \eqref{recest} implies
\begin{equation}
|\lambda_{k-m} - \til\mu_{m}| = \bigo(1/k^{3/2})\,.
\end{equation}
Consider now two normalized orthogonal eigenvectors
$f_1\in E$ and $f_2 \notin E$ respectively
associated with $\lambda_{k-m}$ and $\til\mu_{m}$.
Applying the second part of Lemma \ref{quasimodo}
with $\delta=2\epsilon$ to $y_- f_1$ and $y_-f_2$,
we get vectors
\begin{equation}
\til{f}_1,\,\til{f_2}\in V_{m+1}(2\epsilon + \bigo(1/k^{3/2}))
\end{equation}
with $\|\til{f}_1\|=\|\til{f}_2\|=1$ such that for $j=1,\,2$, we have
\begin{equation}\
\Big\|\,y_-f_j-\,\|y_-f_j\|\til{f}_j\,\Big\|= \bigo(1/k)\;.
\end{equation}
On the other hand, using formula \eqref{usefulid2}
and the fact that $\langle f_1,f_2\rangle=0$,
we have
\begin{equation}
\langle y_-f_1,y_-f_2\rangle=-\langle y_+y_-f_1,f_2\rangle=\bigo(1)\,.
\end{equation}
Furthermore, formula \eqref{|ye|} shows that for $j=1,\,2$, we
have
\begin{equation}
\|y_-f_j\|^2\geq\delta k+\bigo(1)\,.
\end{equation}
This gives
\begin{equation}
\langle\til{f}_1,\til{f}_2\rangle=\frac{1}{\|y_-f_1\|}
\frac{1}{\|y_-f_2\|}\langle y_-f_1,y_-f_2\rangle+\bigo(1/k^2)=\bigo(1/k^2)\,,
\end{equation}
so that
$\til{f}_1,\,\til{f}_2
\in V_{m+1}(2\epsilon+ \bigo(1/k^{3/2}))$
are linearly independent
for $k\in\N$ big enough. This finishes the proof of claim \eqref{Vmeps}.
%

It follows from \eqref{Vmeps} that for all
$m_0\leq m\leq k-1$,
we have
\begin{equation}\label{Vmeps2}
\dim V_m(2\epsilon+\bigo(1/\sqrt{k}))\geq 2\,.
\end{equation}
On the other hand, if $m_0>0$, we can repeat
the same process starting with $V_{m_0}(2\epsilon)$
using $y_+$ instead of $y_-$ to get
\eqref{Vmeps2} for all
$m\in\N$ with $0\leq m\leq k-1$.

By definition \eqref{Vmdef},
the subspaces
$V_m(2\epsilon+\bigo(1/\sqrt{k}))$ are pairwise orthogonal for each $m\in\N$
as soon as $k\in\N$ is big enough, and we have
\begin{equation}
\dim\bigoplus_{0\leq m\leq k-1}
V_m(2\epsilon+\bigo(1/\sqrt{k})))\geq 2k\,.
\end{equation}
This contradicts the assumption (R3) for $c=0$, and
shows that $\dim H=k$, thus proving \eqref{dimH=k+c}.

\medskip
{\sc Step 3:}
We are now left with constructing a representation satisfying \eqref{mainfla}.
Assuming without loss of generality that $c=0$, the argument above
shows in particular
that all eigenvalues of $-ix_3$ are simple and satisfy formula
\eqref{lambdamfin} for all $m\in\N$ with $0\leq m\leq k-1$.
Using Lemma \eqref{quasimodo},
we get a normalized eigenvector $e_{m-1}\in H$ of $-ix_3$ associated with $\lambda_{m-1}$ satisfying
\begin{equation}\label{ypmapprox}
\begin{split}
\langle y_-e_m,e_{m-1}\rangle
&=\langle\|y_-e_m\|e_{m-1},e_{m-1}\rangle+\bigo(1/k)\\
&=\|y_-e_m\|+\bigo(1/k)\,.
\end{split}
\end{equation}
Starting with any eigenvector $e_k$ of
$-ix_3$ associated with $\lambda_k$, we can then
construct an orthonormal eigenbasis $\{e_j\}_{j=1}^k$ for $x_3$ associated
to the sequence of eigenvalues $\{\lambda_j\}_{j=1}^k$ and
satisfying formula \eqref{ypmapprox} for
all $1\leq m\leq k$. Let us now note that for any $\lambda\in\R$,
using in particular formula \eqref{fctstudy}, we have that
\begin{multline}
-\frac{k-1}{2}+\epsilon<\lambda<\frac{k}{2}-\epsilon
\quad\text{implies}\\
\left|\frac{d}{d\lambda}\left(\sqrt{\frac{k^2-1}{4}-
\lambda^2+\lambda}\right)\right|=\bigo(\sqrt{k})\,,
\end{multline}
Via the first line of \eqref{|ye|} and
formula \eqref{lambdamfin}, this implies for all $1\leq m\leq k$ that
\begin{equation}\label{y-est}
\|y_-e_{k-m}\|=\sqrt{\frac{k^2-1}{4}-
\left(\frac{k-1}{2}-m\right)^2+\left(\frac{k-1}{2}-m\right)
}+\bigo(1)\,.
\end{equation}
On the other hand, for all $j\neq m-1$,
using
formula \eqref{usefulid1} and Cauchy-Schwartz inequality,
we get
\begin{equation}\label{ypmapprox1'}
\begin{split}
i\langle y_-e_m,e_{j}\rangle&=\langle [x_3,y_-]e_m,e_{j}\rangle
+\bigo(1/k)\\
&=i(\lambda_{m}-\lambda_{j})\,
\langle y_-e_m,e_{j}\rangle+\bigo(1/k)\,.\\
\end{split}
\end{equation}
Now formula \eqref{lambdak} implies that
$|\lambda_{j}-\lambda_{m}-1|\geq 1/2$
as soon as $j\neq m-1$ and $k\in\N$ big enough, so that
\eqref{ypmapprox1'} implies
\begin{equation}\label{ypmapprox2}
\langle y_-e_m,e_{j}\rangle=\bigo(1/k)\quad\text{for}\quad j\neq m-1\,,
\end{equation}
and we get analogous formulas for $y_+=-y_-^*$ by definition \eqref{ypmdef}.

In the orthonormal basis $\{e_j\}_{j=1}^k$ of $H$ constructed
above and following \eqref{irrepX3diag} and \eqref{irrepsu2},
let us now set
\begin{equation}\label{y+est}
\begin{split}
X_3\,e_{n-m}&:=i\left(\frac{k-1}{2}-m\right)\,e_{n-m}\,,\\
Y_\pm\, e_{n-m}&:=\mp\sqrt{\frac{k^2-1}{4}-
\left(\frac{k-1}{2}-m\right)^2\mp\left(\frac{k-1}{2}-m\right)}
\,e_{n-m\pm 1}\,,
\end{split}
\end{equation}
for all $0\leq m\leq k-1$. By the basic representation theory
of $\mathfrak{su}(2)$ described at the beginning of the section
and the definition \eqref{ypmdef} of $y_\pm$,
to show Theorem \ref{mainthm-1}, it suffices to show that
$\|x_3-X_3\|_{op}=\bigo(1)$ and $\|y_\pm-Y_\pm\|_{op}=\bigo(1)$.
Now formula \eqref{lambdam} implies immediately that
\begin{equation}
\|x_3-X_3\|_{op}=\bigo(1/\sqrt{k})\,.
\end{equation}
On the other hand, for all $1\leq j,\,m\leq k$, formulas
\eqref{ypmapprox}, \eqref{y-est}, \eqref{ypmapprox2}
and \eqref{y+est} yield
\begin{equation}
\begin{split}
\langle(y_\pm-Y_\pm)e_j,e_m\rangle
=\bigo(1/k)\quad&\text{for}\quad m\neq j\pm 1,,\\
\langle (y_\pm-Y_\pm)e_m,e_{m\pm 1}\rangle&=\bigo(1)\,.
\end{split}
\end{equation}
Decompose the matrix into $y_\pm-Y_\pm=A+B$, where all coefficients
of $A$ vanish except $A_{m,m\pm 1}=\bigo(1)$ for all $1\leq m\leq k$
and where $B_{jm}=\bigo(1/k)$ for all $1\leq j,\,m\leq k$. Then
we readily get $\|A\|_{op}=\bigo(1)$, while by Cauchy-Schwartz we
compute
\begin{equation}
\begin{split}
\|B\|_{op}^2&=\max_{\|v\|=1}\sum_{j=1}^k
\left|\sum_{m=1}^k B_{jm}\langle e_m,v\rangle\right|^2\\
&\leq k\max_{1\leq j\leq k}\sum_{m=1}^k|B_{jm}|^2\\
&\leq k^2\bigo(1/k^2)=\bigo(1)\,.
\end{split}
\end{equation}
By the triangle inequality this gives
\begin{equation}
\|y_\pm-Y_\pm\|_{op}\leq\|A\|_{op}+\|B\|_{op}=\bigo(1)\,.
\end{equation}
We get \eqref{mainfla} for the representation defined by \eqref{y+est}
as described in the beginning of the section.
This concludes the proof of Theorem \ref{mainthm-1}.

\qed

\subsection{Case of the quantum torus}
\label{QT2sec}

A pair of unitary operators $X_1,\,X_2\in\End(H)$
\emph{generates an irreducible
representation of the quantum torus} $\cA_{1/n}$
 if it satisfies a commutation relation
\begin{equation}
X_1X_2=e^{2\pi i/n}X_2X_1\,.
\end{equation}
Diagonalizing $X_1$ in an orthonormal basis
$\{e_m\}_{m=1}^n$ of $H$,
we readily get that
$X_1,\,X_2\in\End(H)$ generate an irreducible representation
of the quantum torus if and only if there exists
$\theta_1,\,\theta_2\in \R/n\Z$
such that
\begin{equation}\label{X1X2}
\begin{split}
X_1&:=\textup{diag}\left(e^{2\pi i\frac{\theta_1}{n}},e^{2\pi i
\frac{\theta_1+1}{n}},
\cdots,e^{2\pi i\frac{\theta_1+m}{n}},\cdots,
e^{2\pi i\frac{\theta_1+n-1}{n}}\right)\,,\\
X_2 e_m&:=e^{2\pi i\frac{\theta_2}{n}}e_{m+1}
\quad\text{for all}\quad m\in\Z/k\Z\,.
\end{split}
\end{equation}
Note that in this case $X_1$ and $X_2$ have no non-trivial common invariant subspace.

\medskip\noindent{\bf Proof of Theorem \ref{mainthmT2}.}
All the estimates in the proof are with respect to the
Hilbert norm as $k\to+\infty$,
and only depend otherwise on $c\in\R$ and $r>0$ in the
statement of the Theorem.

Consider the polar decomposition $x_j=P_jU_j$, where
$U_j\in\End(H)$ is unitary and $P_j\in\End(H)$ is
positive Hermitian
for each $j=1,\,2$. Then
axiom (R1) is equivalent to $\|P_j^2-\id\|_{op}=\bigo(1/k^3)$,
which implies that $\|P_j-\id\|_{op}=\bigo(1/k^3)$. Using
the submultiplicativity of the operator
norm, we then see that the unitary parts
$U_1,\,U_2\in\End(H)$ also satisfy the axioms (R1) and (R2),
and that $\|x_j-U_j\|_{op}=\bigo(1/k^3)$ for each $j=1,\,2$.
We are thus reduced to the case of
$x_1,\,x_2\in\End(H)$ being unitary endomorphisms.
In particular, they are normal
and Lemma \ref{quasimodo} applies.

The proof of Theorem \ref{mainthmT2} will be divided into 3 steps,
following the structure of the proof of Theorem \ref{mainthm-1}. In Step 1, we construct a set of
eigenvectors for
$x_1\in\End(H)$ using $x_2\in\End(H)$ as a ladder operator.
Then the only notable difference with the proof of Theorem \ref{mainthm-1} is that in the case of Theorem \ref{mainthmT2},
Statement 1 and formula \eqref{dimH=k+cT2} on the dimension
depend on each other, and are established together
in Step 2.

\medskip
{\sc Step 1:}
Let $e\in H$ be an eigenvector
of $x_1\in\End(H)$, and write $\lambda_0\in\C$
for the associated eigenvalue. Then axiom (R2) implies
\begin{equation}\label{evcomm}
x_1(x_2e)=\lambda_0 e^{\frac{2\pi i}{k+c}}x_2e+\bigo(1/k^3)\,.
\end{equation}
As $x_2$ is unitary by assumption, we have $\|x_2e\|=1$,
so that Lemma \ref{quasimodo},
with $A=x_1$, $v=x_2 e$ and
$w=\bigo(1/k^3)$,
implies the existence
of a normalized eigenvector
$e_1\in H$ of $x_1$ with associated eigenvalue $\lambda_1\in\C$
satisfying
\begin{equation}\label{lambda1tor}
\lambda_1=\lambda_0 e^{\frac{2\pi i}{k+c}}+\bigo(1/k^3)\,.
\end{equation}
We then obtain by induction
eigenvalues $\lambda_m\in\C$
for all $0\leq m<k+c$ satisfying
\begin{equation}\label{lambdam-1}
\begin{split}
\lambda_m&=\lambda_0 e^{\frac{2\pi im}{k+c}}+m\,\bigo(1/k^3)\\
&=\lambda_0 e^{\frac{2\pi im}{k+c}}+\bigo(1/k^2)\,.
\end{split}
\end{equation}
As $|\lambda_m|=1$ by unitarity,
these eigenvalues are distinct for all $0\leq m<k+c$
as soon as $k\in\N$ is big enough. In particular, we have
$\dim H\geq k+c$ as soon as $k\in\N$ is big enough.

\medskip
{\sc Step 2:}
Our goal now is to establish Statement 1 and formula
\eqref{dimH=k+cT2} of Theorem \ref{mainthmT2}.
Note that if $c\notin\Z$, then
$\lambda_0\neq\lambda_0 e^{\frac{2\pi i\lfloor k+c\rfloor}{k+c}}$,
and one can apply the construction of Step 1 once more to get distinct
eigenvalues of the form \eqref{lambdam-1} for all
$0\leq m<k+c+1$, so that
in particular $\dim H\geq k+c+1$ for $k\in\N$ big enough.
Hence to show that $c\in\Z$ and $\dim H=k+c$, it suffices
to show that $\dim H<k+c+1$.

Assume by contradiction that $\dim H\geq k+c+1$.
Let $E\subset H$ be the direct sum of $1$-dimensional eigenspaces
associated with each of the eigenvalues \eqref{lambdam-1}
for all $0\leq m<k+c$, so
that $\dim E<k+c+1$,
and $E$ is a proper subspace of $H$. In particular, there
exists an eigenvalue $\til{\lambda}_0\in\C$ of $x_1$ admitting
an eigenvector $\til{e}_{0}\notin E$.
Assume first that we have
\begin{equation}\label{lambdatilm0>}
\left|\til{\lambda}_0-\lambda_{m}\right|\geq \frac{\pi}{2k}\,,
\end{equation}
for all $0\leq m<k+c$.
Then we can repeat the reasoning of Step 1
to construct by induction
eigenvalues $\til{\lambda}_m\in\C$
for all $0\leq m<k+c$ satisfying
\begin{equation}\label{tillambdam}
\til{\lambda}_m=\til{\lambda}_0 e^{\frac{2\pi im}{k+c}}+\bigo(1/k^2)\,.
\end{equation}
As $|\til{\lambda}_m|=1$ by unitarity,
they are all distinct for all $0\leq m<k+c$
as soon as $k\in\N$ is big enough,
and \eqref{lambdatilm0>} also implies that they are distinct
from the set \eqref{lambdam-1} for $k\in\N$ big enough.
This implies in particular that $\dim H\geq 2(k+c)$,
which contradicts the assumption of the Theorem.

Let us now consider the remaining case
\begin{equation}\label{lambdatilm0-1}
\left|\til{\lambda}_0-\lambda_{m_0}\right|<\frac{\pi}{2k}\,,
\end{equation}
for some $0\leq m_0<k+c$.
For any $m\in\N$ and $\mu>0$, write
\begin{equation}\label{Vmdef-1}
V_m(\mu):=\bigoplus_{\left|\lambda-\lambda_{m}\right|<
\mu}\,E_\lambda\,,
\end{equation}
where $E_\lambda:=\{\,v\in H\,|\,x_1v=\lambda v\,\}$ for all
$\lambda\in\R$.
Now by assumption, there exists an eigenvector
$\til{e}_0\in H$ of $x_1$
associated with $\til{\lambda}_0$ which does not belong to a
$1$-dimensional eigenspace associated with $\lambda_{m_0}$.
The assumption \eqref{lambdatilm0-1}
translates to
\begin{equation}\label{Vmpieps}
\dim V_{m_0}\left(\frac{\pi}{2k}\right)\geq 2\,.
\end{equation}

We claim that
\begin{equation}\label{Vmpiepsclaim}
\dim V_{m}\left(\frac{\pi}{2k}+m\bigo(1/k^3)\right)\geq 2\,,
\end{equation}
for every $m_0\leq m\leq k+c$. The proof goes by induction
in $m$, the case $m=m_0$ being given by \eqref{Vmpieps}.
Assume that it is valid for some $m<k-1$.
Then there exist an eigenvalue $\mu_m\in\R$
of $x_1$ satisfying
\begin{equation}\label{mumT2}
\left|\lambda_{m}-\mu_m\right|<\frac{\pi}{2k}+m\bigo(1/k^3)\,,
\end{equation}
with associated eigenvector orthogonal to the subspace $E\subset H$.
Then applying the recursive process of Step 1,
we get an eigenvalue $\mu_{m+1}\in\R$ of $x_1$ satisfying
\begin{equation}\label{mum+1T2}
\left|\lambda_{m+1}-\mu_{m+1}\right|<\frac{\pi}{2k}+(m+1)\bigo(1/k^3)\,,
\end{equation}
Hence if $\lambda_{m+1}\neq\mu_{m+1}$, the claim \eqref{Vmpiepsclaim}
for $m+1$ holds automatically.
We can thus assume withou loss of generality that $\lambda_{m+1}=\mu_{m+1}$.
In this case, the recursive process of Step 1 implies
\begin{equation}\label{mumT22}
\left|\lambda_{m}-\mu_m\right|=\bigo(1/k^3)\,.
\end{equation}
Consider now two normalized orthogonal eigenvectors $f_1\in E$ and
$f_2\notin E$ respectively associated with $\lambda_m$ and $\mu_m$.
Applyting the second part of
Lemma \ref{quasimodo} with $w=\bigo(1/k^3)$ and $\delta=\pi/2k$
applied to $v=x_2 f_1$ and $x_2 f_2$ respectively,
and using the unitarity of $x_2\in\End(H)$, we get vectors
\begin{equation}
\til{f}_1,\,\til{f}_2\in V_{m+1}\left(\frac{\pi}{2k}+\bigo(1/k^3)\right)
\end{equation}
with $\|\til{f}_1\|=\|\til{f}_2\|=1$ such that
for all $j=1,\,2$, we have
\begin{equation}\label{fj}
x_2f_j=\til{f}_j+\bigo(1/k^2)\,.
\end{equation}
%
This implies
\begin{equation}
\begin{split}
\left\|\til{f}_1-\til{f_2}\right\|\geq\|f_1-f_2\|
\,(1+\bigo(1/k^2))
=\sqrt{2}\,(1+\bigo(1/k^2))\,,\\
\left\|\til{f}_1+\til{f_2}\right\|\leq
\|f_1+f_2\|\,(1+\bigo(1/k^2))
=\sqrt{2}\,(1+\bigo(1/k^2))\,,
\end{split}
\end{equation}
so that by \eqref{fj} and the unitarity of $x_2$,
the vectors
$\til{f}_1,\,\til{f}_2\in V_{m+1}\left(\frac{\pi}{2k}+\bigo(1/k^3)\right)$
are linearly independent for $k\in\N$
big enough. This finishes the proof of claim \eqref{Vmpiepsclaim}.

Now by definition \eqref{Vmdef-1},
these subspaces are pairwise orthogonal for each $m\in\N$ as soon
as $k\in\N$ is big enough, and we have
\begin{equation}
\dim\bigoplus_{0\leq m< k+c-1}
V_m\left(\frac{\pi}{2k}+m\bigo(1/k^3)\right)\geq 2(k+c)\,.
\end{equation}
This contradicts the assumption
$\dim H<2(k+c)$, and proves
that $c\in\Z$ and $\dim H=k+c$. Thus we established the first statement of the theorem
and \eqref{dimH=k+cT2}.

\medskip
{\sc Step 3:}
We are now left with constructing a $*$-representation
satisfying \eqref{UVdef}.
Via the shift $k\mapsto k+c$, it suffices to consider the
case $c=0$.
Given an
eigenvector $e_m\in H$ of $x_1$ associated with $\lambda_m\in\C$
as in formula \eqref{lambdam-1} and
applying Lemma \ref{quasimodo} with $A=x_1$, $v=x_2 e_m$ and
$w=\bigo(1/k^3)$ as in formula \eqref{fj},
we can choose the eigenvector $e_{m+1}\in H$ of $x_1$
asociated with $\lambda_{m+1}\in\C$ such that
\begin{equation}\label{lvlop1}
x_2e_{m}=e_{m+1}+\bigo(1/k^2)\,.
\end{equation}
Starting from an arbitrary eigenvector $e_0\in H_m$ of $x_1$
associated with $\lambda_0$,
we construct in this way an eigenbasis $\{e_m\}_{m=0}^{k-1}$
for $u_k$, and using Lemma \ref{quasimodo} again,
we get $\theta\in\R$ such that
\begin{equation}\label{lvlop2}
x_2e_{k-1}=e^{i\theta}\,e_0+\bigo(1/k^2)\,.
\end{equation}
Setting $f_m:=e^{-i\theta m/k}e_m$ for all $m\in\Z/k\Z$
and working in the basis $\{f_m\}_{m=0}^{k-1}$ instead,
set
\begin{equation}\label{X1X2'}
\begin{split}
X_1&:=\textup{diag}\left(\lambda_0,\lambda_0e^{2i\pi/k},
\cdots,\lambda_0e^{2i\pi m/k},\cdots,
\lambda_0e^{2i\pi(k-1)/k}\right)\,,\\
X_2 f_m&:=e^{i\theta/k}f_{m+1}
\quad\text{for all}\quad m\in\Z/k\Z\,.
\end{split}
\end{equation}
By construction, the endomorphism $x_1\in\End(H)$ is diagonal
in the same basis than $X_1$ with eigenvalues given by formula
\eqref{lambdam-1}, so that
\begin{equation}
\left\|X_1-x_1\right\|_{op}=\bigo(1/k^2)\,.
\end{equation}
Furthermore, Cauchy-Schwartz inequality together
with formulas
\eqref{lvlop1} and \eqref{lvlop2} imply
\begin{equation}
\left\|X_2-x_2\right\|_{op}=\bigo(1/k^{3/2})\,.
\end{equation}
Comparing with \eqref{X1X2}, we get \eqref{UVdef}, which completes the proof of the theorem.

\qed

\section{Almost representations of compact Lie algebras}\label{sec-gen}

In this section, we propose an alternative notion of irreducibility of almost
representations in the context of general compact Lie algebras,
and present another version of the Ulam-type statement: irreducible almost-representations
can be approximated by a genuine representation.

Let $(\mathfrak{g},\{\cdot,\cdot\})$
be a real \emph{compact} $n$-dimensional
Lie algebra. This means that it is semi-simple and
that its Killing form $\langle\cdot,\cdot\rangle$
is negative definite. Consider
an orthonormal basis $\{e_j\}_{j=1}^n$ of $\mathfrak{g}$ such that
for all $1\leq j,\,k \leq n$, we have
\begin{equation}\label{Killing}
\langle e_j,e_k\rangle =
-\delta_{jk}\;.
\end{equation}

Let $H$ be a complex Hilbert space of finite dimension. Recall that $\|\cdot\|_{op}$
denotes the operator norm on the space $\mathfrak{su}(H)$ of
skew-Hermitian operators. For an operator $A: \mathfrak{su}(H) \to \mathfrak{su}(H)$
we write $|||A|||$ for its operator norm with respect to the operator norm on $\mathfrak{su}(H)$.

\begin{defin}\label{def-1} {\rm
A linear map $t: \mathfrak{g} \to \mathfrak{su}(H)$ is called
a $(\mu,K,\epsilon)$-\emph{almost representation} of
$(\mathfrak{g},\{\cdot,\cdot\})$ if the following assumptions
hold:
\begin{itemize}
\item For all $1\leq j,\,k\leq n$, the \emph{defect}
\begin{equation}\label{alphajkdef}
\alpha_{jk}:=t(\{e_j,e_k\})-[t(e_j),t(e_k)]
\end{equation}
satisfies $\epsilon:=\max_{j,\,k}\|\alpha_{jk}\|_{op}$ ;
\item $K:= \max_j \|t(e_j)\|_{op}$ ;
\item The \emph{almost-Casimir operator}
$\Gamma$ defined by \eqref{aCdef} is invertible with
$\mu:=|||\Gamma^{-1}|||$.
\end{itemize}
}
\end{defin}


\begin{thm}\label{thm-main-1}
Let $(\mathfrak{g},\{\cdot,\cdot\})$ be a real
semi-simple compact finite
dimensional Lie algebra. Then for any $c>0$, there
exists a constant $\gamma>0$ with the following property.
Given any $(\mu,K,\epsilon)$-almost representation
$t:\mathfrak{g} \to \mathfrak{su}(H)$ with
$\epsilon \leq \gamma\min(\mu^{-2}K^{-2},\mu^{-1},1)$,
there exists a representation
$\rho: \mathfrak{g} \to \mathfrak{su}(H)$
such that for all $1\leq j\leq n$,
\begin{equation}\label{t-T}
\|t(e_j) - \rho(e_j)\|_{\textup{op}}\leq c\,\mu\,K\epsilon \;.
\end{equation}
\end{thm}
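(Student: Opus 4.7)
The plan is to run a Newton-type iteration in the spirit of \cite{K}, with the Casimir playing the role that averaging plays for compact Lie groups. Given $t$ with defect of size $\epsilon$, I construct a linear correction $\sigma:\mathfrak{g}\to\mathfrak{su}(H)$ of size $\bigo(\mu K\epsilon)$ such that $t+\sigma$ has a quadratically smaller defect of size $\bigo(\mu^2K^2\epsilon^2)$; iterating gives quadratic convergence to a genuine representation $\rho$, and the geometric-series bound $\sum_n\|\sigma_n(e_j)\|_{\op}\le c\mu K\epsilon$ delivers \eqref{t-T}.

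A direct expansion of the defect of $t':=t+\sigma$ gives
$$
\alpha'_{jk}=\alpha_{jk}-(d_t\sigma)(e_j,e_k)-[\sigma(e_j),\sigma(e_k)]\,,
$$
where $(d_t\sigma)(X,Y):=[t(X),\sigma(Y)]-[t(Y),\sigma(X)]-\sigma(\{X,Y\})$ is the Chevalley-type coboundary twisted by $t$, so I must approximately invert $d_t$ on $\alpha$. For this I would use the natural Whitehead primitive
$$
\sigma(X):=\Gamma^{-1}\!\Big(\sum_{i=1}^n[t(e_i),\alpha(X,e_i)]\Big)\,,
$$
which is well-defined by invertibility of $\Gamma$ and obeys $\|\sigma(X)\|_{\op}\le 2n\mu K\epsilon$. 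Were $t$ a genuine representation and $\alpha$ a genuine 2-cocycle, Whitehead's lemma together with the centrality of the Casimir in the enveloping algebra would give $d_t\sigma=\alpha$ exactly; in the almost-representation setting these identities hold only up to $\bigo(\epsilon)$ errors, producing a remainder $R$ with $\|R\|_{\op}\le C\mu K\epsilon^2$. Combined with the quadratic bracket term, the new defect satisfies $\|\alpha'\|_{\op}\le C'\mu^2K^2\epsilon^2$.

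To sustain the iteration I verify that the relevant constants stay bounded. The new $K'$ satisfies $K'\le K(1+2n\mu\epsilon)$, and the perturbation of the almost-Casimir obeys $|||\Gamma'-\Gamma|||=\bigo(K\|\sigma\|_{\op})=\bigo(\mu K^2\epsilon)$; under the hypothesis $\epsilon\le\gamma\mu^{-2}K^{-2}$ with $\gamma$ small, a Neumann-series argument yields $\mu'\le 2\mu$. Thus $(\mu_n,K_n)$ at the $n$-th step stay within a factor of $2$ of $(\mu,K)$, and $\epsilon_{n+1}\le\tfrac12\epsilon_n$ holds throughout. Quadratic convergence gives $\sum_n\|\sigma_n\|_{\op}\le \bar c\,\mu K\epsilon$, so $\rho:=\lim_n t_n$ exists in operator norm, has vanishing defect, and satisfies \eqref{t-T}; the constant $c$ of the theorem is extracted by shrinking $\gamma$.

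The main obstacle is the verification that the Whitehead primitive approximately inverts $d_t$, i.e.\ the bound on $R$. The algebra reduces to two kinds of approximate identities: first, an approximate commutation $[\Gamma,\text{ad}_{t(X)}]=\bigo(K\epsilon)$ in operator norm, which after conjugation by $\Gamma^{-1}$ costs a factor of $\mu$; and second, the almost-cocycle identity for $\alpha$, whose failure is directly expressible in terms of the defects $\alpha_{jk}$ themselves. Tracking these two error sources uniformly in $K$ and $\mu$ is the technical heart of the argument, and the three-sided smallness condition in the hypothesis is dictated exactly by the need to close the iteration against these remainders.
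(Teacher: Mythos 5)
Your proposal takes essentially the same route as the paper: the same Whitehead primitive (your $\sigma$ agrees with the paper's $a$ in \eqref{adef} after moving the sign and $\Gamma^{-1}$ through the sum), the same Kazhdan-style Newton iteration, and the same tracking of $(\mu,K,\epsilon)$ under the smallness hypothesis. The one place you stop short is the verification that this primitive approximately inverts $d_t$ up to $O(\mu^2K^2\epsilon^2)$ — the paper's Lemma \ref{lem-four-vsp}, proved by a page of commutator manipulations using $\textup{Ad}$-invariance of the Killing form — which you correctly flag as the technical heart and describe qualitatively (the $[\Gamma,\mathrm{ad}]$ defect and the almost-cocycle identity) but do not carry out.
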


\begin{rem}\label{rem-compar}{\rm
Although more general, this result has a number of drawbacks as compared to Theorem \ref{mainthm-1}, in the case $\mathfrak{g} = \mathfrak{su}(2)$. First, it is unclear to us how to estimate $\mu$ in the case of geometric quantizations of the sphere. Second, even if
we have an {\it ansatz}  $\mu \sim 1$ and $\|x_j\| \sim k \sim \dim H$,
as it should be for an irreducible $k$-dimensional representation,
the existence of a nearby genuine representation is guaranteed only when the defect $\epsilon \lesssim k^{-2}$, as opposed to a less restrictive assumption $\epsilon \lesssim k^{-1}$
provided by Theorem \ref{mainthm-1}.
}
\end{rem}

\medskip\noindent{\sc Discussion on almost irreducibility:}
For representations, the invertibility
of the adjoint Casimir $\Gamma$ is equivalent to irreducibility.
In fact, note that the definition of
almost-Casimir given in \eqref{aCdef} extends to any collection
$X=\{x_1,\dots,x_n\}$ of operators in $\mathfrak{su}(H)$
by the formula
$$\Gamma\sigma:= - \sum_{i=1}^n [[\sigma,x_i],x_i]\;.$$
Furthermore,
\begin{equation}\label{trgamma}
\tr (\Gamma(\sigma)\,\sigma) = \sum_{i=1}^n
\tr\left([\sigma,x_i]^2\right)\;,
\end{equation}
and hence $\Gamma \sigma=0$ if and only if $\sigma$ commutes with all the operators from $X$. In particular,  $\Gamma$ is invertible if and only if the operators
from $X$ possess a common proper invariant subspace.
With this in mind, we are going to compare
$\mu(X):= |||\Gamma^{-1}|||$ with another quantity of geometric flavor which can be interpreted as
a magnitude of irreducibility. Put
$$ d(X):= \min_\Pi \max_j \|(\id-\Pi)\,x_j\,\Pi\|_{op}\;,$$
where $\Pi$ runs over all orthogonal projectors to proper subspaces
$V \subset H$, and $j \in \{1,2,3\}$. Intuitively speaking, smallness of $d$ yields
that the corresponding subspace $V$ is almost invariant.

To this end,  denote by $\cX$ the space of all collections
$X$ whose almost-Casimir is invertible. We say that two positive functions on $\cX$
are {\it equivalent} if their ratio is bounded away from $0$ and $+\infty$
by two constants which depend on $\dim H$.

\medskip
\noindent
\begin{prop}
The functions $\mu^{-1/2}$ and $d$ are equivalent.
\end{prop}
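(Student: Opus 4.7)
The plan is to reformulate $\mu$ in terms of the Hilbert--Schmidt inner product $\langle \sigma, \tau\rangle_{\mathrm{HS}} := -\tr(\sigma\tau)$ on $\mathfrak{su}(H)$: by the identity \eqref{trgamma}, $\Gamma$ is symmetric and positive semi-definite for this inner product, so writing $\lambda_{\min}(\Gamma)$ for its smallest eigenvalue, the HS-to-HS operator norm of $\Gamma^{-1}$ equals $\mu_{\mathrm{HS}} := 1/\lambda_{\min}(\Gamma)$. The standard inequalities $\|\sigma\|_{\op} \leq \|\sigma\|_{\mathrm{HS}} \leq \sqrt{\dim H}\,\|\sigma\|_{\op}$ on $\mathfrak{su}(H)$ then give $\mu \asymp \mu_{\mathrm{HS}}$ up to $\dim H$-dependent constants, so it is enough to prove $d \asymp \lambda_{\min}(\Gamma)^{1/2}$ up to such constants.

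For the inequality $\lambda_{\min}(\Gamma) \leq C(\dim H)\,d^2$, I would pick a projector $\Pi$ onto a proper subspace $V\subset H$ (nearly) realizing $d$ and test the Rayleigh quotient of $\Gamma$ against the traceless skew-Hermitian vector $\sigma := i(\Pi - (\mathrm{rk}(\Pi)/\dim H)\id) \in \mathfrak{su}(H)$. Since $[\sigma, x_j] = i[\Pi, x_j]$ is block off-diagonal with respect to $V\oplus V^\perp$ with both off-diagonal blocks equal to $\pm (\id-\Pi)x_j\Pi$ up to adjoint (by skew-Hermiticity of $x_j$), each block has operator norm at most $d$; applying $\|A\|_{\mathrm{HS}}^2 \leq \mathrm{rank}(A)\,\|A\|_{\op}^2$ then gives $\|[\sigma, x_j]\|_{\mathrm{HS}}^2 \leq \dim H \cdot d^2$. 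Combined with $\|\sigma\|_{\mathrm{HS}}^2 = \mathrm{rk}(\Pi)(\dim H - \mathrm{rk}(\Pi))/\dim H \geq 1/2$, this yields the desired bound by summing over $j$.

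For the reverse bound $d^2 \leq C(\dim H)\,\lambda_{\min}(\Gamma)$, I would take a unit-HS-norm eigenvector $\sigma$ of $\Gamma$ for the eigenvalue $\lambda_{\min}(\Gamma)$ and replace it by its traceless part (which preserves each commutator $[\sigma,x_j]$). Then $i\sigma$ is a nonzero traceless Hermitian operator of unit HS norm with eigenvalues $\mu_1 \leq \cdots \leq \mu_n$ satisfying $\sum \mu_j = 0$ and $\sum \mu_j^2 = 1$; these constraints force $\mu_n - \mu_1 \geq 1/\sqrt{\dim H}$, so by pigeonhole one of the $n-1$ consecutive gaps satisfies $\delta := \mu_{s+1}-\mu_s \geq 1/(\sqrt{\dim H}(\dim H-1))$. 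Letting $\Pi$ be the spectral projector of $i\sigma$ onto $\{\mu_1,\ldots,\mu_s\}$ gives a proper orthogonal projector. Writing all operators in the eigenbasis of $i\sigma$, the identity $\sum_j \|[\sigma,x_j]\|_{\mathrm{HS}}^2 = \sum_j \sum_{a \neq b}(\mu_a-\mu_b)^2|(x_j)_{ab}|^2 = \lambda_{\min}(\Gamma)$ restricts to pairs $(a,b)$ with $a>s\geq b$ to dominate $\delta^2\sum_j \|(\id-\Pi)x_j\Pi\|_{\mathrm{HS}}^2$; dividing by $\delta^2$ and dominating $\|\cdot\|_{\op}$ by $\|\cdot\|_{\mathrm{HS}}$ gives $\max_j \|(\id-\Pi)x_j\Pi\|_{\op}^2 \leq C(\dim H)\,\lambda_{\min}(\Gamma)$, as required.

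The principal obstacle is the $\dim H$-dependent lower bound on the spectral gap $\delta$: since an eigenvector of $\Gamma$ for $\lambda_{\min}$ may concentrate its spectral mass on tightly clustered eigenvalues, no absolute (dimension-free) estimate on $\delta$ is available, and this is precisely why the equivalence constants must be allowed to depend on $\dim H$. Everything else reduces to a direct Rayleigh-quotient and spectral-decomposition computation.
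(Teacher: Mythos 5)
Your proposal is correct and follows the same overall strategy as the paper's sketch: reduce $\mu^{-1}$ to the smallest eigenvalue $\lambda_{\min}(\Gamma)$ (the paper's $\lambda_1(X)$) via the comparison of the Hilbert--Schmidt and operator norms, then prove the two inequalities \eqref{eq-1Cas} and \eqref{eq-2Cas} with $\dim H$-dependent constants. The inequality $\lambda_{\min}(\Gamma)\lesssim d^2$ is the same Rayleigh-quotient idea the paper encodes in identity \eqref{eq-identity-Cas}; your refinement of testing at the traceless skew-Hermitian element $i\bigl(\Pi-(\mathrm{rk}\,\Pi/\dim H)\id\bigr)$ rather than at $\Pi$ itself is the careful way to keep the test vector inside $\mathfrak{su}(H)$, and since $[\id,x_j]=0$ it changes none of the commutators. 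You diverge in the proof of $d^2\lesssim\lambda_{\min}(\Gamma)$: the paper invokes the quasimode Lemma \ref{quasimodo} to argue that the spectral projector of the bottom eigenvector $\sigma$ of $\Gamma$ is almost invariant under each $x_j$, whereas you compute $\sum_j\|[\sigma,x_j]\|_{\mathrm{HS}}^2=\lambda_{\min}(\Gamma)$ directly in the eigenbasis of $\sigma$, lower-bound it by keeping only matrix entries that straddle a spectral gap, and then pass from the Hilbert--Schmidt norm of the off-diagonal block to its operator norm. Your version is more self-contained (only linear algebra, no quasimode lemma), and the pigeonhole step makes the $\dim H$-dependence of the gap bound explicit where the paper only gestures at a $\sim(\dim H)^{-2}$ estimate. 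Two small points: (i) for $\Gamma$ to be invertible, $\mathfrak{su}(H)$ must consist of \emph{traceless} skew-Hermitian operators, so your bottom eigenvector is automatically traceless and the constraint $\sum\mu_j^2=1$ holds with equality, as your gap argument requires; and (ii) the obstacle you identify at the end --- the spectral gap may be tiny when eigenvalues cluster --- is exactly why the proposition is stated with $\dim H$-dependent constants, and also why the paper leaves sharp bounds on the ratio of $\mu^{-1/2}$ and $d$ as an open question.
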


\medskip
\noindent
{\bf Sketch of the proof:}
Denote by $\|A\|_2:= \sqrt{\tr ( A^*A)}$ the Hilbert-Schmidt norm of an operator,
and by $\lambda_1(X)$ the first eigenvalue of $-\Gamma$. The standard inequalities
between the Hilbert-Schmidt norm and the operator norm imply that $\mu^{-1}$
is equivalent to $\lambda_1$.  The claim follows from the inequalities
\begin{equation}\label{eq-1Cas}
d(X)^2 \leq C_1(k)\lambda_1(X)\;,
\end{equation}
and
\begin{equation}\label{eq-2Cas}
\lambda_1(X) \leq C_2(k) d(X)^2\;.
\end{equation}
In order to prove inequality \eqref{eq-1Cas}, take an eigenvector $A$ of $\Gamma$  with $\|A\|_2=1$ corresponding to the first eigenvalue. Since $\tr A =0$,
the spectrum of $A$ can be written as the union of two clusters lying at distance
at least $\sim k^{-2}$ apart. Let $\Pi$ be the spectral projection corresponding to one of them.
Since by \eqref{trgamma} $A$ almost commutes with $x_j$
up to $\epsilon$, one readily deduces
from Lemma \ref{quasimodo} on quasimodes that the image of $\Pi$ is almost invariant under $x_j$.
This yields \eqref{eq-1Cas}.

Inequality \eqref{eq-2Cas} follows from the identity
\begin{equation}\label{eq-identity-Cas}
-(\Gamma(\Pi), \Pi)= 2 \sum_{i=1}^n \big\|[x_i,\Pi]\big\|^2_2\;,
\end{equation}
which holds true for every orthogonal projector $\Pi$.

The details of the argument are left to the reader.
\qed

\medskip

It would be interesting to find sharp bounds on the ratio of $\mu^{-1/2}$ and
$d$ in terms of $\dim H$. At the moment, we cannot compute them even for genuine irreducible representations.


\medskip
\noindent
{\bf Proof of Theorem \ref{thm-main-1}:}
To simplify the notations, we will often write $x_j:=t(e_j)$
for all $1\leq j\leq n$.
All the estimates in the proof are with respect to the operator
norm of $\mathfrak{su}(H)$ and only depend on
$(\mathfrak{g},\{\cdot,\cdot\})$.

For a linear map $a: \mathfrak{g} \to \mathfrak{su}(H)$,
define an {\it approximate} Elienberg-Chevalley coboundary $d_t a: \mathfrak{g} \otimes \mathfrak{g} \to \mathfrak{su}(H)$ by
$$d_ta (g,h) := [t(g),a(h)] - t(h),a(g)] - a(\{g,h\})\;.$$


The proof follows the Newton-type iterative process due to Kazhdan \cite{K} adapted to the context of Lie algebras. At the first step we try to find a linear map $a:\mathfrak{g}\to \mathfrak{su}(H)$
so that
\begin{equation}
\overline{t}(g):=t(g)+a(g)
\end{equation}
is a genuine representation.
This yields equation
\begin{equation}\label{eq-homolog-vsp}
\alpha(g,h) - d_t a(g,h) - [a(g),a(h)]=0\;.
\end{equation}
Ignore the third, quadratic in $a$ term, and solve the linearized equation
$d_t a = \alpha$. As we will see, the almost representation $\overline{t}:= t+a$
is closer to a genuine representation. Repeating the process, we get in the limit
the desired genuine representation approximating the original almost representation $t$.

To make this precise, we have to solve the linearized homological equation $d_t a = \alpha$.
This is done by using an effective approximate version of Whitehead's Lemma (see p.88--89 of \cite{J}).

Consider the anti-symmetric $2$-form
$\alpha:\mathfrak{g}\times\mathfrak{g}\to \mathfrak{su}(H)$ defined
for any $g,\,h\in\mathfrak{g}$ by
\begin{equation}
\alpha(g,h):=t(\{g,h\})-[t(g),t(h)]
\end{equation}
and the $1$-form $a:\mathfrak{g}\to \mathfrak{su}(H)$ defined for any
$g\in\mathfrak{g}$ by
\begin{equation}\label{adef}
a(g):=- \sum_{i=1}^n\Gamma^{-1}[\alpha(g,e_i),x_i]\,.
\end{equation}

\begin{lemma}\label{lem-four-vsp}
For all $j,k = 1, \dots n$
\begin{equation}\label{alphajk=da}
\alpha(e_j,e_k) = d_t a(e_j,e_k)+O(\mu^2K^2\epsilon^2)\,.
\end{equation}
\end{lemma}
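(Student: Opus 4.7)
The plan is to prove the identity in two movements: first, pull $\Gamma^{-1}$ outside every commutator with an $x_j$ appearing in the expansion of $d_t a$, paying a small price from the fact that $\Gamma$ almost commutes with $\mathrm{ad}(x_j)$; and second, reduce what remains to a purely algebraic identity which follows from the Jacobi identity in $\mathfrak{su}(H)$ together with the total antisymmetry of the structure constants of $\mathfrak{g}$ in the Killing-orthonormal basis.

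For the first movement I would introduce the superoperators $D_j\sigma := [\sigma,x_j]$ (so that $\Gamma = -\sum_i D_i^2$) and compute $[D_j,\Gamma]$ directly via Jacobi: one finds $[D_i,D_j] = D_{[x_j,x_i]}$, and substituting $[x_j,x_i] = t(\{e_j,e_i\}) - \alpha_{ji}$ splits $[D_j,\Gamma]$ into a sum $\sum_i (D_{t(\{e_i,e_j\})}D_i + D_i D_{t(\{e_i,e_j\})})$ and an analogous sum with $\alpha_{ij}$ in place of $t(\{e_i,e_j\})$. The first sum vanishes (this is exactly the reason $\Gamma$ is central for genuine representations, and rests on $c_{ij}^k$ being totally antisymmetric), leaving an $\alpha$-contribution of superoperator norm $O(K\epsilon)$. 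Consequently $\|[D_j,\Gamma^{-1}]\|\leq \mu^2\|[D_j,\Gamma]\| = O(\mu^2 K\epsilon)$, and since each $D_i\alpha_{ki}$ appearing in $a(e_k)$ has operator norm $O(K\epsilon)$, swapping $\mathrm{ad}(x_j)$ past $\Gamma^{-1}$ in $d_t a(e_j,e_k)$ costs $O(\mu^2 K^2\epsilon^2)$.

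For the second movement I would then be left, modulo this error, with the question of whether
\[
\Gamma\alpha_{jk} \;=\; -\sum_i \mathrm{ad}(x_j)\,D_i\,\alpha_{ki} + \sum_i \mathrm{ad}(x_k)\,D_i\,\alpha_{ji} + \sum_i D_i\,\alpha(\{e_j,e_k\},e_i)\,.
\]
Expanding each $\mathrm{ad}(x_j)D_i$ by Jacobi and replacing $[x_j,x_i]$ by $\sum_l c_{ji}^l x_l - \alpha_{ji}$ produces, up to pure quadratic terms $[\alpha_{ki},\alpha_{ji}] = O(\epsilon^2)$, an expression in which the identity above is equivalent to the "twisted" cocycle condition
\[
\sum_{\mathrm{cyc}(j,k,i)}\!\Bigl(\mathrm{ad}(x_j)\,\alpha_{ki} + \sum_l c_{ki}^l\,\alpha_{jl}\Bigr)=0,
\]
which is itself obtained by applying Jacobi in $\mathfrak{su}(H)$ to the triple $(x_j,x_k,x_i)$, substituting $[x_a,x_b] = t(\{e_a,e_b\}) - \alpha_{ab}$, and using Jacobi in $\mathfrak{g}$ to kill the leading term. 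Multiplying by $D_i$ and summing over $i$ gives $\Gamma\alpha_{jk}$ on the left-hand side; on the right-hand side, the double sums involving structure constants rearrange in pairs that cancel exactly once one uses $c_{jl}^i = -c_{ji}^l$, which is the total antisymmetry of $c_{ijk}$ in the Killing-orthonormal basis.

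The main obstacle I anticipate is the combinatorial bookkeeping in this last step: one must carefully track which double sums of structure constants $c_{ji}^l D_l\alpha_{ki}$ versus $c_{ji}^l D_i\alpha_{kl}$ arise from expanding Jacobi, and relabel dummy indices using total antisymmetry so that they cancel pairwise. The $O(\epsilon^2)$ leftovers from Jacobi-expanding the brackets $[x_j,[\alpha_{ki},x_i]]$ contribute $\Gamma^{-1}\cdot O(\epsilon^2) = O(\mu\epsilon^2)$, which is dominated by $O(\mu^2 K^2\epsilon^2)$ under the smallness hypothesis on $\epsilon$; the remaining errors in swapping $\Gamma^{-1}$ with the $\mathrm{ad}(x_j)$'s are exactly $O(\mu^2 K^2\epsilon^2)$ by the first step, yielding \eqref{alphajk=da}.
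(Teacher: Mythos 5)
Your proposal follows essentially the same route as the paper's proof: both rest on the exact cocycle identity for $\alpha$ obtained from the Jacobi identity in $\mathfrak{su}(H)$ combined with Jacobi in $\mathfrak{g}$ (killing the leading $t(\{\{e_j,e_k\},e_i\})$ term), then apply $\sum_i D_i$ to produce $\Gamma\alpha_{jk}$ on one side, and finally pay an $O(\mu^2K^2\epsilon^2)$ price to commute $\Gamma^{-1}$ past $\mathrm{ad}(x_j)$ — precisely the paper's $B_{jk}$ term — with the $O(\mu\epsilon^2)$ quadratic $[\alpha,\alpha]$ leftovers matching the paper's $\Gamma^{-1}A_{jk}$. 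The superoperator $D_j=\mathrm{ad}(-x_j)$ formalism and the estimate $|||[D_j,\Gamma^{-1}]|||\leq\mu^2|||[D_j,\Gamma]|||=O(\mu^2K\epsilon)$ are a cosmetic repackaging of the paper's explicit identity $\Gamma[g,x_j]=[\Gamma g,x_j]-C_j(g)$, so the two arguments are the same in substance.
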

The lemma is proved at the end of this section.

Let us now consider the linear map $\overline{t}:\mathfrak{g}\to \mathfrak{su}(H)$
defined for all $g\in\mathfrak{g}$ by
\begin{equation}
\overline{t}(g):=t(g)+a(g)\,,
\end{equation}
and set $\overline{x}_j:=\overline{t}(e_j)$ for all $1\leq j\leq n$.
Then  for all $1\leq j\leq n$, by formula \eqref{adef} for $a(e_j)$
we have
\begin{equation}\label{tbarest}
\overline{x}_j\leq K(1+O(\mu\,\epsilon))\,.
\end{equation}
On the other hand, considering for all $1\leq j,\,k\leq n$ the defect
\begin{equation}
\overline{\alpha}_{jk}:=\overline{t}(\{e_j,e_k\})-
[\overline{t}(e_j),\overline{t}(e_k)]\,,
\end{equation}
we see from \eqref{eq-homolog-vsp} that
\begin{equation}\label{alphabarest}
\overline{\alpha}_{jk}= \alpha_{jk} - d_ta(e_j,e_k) - [a(e_j),a(e_k)] =O(\mu^2K^2\epsilon^2)\,.
\end{equation}
Finally, consider the almost-Casimir operator
$\overline{\Gamma}:\mathfrak{su}(H)\to \mathfrak{su}(H)$ defined as in \eqref{aCdef}
with $x_k$ replaced by $\overline{t}(e_k)$ for all $1\leq k\leq n$.
Then we get
\begin{equation*}
\overline{\Gamma}=\Gamma
+\epsilon(1+\mu\epsilon)\,O(\mu K^2)
=\Gamma\Big(\id+\epsilon(1+\mu\epsilon)\,O(\mu^2K^2)\Big)\,.
\end{equation*}
This implies that for any $\delta>0$, there exists a constant
$\gamma>0$ such that
if $\epsilon(1+\mu\epsilon)\leq\gamma/\mu^2K^2$,
then $\overline{\Gamma}$
is invertible and for all $\sigma\in\mathfrak{su}(H)$,
its inverse satisfies
\begin{equation*}
\left\|\overline{\Gamma}^{-1}(\sigma)\right\|_{\textup{op}}
\leq (1+\delta)\mu
\,\|\sigma\|_{op}\,.
\end{equation*}
This, together with the estimates \eqref{tbarest} and
\eqref{alphabarest}, shows that for any $\delta>0$, there
exists $\gamma>0$ such that under the hypothesis
$\epsilon\leq\gamma\min(\mu^{-2}K^{-2},\mu^{-1},1)$,
the linear map
$\overline{t}:\mathfrak{g}\to \mathfrak{su}(H)$ is an
$(\overline{\mu},\overline{K},\overline{\epsilon})$-almost representation with
\begin{equation*}
\overline{\mu}\leq \mu\,(1+\delta)\,,\quad\overline{K}\leq
K\,(1+\delta)
\quad\text{and}\quad\overline{\epsilon}\leq \epsilon\,\delta\,.
\end{equation*}
Taking $\delta>0$ such that $\delta<(1+\delta)^{-4}$, we get that
$\overline{\epsilon}
\leq\gamma\min(\overline{\mu}^{-2}\overline{K}^{-2},\overline{\mu}^{-1},1)$,
and we can reiterate the construction above with the
$(\overline{\mu},\overline{K},\overline{\epsilon})$-almost representation $\overline{t}:\mathfrak{g}\to \mathfrak{su}(H)$ instead of
$t:\mathfrak{g}\to \mathfrak{su}(H)$. At the $N$-th iteration, we get a
$(\mu_N,K_N,\epsilon_N)$-almost representation
$t_N:\mathfrak{g}\to \mathfrak{su}(H)$ with
\begin{equation*}
\mu_N\leq \mu\,(1+\delta)^N\,,\quad K_N\leq
K\,(1+\delta)^N
\quad\text{and}\quad \epsilon_N\leq \epsilon\,\delta^N\,.
\end{equation*}
Writing
$a_N:\mathfrak{g}\to \mathfrak{su}(H)$ for the the $1$-form defined as in
\eqref{adef} for $t_N:\mathfrak{g}\to \mathfrak{su}(H)$, for all
$1\leq j\leq n$ we get
\begin{equation}\label{tnest}
\begin{split}
t_N(e_j)=t_{N-1}(e_j)+a_N(e_j) =t(e_j)+\sum_{k=1}^N a_k(e_j)\\
=t(e_j)+ \sum_{k=1}^N \big((1+\delta)^2\delta\big)^k
O(\mu\,K\epsilon)\,,
\end{split}
\end{equation}
and the sum of the last line converges as $N\to +\infty$
for $\delta>0$ small enough. As $\epsilon_N\to 0$, the
limit map
$\rho:\mathfrak{g}\to \mathfrak{su}(H)$
is a genuine representation,
satisfying the inequality \eqref{t-T} by \eqref{tnest}.
\qed

{\bf Proof of Lemma \ref{lem-four-vsp}:}
First note that by definition, for any $1\leq i,\,j,\,k\leq n$,
we have
\begin{multline*}
0=[\alpha_{jk},x_i]+\alpha(\{e_j,e_k\},e_i)
+[\alpha_{ki},x_j]\\
+\alpha(\{e_k,e_i\},e_j)
+[\alpha_{ij},x_k]+\alpha(\{e_i,e_j\},e_k)\,.
\end{multline*}
Taking the bracket of this identity with $x_i$ and
following the computations of \cite[p.90]{J},
this implies that for all $1\leq j,\,k\leq n$, we have
\begin{multline}\label{Gammalphajk}
\Gamma\alpha_{jk}=
\sum_{i=1}^n
\Big([\alpha(\{e_j,e_k\},e_i),x_i]+ \\
[[\alpha_{ki},x_i],x_j]
-[[\alpha_{ji},x_i],x_k]\Big)-A_{jk}\,,
\end{multline}
with
\begin{multline}\label{A}
A_{jk}:=
-\sum_{i=1}^n\Big([\alpha_{ki},[x_j,x_i]]-
[\alpha(e_k,\{e_i,e_j\}),x_i]\\
-[\alpha_{ji},[x_k,x_i]]
+[\alpha(e_j,\{e_i,e_k\}),x_i]\Big)\,.
\end{multline}
Applying $\Gamma^{-1}:\mathfrak{su}(H)\to \mathfrak{su}(H)$ on both sides of the equality
\eqref{Gammalphajk} and recalling the definition \eqref{adef}
of $a:\mathfrak{g}\to \mathfrak{su}(H)$, we get
\begin{equation}\label{alphajk=da1}
\alpha_{jk}=[x_j,a(e_k)]-[x_k,a(e_j)]-a(\{e_i,e_j\})
-B_{jk}-\Gamma^{-1}A_{jk}\,,
\end{equation}
with
\begin{multline}\label{B}
B_{jk}:=
-\sum_{i=1}^n \Big(\Gamma^{-1}[[\alpha_{ki},x_i],x_j]
-[\Gamma^{-1}[\alpha_{ki},x_i],x_j]\\
-\Gamma^{-1}[[\alpha_{ji},x_i],x_k]
+[\Gamma^{-1}[\alpha_{ji},x_i],x_k]\Big)\,.
\end{multline}
Let us now estimate the terms \eqref{A} and \eqref{B}. First note
that as the Killing form $\langle\cdot,\cdot\rangle$ is $\textup{Ad}$-invariant
and by the explicit
formula \eqref{Killing}, we have
\begin{equation}\label{Killingcomput}
\begin{split}
-\sum_{i=1}^n[\alpha(e_k,\{e_i,e_j\}),x_i]= \sum_{i=1}^n\Big(\sum_{l=1}^n
\langle\{e_i,e_j\},e_l\rangle [\alpha_{kl},x_i]\Big)\\
= \sum_{l=1}^n
\Big[\alpha_{kl},\sum_{i=1}^n
\langle\{e_j,e_l\},e_i\rangle x_i\Big]
=-\sum_{l=1}^n[\alpha_{kl},t(\{e_j,e_l\})]\\
=-\sum_{l=1}^n[\alpha_{kl},[x_j,x_l]]+
\sum_{l=1}^n[\alpha_{kl},\alpha_{jl}]
= -\sum_{l=1}^n[\alpha_{kl},[x_j,x_l]]+O(\epsilon^2)\,.
\end{split}
\end{equation}

Comparing with formula \eqref{A} for $A_{jk}$, this implies
that
\begin{equation}\label{Aest}
\Gamma^{-1}A_{jk}=O(\mu\,\epsilon^2)\,.
\end{equation}
On the other hand, following \cite[p.\,78]{J}
for any $g\in\mathfrak{g}$ and $1\leq j\leq n$,
using the Killing form in the same way than in \eqref{Killingcomput} we get

\begin{multline*}
\Gamma[g,x_j]=-\sum_{i=1}^n[[[g,x_j],x_i],x_i]
=[\Gamma g,x_j]-\sum_{i=1}^n[[g,[x_j,x_i]],x_i]
-\sum_{i=1}^n[[g,x_i],[x_j,x_i]] \\
=[\Gamma g,x_j]-C_{j}(g)-\sum_{i=1}^n[[g,t(\{e_j,e_i\})],x_i]
-\sum_{i=1}^n[[g,x_i],t(\{e_j,e_i\})]
=[\Gamma g,x_j]-C_j(g)
\,,
\end{multline*}
with
$$
C_{j}(g):=-\sum_{i=1}^n[[g,\alpha_{ji}],x_i]\\
-\sum_{i=1}^n[[g,x_i],\alpha_{ji}]\,.
$$
In particular, for any $1\leq i,\,j,\,k,\,l\leq n$,
we have
\begin{multline*}
\Gamma^{-1}[[\alpha_{ij},x_k],x_l]-
[\Gamma^{-1}[\alpha_{ij},x_k],x_l]\\
=\Gamma^{-1}\left([[\alpha_{ij},x_k],x_l]-
\Gamma[\Gamma^{-1}[\alpha_{ij},x_k],x_l]\right)\\
=\Gamma^{-1}C_l(\Gamma^{-1}[\alpha_{ij},x_k])
=O(\mu^2K^2\epsilon^2)\,.
\end{multline*}
Comparing with formula \eqref{B} for $B_{jk}$, we thus get
\begin{equation}\label{Gammaest}
B_{jk}=O(\mu^2K^2\epsilon^2)\,.
\end{equation}
Then via the estimates \eqref{Aest} and \eqref{Gammaest},
the identity \eqref{alphajk=da1} becomes
$$
\alpha_{jk}=[x_j,a(e_k)]-[x_k,a(e_j)]-a(\{e_k,e_j\})
+O(\mu^2K^2\epsilon^2)\,.
$$
This completes the proof of the lemma.
\qed

\section{Equivalence of quantizations}
\label{quantsec}

The basic strategy of the proofs of Theorem
\ref{mainthquant-1} is
to show that geometric
quantizations of the sphere or the torus induce
almost representations of $\mathfrak{su}(2)$ and the quantum
torus respectively, when restricted to a specific set of basic
functions, and then use our Theorems \ref{mainthm-1} and \ref{mainthmT2}.
Let us first start with some generalities on geometric
quantizations of a closed symplecic manifold $(M,\om)$.

\subsection{General setting}

Let $(M,\om)$ be a closed symplectic manifold.
A bi-differential
operator $C: C^\infty(M) \times C^\infty(M) \to C^\infty(M)$
is called a \emph{Hochschild cocycle} if for all
$f_1,\,f_2,\,f_3 \in C^\infty(M)$, we have
\begin{multline}\label{Hochcocycle}
\partial_H C(f_1,f_2.f_3)\\
:= f_1\,C(f_2,f_3)-C(f_1f_2,f_3)+C(f_1,f_2f_3)-
C(f_1,f_2)\,f_3\\
=0\;.
\end{multline}
The operator $\partial_H$ is called the {\it Hochschild differential}.
We will write
$$C_-(f,g) := \frac{C(f,g)-C(g,f)}{2}\quad\text{and}
\quad C_+(f,g) := \frac{C(f,g)+C(g,f)}{2}\;.$$
for the anti-symmetric and symmetric part of $C$.

Assume now that $\{T_k:C^\infty(M)\to\cL(H_k)\}_{k\in\N}$,
satisfy the axioms of Definition \ref{quantdef}.
The associativity of composition
of operators implies that the bi-differential $C_1$
appearing in axiom (P3) is a Hochschild cocycle,
and that for any $f_1,\,f_2,\,f_3\in C^\infty(M)$,
we have
\begin{equation}\label{delHC2}
\partial_H C_2 (f_1,f_2,f_3) = C_1(C_1(f_1,f_2),f_3) - C_1(f_1,C_1(f_2,f_3))\,.
\end{equation}
Furthemore, the axiom (P2) is equivalent to the fact that
\begin{equation}\label{c1-}
C_1^-(f,g)=\frac{i}{2}\{f,g\}\,,
\end{equation}
for all $f,\,g\in C^\infty(M)$.
Then formula \eqref{Hochcocycle} for $C_1^-$ is a consequence
of the Leibniz rule for the Poisson bracket, and
this shows that
$C_1^+$ is a symmetric Hochschild cocycle.
Then
by \cite[Th.\,2.15]{RG},
it is a \emph{Hochschild coboundary}, meaning that there exists
a differential operator $D:C^\infty(M)\to C^\infty(M)$
vanishing on constants such that
for $f,\,g\in C^\infty(M)$, we have
\begin{equation}\label{adef-1}
C_1^+(f,g)=D(f)g+fD(g)-D(fg)\,.
\end{equation}
Furthermore, the axiom (P1) implies that the operator
$T_k(f)\in\End(H_k)$ is Hermitian for all $k\in\N$ big enough
if and only if $f\in C^\infty(M,\C)$ is real valued.
As the square of a Hermitian operator is Hermitian,
the axiom (P3) then shows that
$C^+_1$ is a real-valued bi-differential operator,
so that $D$ has real coefficients.

Let us now assume that $C_1^+\equiv 0$, and consider
the bi-differential operators $\hat{C}_1$ and $\hat{C}_2$
defined by interchanging $f,\,g\in C^\infty(M)$ in axiom (P3)
in the following way, as $k\to+\infty$,
\begin{equation}
T_k(g)T_k(g) =: T_k\left(fg+\frac{1}{k}
\hat{C}_1(f,g)+\frac{1}{k^2}\hat{C}_2(f,g)\right) + \bigo(1/k^3)\;.
\end{equation}
Then we have $\hat{C}_1(f,g)= C_1(g,f)=-C_1(f,g)$ and $\hat{C}_2(f,g)=C_2(g,f)$.
On the other hand, associativity of composition of operators implies
that \eqref{delHC2} holds for $\hat{C}_1$ and $\hat{C}_2$,
one readily checks that $\partial_H C_2 =\partial_H \hat{C}_2$.
Therefore, $C_2^-=(C_2-\hat{C}_2)/2$ is an anti-symmetric Hochschild cocycle,
and by \cite[Th.\,2.15]{RG},
there exists a $2$-form $\alpha\in\Om^2(M,\C)$ so that
for all $f,\,g\in C^\infty(M)$, we have
\begin{equation}\label{tilc2-}
C_2^-(f,g)=\frac{i}{2}
\alpha(\text{sgrad}\,f,\text{sgrad}\,g)\,.
\end{equation}
Furthermore, by axiom (P1) as above and the fact that the commutator
of Hermitian operators is skew-Hermitian,
the axiom (P3) implies that
the bi-differential operator $iC_2^-$ is real valued,
so that $\alpha$ is a real $2$-form.

The proofs of Theorem \ref{mainthquant-1} and \ref{trconj}
are based on a natural operation
on quantizations, which we call a \emph{change of variable}.
Specifically, given a geometric quantization
$\{T_k:C^\infty(M)\to\cL(H_k)\}_{k\in\N}$, and a differential
operator $D:C^\infty(M)\to C^\infty(M)$, set
\begin{equation}\label{chgeofvar}
T_k^D(f):=T_k\left(f+\frac{1}{k}D\,f\right)\,,
\end{equation}
for all $f\in C^\infty(M)$ and all $k\in\N$. Then one readily
checks that the maps $\{T_k^D:C^\infty(M)\to\cL(H_k)\}_{k\in\N}$,
satisfy the axioms of Definition \ref{quantdef},
and that for any
$f\in C^\infty(M)$, we have the estimate
\begin{equation}\label{chgofvareq}
\left\|T_k(f)-T^D_k(f)\right\|_{op}=\bigo(1/k)\,,
\end{equation}
as $k\to+\infty$.
We will write $C_{1,D}$ and $C_{2,D}$ for the associated
bi-differential operators of axiom (P3).

We will use the operation of change of variables
to reduce the proof of Theorem
\ref{mainthquant-1} to a class of remarkable
quantizations, described by the following result.

\begin{lemma}\label{finequant}
Assume that $(M,\om)$ satisfies $\dim M=2$. Then for any geometric
quantization $\{T_k:C^\infty(M)\to\cL(H_k)\}_{k\in\N}$, there
exists a differential operator
$D:C^\infty(M)\to C^\infty(M)$ vanishing on constants
such that the
bi-differential operators of axiom (P3)
associated with the induced quantization
$\{T_k^D:C^\infty(M)\to\cL(H_k)\}_{k\in\N}$,
satisfy
\begin{equation}\label{finefla}
C_{1,D}^+(f,g)=0\quad\quad\text{and}\quad\quad
C_{2,D}^-(f,g)=-\frac{i}{2}c\,\{f,g\}\,,
\end{equation}
for all $f,\,g\in C^\infty(M)$, where $c\in\R$ is constant.
\end{lemma}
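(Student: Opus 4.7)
The strategy is to take $D = -D_0 + X$, where $D_0$ comes from \eqref{adef-1} and handles the symmetric part $C_1^+$, while $X$ is a real vector field chosen to make $C_2^-$ a constant multiple of the Poisson bracket. The main computational ingredient is the transformation law for the bi-differentials under a change of variable: expanding $T_k^D(f)T_k^D(g) = T_k(f + Df/k)T_k(g + Dg/k)$ via axiom (P3) and reorganizing the result as $T_k^D\bigl(fg + C_{1,D}(f,g)/k + C_{2,D}(f,g)/k^2\bigr) + \bigo(1/k^3)$, one obtains
\begin{equation*}
C_{1,D}(f,g) = C_1(f,g) + (\partial_H D)(f,g),
\end{equation*}
\begin{equation*}
C_{2,D}^-(f,g) = C_2^-(f,g) + \tfrac{i}{2}\bigl(\{Df,g\} + \{f,Dg\} - D\{f,g\}\bigr),
\end{equation*}
where I used that $\partial_H D$ is symmetric (so $C_{1,D}^- = C_1^- = \tfrac{i}{2}\{\cdot,\cdot\}$ is invariant) and that the product term $(Df)(Dg)$ is symmetric in $(f,g)$.

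By \eqref{adef-1} there is a real-coefficient operator $D_0$, vanishing on constants, with $C_1^+ = \partial_H D_0$. For a real vector field $X$ vanishing on constants, $\partial_H X = 0$ since $X$ is a derivation. Hence the choice $D := -D_0 + X$ gives $\partial_H D = -C_1^+$, so $C_{1,D}^+ = 0$ automatically, regardless of $X$. With this $D$, linearity of the correction term in $D$ yields
\begin{equation*}
C_{2,D}^-(f,g) = C_{2,-D_0}^-(f,g) + \tfrac{i}{2}\bigl(\{Xf,g\} + \{f,Xg\} - X\{f,g\}\bigr).
\end{equation*}
Here $C_{2,-D_0}^-$ is the antisymmetric part of the second bi-differential of $T_k^{-D_0}$, itself a valid quantization with $C_1^+ = 0$, so the discussion around \eqref{tilc2-} applies and produces a real $2$-form $\alpha'$ on $M$ with $C_{2,-D_0}^-(f,g) = \tfrac{i}{2}\alpha'(\sgrad f, \sgrad g)$. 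The hypothesis $\dim M = 2$ then forces $\alpha' = \tilde h\,\omega$ for some real $\tilde h \in C^\infty(M)$, and the sign convention $\omega(\sgrad f, \sgrad g) = -\{f,g\}$ gives $C_{2,-D_0}^-(f,g) = -\tfrac{i}{2}\tilde h\{f,g\}$.

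A short computation in Darboux coordinates establishes the divergence identity
\begin{equation*}
\{Xf,g\} + \{f,Xg\} - X\{f,g\} = (\operatorname{div}_\omega X)\{f,g\},
\end{equation*}
reducing the problem to finding a real vector field $X$ on $M$ with $\operatorname{div}_\omega X = \tilde h - c$ for some constant $c$. Setting $c := \int_M \tilde h\,\omega / \int_M \omega$, the top form $(\tilde h - c)\omega$ has zero integral on the closed connected oriented surface $M$, so by de Rham theory it is exact: $(\tilde h - c)\omega = d\beta$ for some real $1$-form $\beta$. Non-degeneracy of $\omega$ produces a unique real vector field $X$ with $i_X\omega = \beta$, and closedness of $\omega$ gives $L_X\omega = d(i_X\omega) = (\tilde h - c)\omega$, i.e.\ $\operatorname{div}_\omega X = \tilde h - c$. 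Plugging back yields $C_{2,D}^-(f,g) = -\tfrac{i}{2}c\{f,g\}$, as required. The main obstacle is this two-dimensional Hodge-theoretic solvability step, which simultaneously forces the value of $c$ to be the mean of $\tilde h$ and exploits the fact that in dimension two every $2$-form is a function multiple of $\omega$; in higher dimensions the vector-field ansatz would no longer suffice to exhaust all anti-symmetric Hochschild cocycles.
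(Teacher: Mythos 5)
Your proof is correct and follows essentially the same route as the paper's. The paper also first uses formula \eqref{adef-1} to choose a change of variable killing $C_1^+$, notes that this choice is free up to adding a derivation, and then writes the resulting $2$-form (your $\alpha'$, the paper's $\alpha_D$) as $c\,\om+d\theta$ using that $H^2(M,\R)$ is one-dimensional in dimension two. The only cosmetic difference is how the corrective derivation is packaged: the paper sets $\delta f:=\theta(\sgrad f)$ and invokes the Lie-derivative identity \eqref{dtheta}, whereas you describe the same vector field via $i_X\om=\beta$ and express the resulting shift in $C_2^-$ through the divergence identity $\{Xf,g\}+\{f,Xg\}-X\{f,g\}=(\operatorname{div}_\om X)\{f,g\}$. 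Since $i_X\om=\theta$ and $L_X\om=d\theta$, these are literally the same derivation and the same computation; your version spells out the pointwise fact $\alpha'=\tilde h\,\om$ and identifies $c$ as the mean of $\tilde h$, whereas the paper only needs the cohomological statement, but the content and the use of $\dim M=2$ are identical.
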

\begin{proof}
One readily computes that a change of variable \eqref{chgeofvar}
associated to a differential operator
$D:C^\infty(M)\to C^\infty(M)$ acts on the
bi-differential operators $C_1^+$ and $C_2^-$ via the following
formula, for all $f,\,g\in C^\infty(M)$,
\begin{equation}\label{cjA}
\begin{split}
C_{1,D}^+(f,g)&=C_1^+(f,g)+D(f)g+fD(g)-D(fg)\,,\\
C_{2,D}^-(f,g)&=C_2^-(f,g)+\frac{i}{2}\Big(\{D(f),g\}
+\{f,D(g)\}-D(\{f,g\})\Big)\,.
\end{split}
\end{equation}
In particular, formula \eqref{adef-1} shows that
there is an operator $D$ satisfying
$C_{1,D}^+\equiv 0$, determined up to the addition of
a derivation $\delta:C^\infty(M)\to C^\infty(M)$.

Let now $D:C^\infty(M)\to C^\infty(M)$ be such that
that $C_{1,D}^+\equiv 0$, and let
$\alpha_D\in\Om^2(M,\R)$ be the two form of formula
\eqref{tilc2-} associated with $C^-_{2,D}$.
Recall that we assume $\dim M=2$, so that $H^2(M,\R)$
is $1$-dimensional, generated by the cohomology
class $[\om]$. Then if we set
\begin{equation}
c:=\frac{1}{2\pi}\int_{M}\,\alpha_D\,,
\end{equation}
we know that there exists
a $1$-form $\theta\in\Om^1(M,\R)$ such that
\begin{equation}\label{alphaA}
\alpha_D=c\,\om+d\theta\,.
\end{equation}
On the other hand,
for all $f,\,g\in C^\infty(M)$,
we have by definition
\begin{multline}\label{dtheta}
d\theta(\sgrad f,\sgrad g)\\
=\theta(\sgrad\{f,g\})
-\{\theta(\sgrad f),g\}
-\{f,\theta(\sgrad g)\}\,.
\end{multline}
Then if we consider the derivation $\delta:C^\infty(M)\to C^\infty(M)$
defined for all $f\in C^\infty(M)$ by $\delta f:=\theta(\sgrad f)$, formulas \eqref{cjA} and \eqref{alphaA} imply
\begin{equation}
C_{2,D+\delta}^-(f,g)=\frac{i}{2}c\,\om(\sgrad f,\sgrad g)=-
\frac{i}{2}c\,\{f,g\}\,,
\end{equation}
and $C_{1,D+\delta}^+=C_{1,D}^+\equiv 0$.
This shows the result.
\end{proof}

Let us end this Section with an existence Theorem, which was
already alluded to in Example \ref{exam-BT-desc}.

\begin{thm}\label{BMS}
\cite{BMS}
Let $(M,\om)$ be a closed symplectic manifold
with $[\om]\in 2\pi H^2(M,\Z)$
admitting a complex structure compatible with $\om$.
Then there exists a geometric quantization
$\{T_k:C^\infty(M)\to\cL(H_k)\}_{k\in\N}$, such that
for all $f\in C^\infty(\T^2,\C)$, its $\C$-linear extension
satisfies
\begin{equation}
\|T_k(f)\|_{op}\leq\|f\|_{\infty}\,.
\end{equation}
\end{thm}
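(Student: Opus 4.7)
The plan is to realize the quantization through the Berezin-Toeplitz construction recalled in Example \ref{exam-BT-desc}. Since $[\om]/(2\pi)\in H^2(M,\Z)$ and the complex structure is compatible with $\om$, one obtains a holomorphic Hermitian line bundle $L$ on $M$ whose Chern connection has curvature $-i\om$. Writing $L^k$ for the $k$-th tensor power, the Kodaira embedding theorem ensures that the spaces $H_k:=H^0(M,L^k)$ of holomorphic sections are finite-dimensional and nonzero for $k$ large. Equipping $L^k$ with the pointwise Hermitian metric and pairing with the Liouville volume form $\om^d/d!$ gives an $L^2$-inner product on smooth sections; let $\Pi_k\colon L^2(M,L^k)\to H_k$ denote the orthogonal (Bergman) projector, and define the Toeplitz operators
\begin{equation*}
T_k(f):=\Pi_k\,M_f\,\Pi_k\in\cL(H_k),
\end{equation*}
where $M_f$ is multiplication by $f$. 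That $T_k(\1)=\id$ is immediate from $\Pi_k^2=\Pi_k$, so the remaining content is the uniform norm bound and the three semi-classical axioms of Definition \ref{quantdef}.

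First I would dispatch the norm estimate stated in the theorem, which is essentially free from the construction: for any $s\in H_k$,
\begin{equation*}
\|T_k(f)s\|_{L^2}=\|\Pi_k(fs)\|_{L^2}\leq\|fs\|_{L^2}\leq\|f\|_\infty\,\|s\|_{L^2},
\end{equation*}
giving $\|T_k(f)\|_{\op}\leq\|f\|_\infty$. This already furnishes half of axiom (P1), while the matching lower bound $\|T_k(f)\|_{\op}\geq\|f\|_\infty-\bigo(1/k)$ is extracted by testing $T_k(f)$ against a coherent state peaked at a point where $|f|$ attains its maximum.

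Next I would address the composition axioms (P2) and (P3), which form the technical heart of the statement. The standard route, due to Boutet de Monvel-Guillemin \cite{BdMG81}, lifts $T_k(f)$ to a genuine Toeplitz operator on the circle bundle of $L^*$ and applies its symbol calculus; an equivalent approach derives the expansions directly from the near-diagonal asymptotics and off-diagonal exponential decay of the Bergman kernel $B_k(x,y)$. Inserting these asymptotics into the kernel representation of $T_k(f)T_k(g)$ and applying stationary phase produces a composition formula
\begin{equation*}
T_k(f)T_k(g)=T_k\!\left(fg+\frac{1}{k}C_1(f,g)+\frac{1}{k^2}C_2(f,g)\right)+\bigo(1/k^3),
\end{equation*}
with explicit bi-differential $C_1,C_2$; the antisymmetric leading term $C_1^-(f,g)=\frac{i}{2}\{f,g\}$ comes out of the computation and gives the Dirac condition of axiom (P2).

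The main obstacle is precisely the Bergman kernel analysis: one must establish both the exponential off-diagonal decay of $B_k$ on the natural scale $1/\sqrt{k}$ and a locally uniform asymptotic expansion near the diagonal, since only together do these let one replace the composition of two Toeplitz operators by a local computation producing bona fide bi-differential operators on $M$. Granted this microlocal input from \cite{BMS}, all remaining items, namely the Hermitian symmetry $T_k(f)^*=T_k(\overline f)$, the $\R$-linearity, and the uniformity of the remainders in some $C^N$-norm of $f,g$, are routine consequences of the integral representation of $T_k(f)$ together with the kernel expansion, so the theorem reduces to quoting the analytic heart of \cite{BdMG81,BMS}.
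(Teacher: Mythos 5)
Your proposal is correct and follows exactly the route the paper intends: the paper does not give its own proof of this theorem but simply cites \cite{BMS}, and your sketch is a faithful outline of the Berezin-Toeplitz construction plus the Boutet de Monvel-Guillemin microlocal machinery that \cite{BMS} relies on. The one piece of genuinely new content in the statement beyond the bare existence of a geometric quantization is the uniform bound $\|T_k(f)\|_{\op}\leq\|f\|_\infty$, and you dispatch that cleanly from $\Pi_k$ being an orthogonal projection; the matching lower bound via coherent states is indeed how (P1) is completed. (Minor nit: finite-dimensionality of $H^0(M,L^k)$ is automatic from compactness of $M$; Kodaira is only needed for nonvanishing and asymptotic growth of $\dim H_k$.)
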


\subsection{Proof of Theorem \ref{mainthquant-1}}
\label{proofquantsec}

Using the estimate \eqref{chgofvareq} and
Lemma \ref{finequant}, we see that it suffices to establish
Theorem \ref{mainthquant-1} for geometric
quantizations for which there
is a constant $c\in\R$ such that
$C_1^+\equiv 0$ and $C_2^-=-\frac{i}{2}c\,\{\cdot,\cdot\}$.
All geometric quantizations considered in this Section will
thus satisfy this property.

The proof of Theorem \ref{mainthquant-1} for the two cases $M=S^2$ and
$M=\T^2$ follows the same strategy: we first
establish \eqref{quanteq} for a finite set of functions generating
a dense subalgebra of $C^\infty(M)$, and then use the quasi-multiplicativity
axiom (P3) in a careful way to extend it to the whole $C^\infty(M)$ by density.
\newline

\medskip\noindent {\sc Case of $M=S^2$:}  We will use the Cartesian coordinate functions
$u_1,\,u_2,\,u_3\in C^\infty(S^2)$ of $S^2$
seen as the unit sphere in $\R^2$. The induced volume form
$\om$ is the standard volume form of volume $2\pi$,
and these coordinate functions satisfy the commutation
relation
\begin{equation}\label{ujcomm}
\{u_j,u_{j+1}\} = -2u_{j+2}\,,
\end{equation}
for all $j\in\Z/3\Z$.
Then given a quantization $\{T_k:C^\infty(S^2)\to\cL(H_k)\}_{k\in\N}$ with
$C_1^+\equiv 0$ and $C_2^-=-\frac{i}{2}c\,\{\cdot,\cdot\}$,
one readily checks from Definition
\ref{quantdef} and the commutation relations \eqref{ujcomm}
that the assumptions of
Theorem \ref{mainthm-1} are satisfied for
the constant $c\in\R$ and
the operators
$x_1,\,x_2,\,x_3\in\mathfrak{su}(H_k)$ defined for all $k\in\N$
and $j\in\Z/3\Z$ by
\begin{equation}
x_j:=\frac{ik}{2}\,\frac{k}{k-c}\,T_k(u_j)\,,
\end{equation}
where $u_1,\,u_2,\,u_3\in C^\infty(S^2)$ are the Cartesian coordinates
of $S^2\subset\R^3$. As the assumption
$\limsup_{k\to+\infty}\dim H_k/k<2$ implies in particular that
$\dim H_k<2(k+c)$ for all $k\in\N$,
it follows that $c\in\Z$ and that $\dim H_k=k+c$
for all $k\in\N$ big enough, which proves the first statement
\eqref{dimquant}.

Furthermore, Theorem \ref{mainthm-1} implies that
there exist operators
$X_1,\,X_2,\,X_3\in\mathfrak{su}(H_k)$ generating an irreducible
representation of $\mathfrak{su}(2)$
such that for all $1\leq j\leq 3$,
\begin{equation}\label{Tkalmrep}
\left\|\frac{ik}{2}\,\frac{k}{k-c}\,T_k(u_j)-X_j\right\|_{op}=\bigo(1)\,.
\end{equation}
Now if $\{S_k:C^\infty(S^2)\to\cL(H_k)\}_{k\in\N}$
is another quantization with same sequence of Hilbert spaces,
we get in the same way operators
$\til{X}_1,\,\til{X}_2,\,\til{X}_3\in\mathfrak{su}(H_k)$
generating an irreducible
representation of $\mathfrak{su}(2)$
such that for all $1\leq j\leq 3$,
\begin{equation}\label{Tkalmrep'}
\left\|\frac{ik}{2}\,\frac{k}{k-c}\,S_k(u_j)-\til{X}_j
\right\|_{op}=\bigo(1)\,.
\end{equation}
As any two
irreducible representations of $\mathfrak{su}(2)$
with same dimension are isomorphic, formulas \eqref{Tkalmrep}
and \eqref{Tkalmrep'} show that there exist
unitary operators $U_k:H_k\to H_k$ for all $k\in\N$
such that for all $j\in\Z/3\Z$,
\begin{equation}\label{TkalmeqQk}
\|T_k(u_j)-U_k^{-1}S_k(u_j)U_k\|_{op}=\bigo(1/k)\,.
\end{equation}
Set $Q_k:=U_k^{-1}S_kU_k$ for all $k\in\N$,
and note that by transitivity, it suffices
to establish \eqref{quanteq} when $\{T_k:C^\infty(S^2)\to\cL(H_k)\}_{k\in\N}$
is the Berezin-Toeplitz quantization of Theorem \ref{BMS}.

Consider the decomposition
of $L^2(S^2,\C)$ into the direct sum of eigenspaces $H_n$
of the Laplace-Beltrami operator
$\Delta$ with eigenvalue $2n(n+1)$, for each $n\in\N$.
Using for instance \cite[Cor.\,1.1]{BinXu}, we know
that for any $N\in\N$, there exists $C_N>0$ such that
for any $n\in\N^*$ and $f\in H_n$, we have
\begin{equation}\label{Weyllaw}
\|f\|_{C^N}\leq C_N n^{2N} \|f\|_{L^2}\,.
\end{equation}
Recall on the other hand that for any $n\in\N$, the eigenspace
$H_n$ is isomorphic to the irreducible $SO(3)$-representations of
highest weight $n\in\N$
with respect to $S^1\subset SO(3)$ rotating along the $u_3$-axis,
and write $f_n\in H_n$ for the unit highest weight vectors.
Via the identification with \emph{spherical harmonics} and
following e.g. \cite[Ex.\,15.4.1,\,\S\,15.5]{Arfken},
we have the following recursion formula in $n\in\N$,
\begin{equation}\label{recfla}
\begin{split}
f_{n+1}&=\sqrt{\frac{2n+3}{2n+2}}f_1\,f_n\,,\\
f_1&=-(u_1+iu_2)\,.
\end{split}
\end{equation}
Let us prove by induction that there exists constants
$\alpha>0$ and $M\in\N$ such that for any $n\in\N$
and all $k\in\N$, we have
\begin{equation}\label{TkalmeqQkpol}
\|T_k(f_n)-Q_k(f_n)\|_{op}
\leq \alpha\frac{n^M}{k}\,,
\end{equation}
The case $n=1$ readily follows from \eqref{TkalmeqQk} and
formula \eqref{recfla} for $f_1$. On the other hand,
axioms (P1), (P3) and the estimate \eqref{Weyllaw} give
constants $\alpha_0>0$ and $N\in\N$ such that
for any $n\in \N$ and $k\in\N$, we have
\begin{equation}\label{TkalmeqQkpolest}
\begin{split}
\|Q_k(f_1\,f_n)-Q_k(f_1)Q_k(f_n)\|_{op}
&\leq\frac{\alpha_0}{k}\|f_1\|_{C^N}n^{2N}\,,\\
\|Q_k(f_n)\|_{op}&\leq \alpha_0\,n^{2N}\,,
\end{split}
\end{equation}
and the same holds for $\{T_k:C^\infty(S^2)\to\cL(H_k)\}_{k\in\N}$.
Let now $n\in\N$ be such that \eqref{TkalmeqQkpol} and holds,
and recall  by assumption that $\|T_k(f)\|_{op}\leq\|f\|_{\infty}$ for all
$f\in C^\infty(S^2,\C)$ and $k\in\N$.
Then using the sub-multiplicativity of the
operator norm, we get that for any $f\in C^\infty(S^2)$,
\begin{equation}\label{algcomput}
\begin{split}
\|T_k&(f_1\,f_n)-Q_k(f_1\,f_n)\|_{op}\\
&\leq\|T_k(f_1)T_k(f_n)-Q_k(f_1)Q_k(f_n)\|_{op}+
2\frac{\alpha_0}{k}\|f_1\|_{C^N}n^{2N}\\
&\leq\|T_k(f_1)(T_k(f_n)-Q_k(f_n))\|_{op}\\
&\quad\quad+\|(T_k(f_1)-Q_k(f_1))Q_k(f_n)\|_{op}
+2\frac{\alpha_0}{k}\|f_1\|_{C^N}n^{2N}\\
&\leq
\frac{\alpha}{k}\|f_1\|_{\infty}n^M+\frac{C\alpha_0}{k}n^{2N}
+2\frac{\alpha_0}{k}\|f_1\|_{C^N}n^{2N}\,,
\end{split}
\end{equation}
where $C>0$ comes from the estimate \eqref{TkalmeqQk} and
formula \eqref{recfla} for $f_1$.
As $\|f_1\|_\infty=\max_{x\in S^2}\,|u_1+iu_2|=1$, we can
choose $\alpha=\alpha_0(C+2\|f_1\|_{C^N})$
and $M=2N+1$ in \eqref{TkalmeqQkpol} to get
\begin{equation}
\begin{split}
\frac{\alpha}{k}\sqrt{\frac{2n+3}{2n+2}}(n^M+n^{2N})
&\leq\frac{\alpha}{k}(n+1)(n^{M-1}+n^{2N-1})\\
&\leq\frac{\alpha}{k}(n+1)^M\,,
\end{split}
\end{equation}
where we used the fact that $n\sqrt{\frac{2n+3}{2n+2}}\leq n+1$ for all $n\in\N$.
Using \eqref{recfla}, this implies \eqref{TkalmeqQkpol}
with $n$ replaced by $n+1$, and
thus for all $n\in\N$ by induction.

Let us now establish the estimate \eqref{TkalmeqQkpol} for
all functions in $H_n$, for each $n\in\N$.
First, by definition of the action of $SO(3)$ on the unit sphere $S^2$,
we see that \eqref{TkalmeqQk} implies the existence of a constant $C>0$
such that for any $g\in SO(3)$, any $j\in\Z/3\Z$ and all $k\in\N$, we have
\begin{equation}
\|T_k(g^*u_j)-Q_k(g^*u_j)\|_{op}\leq C/k\,.
\end{equation}
Note on the other hand that for any $g\in SO(3)$, the functions
$g^*f_n\in C^\infty(S^2)$, $n\in\N$, are again highest weight vectors
with respect to $S^1\subset SO(3)$ rotating along the $g^*u_3$-axis.
We can then repeat the reasoning above replacing
$f_n$ by $g^*f_n$ for all $n\in\N$ to get
\begin{equation}\label{Ginvh+}
\|T_k(g^*f_n)-Q_k(g^*f_n)\|_{op}
\leq \alpha\frac{n^M}{k}\,,
\end{equation}
for any $g\in SO(3)$, with same constants $\alpha>0$ and $M\in\N$.
Recall on the other hand that
the standard volume form $\om$ on $S^2$ is the pushforward
of the Haar measure on $SO(3)$.
Then following e.g. \cite[III.3.3.a]{Busch},
for any $f\in H_n$ we can consider
its \emph{coherent state decomposition}
\begin{equation}
f=\frac{n+1}{2\pi}\,\int_{S^2}\,\langle f,g^*f_n\rangle_{L^2}~g^*f_n~\om_{[g]}\,,
\end{equation}
where $g\in SO(3)$ is any representative of
$[g]\in S^2\simeq SO(3)/S^1$. Then using \eqref{Ginvh+} and
Cauchy-Schwartz inequality, we get
\begin{equation}\label{finalest}
\begin{split}
\|T_k(f)&-Q_k(f)\|_{op}\\
&\leq
\frac{n+1}{2\pi}\,\int_{S^2}\,\left|\langle f,g^*f_n\rangle_{L^2}\right|~
\|T_k(g^*f_n)-Q_k(g^*f_n)\|_{op}~\om_{[g]}\\
&\leq \alpha\|f\|_{L^2} \frac{(n+1)n^{M}}{k}\,.
\end{split}
\end{equation}
Take now any $f \in C^\infty(S^2)$, and consider its spectral
decomposition into the eigenspaces of $\Delta$, so that
$f = \sum_{n\in\N} a_n\,\varphi_n$,
with $\varphi_n\in H_n$ and $\|\varphi_n\|_{L^2}=1$
for all $n\in\N$.
Since $f$ is smooth, the sequence $(a_n)_{n\in\N}$
decays faster than any power of $n$,
so that using \eqref{finalest}, there exists $C'>0$ such that
\begin{equation}\label{fourierlegendre}
\|T_k(f)-Q_k(f)\|_{op} \leq \alpha\sum_{n\in\N}
\frac{(n+1)n^{M}}{k}a_n \leq \frac{C'}{k}\;.
\end{equation}
This shows formula \eqref{quanteq} in the case of $M=S^2$.
\newline

\medskip\noindent {\sc Case of $M=\T^2$:}
Write $(q_1,q_2)\in\T^2=\R^2/\Z^2$ for the standard coordinates,
so that the standard volume form
of volume $2\pi$ writes $\om=2\pi\,dq_1\wedge dq_2$.
We will use the functions $u_1,\,u_2\in C^\infty(\T^2,\C)$
defined for any $q:=(q_1,q_2)\in\T^2\simeq \R^2/\Z^2$ and
$j=1,\,2$ by
\begin{equation}\label{uj=expi}
u_j(q):=e^{2\pi i q_j}\,,
\end{equation}
which satisfy the commutation relation
\begin{equation}\label{Poisu1u2}
\{u_1,u_2\}=2\pi u_1u_2\,.
\end{equation}
Following for instance in \cite[\S\,2]{BFFLS},
we consider the \emph{Moyal-Weyl star product} over
$(C^\infty(\T^2),\{\cdot,\cdot\})$, defined as
in \eqref{DQintro} with coefficients
$\til{C}_1$ and $\til{C_2}$ satisfying
$\til{C}_1^+=\til{C}_2^-=0$ and for all $f,\,g\in C^\infty(\T^2)$,
\begin{equation}\label{MWC2}
\til{C}_2^+(f,g)=-\frac{1}{32\pi^2}
\left(\frac{\partial^2}{\partial q_1^2}f\,
\frac{\partial^2}{\partial q_2^2}g
-2\frac{\partial^2}{\partial q_1\partial q_2}f\,
\frac{\partial^2}{\partial q_1\partial q_2}g
+\frac{\partial^2}{\partial q_2^2}f\,
\frac{\partial^2}{\partial q_1^2}g\right)\,.
\end{equation}
Then given a geometric quantization
$\{T_k:C^\infty(\T^2)\to\cL(H_k)\}_{k\in\N}$ with
$C_1^+\equiv 0$ and $C_2^-=-\frac{i}{2}c\,\{\cdot,\cdot\}$,
we get that $C_1=\til{C}_1$ via formula \eqref{c1-},
and using \eqref{delHC2}, we get that $C_2-\til{C}_2$
is a Hochschild cocycle \eqref{Hochcocycle}.
On the other hand, we have
$C_2^--\til{C}_2^-=-\frac{i}{2}c\,\{\cdot,\cdot\}$,
which also satisfies \eqref{Hochcocycle},
and we thus get $C_2^+-\til{C}_2^+$ is
a symmetric Hochschild cocycle, hence a coboundary by
\cite[Th.\,2.15]{RG}. As in formula \eqref{adef},
this means that
there exists a differential
operator $D_2:C^\infty(\T^2)\to C^\infty(\T^2)$ vanishing on
constants such that
\begin{equation}\label{tilC2+}
\til{C}_2^+(f,g)=C_2^+(f,g)+D_2(f)g+fD_2(g)-D_2(fg)\,.
\end{equation}
Consider the following
change of variables at second order in $1/k$,
for all $f\in C^\infty(\T^2)$,
\begin{equation}\label{TD2ndorder}
T_k^{D_2}(f):=T_k\left(f+\frac{1}{k^2}{D_2}(f)\right)\,.
\end{equation}
This again defines a geometric quantization
in the sense of Definition \ref{quantdef},
with associated bi-differential operators
$C_{1,D_2}$ and $C_{2,D_2}$ of axiom (P3)
satisfying
$C_{1,D_2}^+\equiv 0$ and $C_{2,D_2}^-=-\frac{i}{2}c\,\{\cdot,\cdot\}$,
while formula \eqref{tilC2+} implies
$C_{2,D_2}^+=\til{C}_2^+$.
Note also that $T_k^{D_2}(u_j)^*=T_k^{D_2}(u_j^{-1})$ for each $j=1,\,2$
by Definition
\ref{quantdef} and formula \eqref{uj=expi}.
Then using formula \eqref{MWC2}, we get as $k\to+\infty$,
\begin{equation}
\begin{split}
T_k^{D_2}(u_j)T_k^{D_2}(u_j)^*&=\id+\frac{i}{2}\left(\frac{1}{k}
-\frac{c}{k^2}\right)
T_k^{D_2}(\{u_j,u_j^{-1}\})+\bigo(1/k^3)\\
&=\id+\bigo(1/k^3)
\end{split}\,,
\end{equation}
and
\begin{multline}
T_k^{D_2}(u_1)T_k^{D_2}(u_2)=
T_k^{D_2}(u_1u_2)+\frac{i}{2}\left(\frac{1}{k}
-\frac{c}{k^2}\right)T_k^{D_2}(\{u_1,u_2\})\\
-\frac{1}{32\pi^2k^2}T_k^{D_2}\left(\frac{\partial^2}
{\partial q_1^2}u_1
\frac{\partial^2}{\partial q_2^2}u_2\right)+O(1/k^3)\\
=T_k^{D_2}(u_1u_2)+\frac{i}{2}
\frac{2\pi}{k+c}T_k^{D_2}(u_1u_2)
-\frac{(2\pi)^2}{8k^2}T_k^{D_2}\left(u_1u_2\right)+O(1/k^3)\\
=e^{2\pi i/2(k+c)}T_k^{D_2}(u_1u_2)+O(1/k^3)\,,
\end{multline}
while in the same way,
\begin{equation}
T_k^{D_2}(u_2)T_k^{D_2}(u_1)=e^{-2\pi i/2(k+c)}T_k^{D_2}(u_1u_2)+O(1/k^3)\,.
\end{equation}
We then
see that the operators $x_j:=T_k^{D_2}(u_j)$ for all $j=1,\,2$
and $k\in\N$, satisfy the assumptions of
Theorem \ref{mainthmT2} for
the constant $c\in\R$ as above.
As the assumption
$\limsup_{k\to+\infty}\dim H_k/k<2$ implies in particular that
$\dim H_k<2(k+c)$ for all $k\in\N$,
it follows that $c\in\Z$ and that $\dim H_k=k+c$
for all $k\in\N$ big enough, which proves  \eqref{dimquant}.
Furthemore, Theorem \ref{mainthmT2} and formula \eqref{TD2ndorder}
imply that
there exist unitary operators
$X_1,\,X_2\in\End(H_k)$ satisfying
$X_1X_2=e^{2\pi i/(k+c)}X_2X_1$ and not
preserving any non-trivial proper subspace,
such that
\begin{equation}\label{TkalmrepT2}
\|T_k(u_j)-X_j\|_{op}=\bigo(1/k)\quad\text{for all}\quad
j=1,\,2\,.
\end{equation}

Note that the explicit formula \eqref{X1X2}
shows that for any two such pairs $X_1,\,X_2\in\End(H_k)$
and $\til{X}_1,\,\til{X}_2\in\End(H_k)$, there
exists a unitary operator $U:H_k\to H_k$ and
$p:=(p_1,p_2)\in\T^2\simeq \R^2/\Z^2$ such that
for each $j=1,\,2$, we have
\begin{equation}\label{tilX=X}
\til{X}_j=e^{2\pi i p_j}U^{-1}X_jU\,.
\end{equation}
Setting $m_j:=\lfloor (k+c) p_j\rfloor\in\N$ for each $j=1,\,2$,
and considering the unitary operator
$U_{m_1,m_2}:=X_1^{-m_2}X_2^{m_1}\in\End(H_k)$,
we get
the following estimates in operator norm as $k\to+\infty$,
for all $p=(p_1,p_2)\in\T^2$ and all unitary operators
$X_1,\,X_2\in\End(H_k)$ satisfying
the above commutation relations,
\begin{equation}
U_{m_1,m_2}X_jU_{m_1,m_2}^{-1}=e^{-2\pi i m_j/(k+c)}X_j=
e^{-2\pi i p_j}X_j+\bigo(1/k)\,.
\end{equation}
Thus for any two such pairs of sequences $X_1,\,X_2\in\End(H_k)$,
$k\in\N$,
and $\til{X}_1,\,\til{X}_2\in\End(H_k)$,
$k\in\N$, we get
a sequence of unitary operators
$U_k:H_k\to H_k$, $k\in\N$, such that
\begin{equation}\label{tilX=Xk}
\til{X}_j=U^{-1}_kX_jU_k+\bigo(1/k)\,.
\end{equation}
Then if we have two quantizations
$\{T_k,\,Q_k:C^\infty(\T^2)\to\cL(H_k)\}_{k\in\N}$
with same sequence of Hilbert spaces satisfying
$\dim H_k=k+c$ for all $k\in\N$ big enough,
they both satisfy \eqref{UVdef} for two different
pairs $X_1,\,X_2\in\End(H_k)$
and $\til{X}_1,\,\til{X}_2\in\End(H_k)$,
and formula \eqref{tilX=Xk} shows that there
exists unitary operators $U_k:H_k\to H_k$ for all
$k\in\N$ such that for each $j=1,\,2$,
\begin{equation}\label{quanteqT2uj}
\|U_k^{-1}Q_k(u_j)U_k-T_k(u_j)\|_{op}=\bigo(1/k)\,.
\end{equation}
Now by transitivity as above, it suffices
to establish \eqref{quanteq} when $\{T_k:C^\infty(\T^2)\to\cL(H_k)\}_{k\in\N}$,
is the Berezin-Toeplitz quantization of Theorem \ref{BMS}. Then
by a straightforward adaptation of the computation
\eqref{algcomput} with $f_1$ replaced by $u_1,\,u_2$ respectively
and $f_n$ replaced by $u_1^nu_2^m$,
we get by induction on $n,\,m\in\Z$ that there exist
constants $\alpha>0$ and $M\in\N$, depending only on the quantizations,
such that
\begin{equation}\label{TkalmeqQkpolT2}
\|T_k(u_1^nu_2^m)-Q_k(u_1^nu_2^m)\|_{op}
\leq \alpha\frac{(|n|+|m|)^M}{k}\,.
\end{equation}
Now for any $f \in C^\infty(\T^2)$, consider its Fourier
expansion
\begin{equation}
f = \sum_{m,\,n\in\Z} a_{m,n}u_1^nu_2^m\,.
\end{equation}
Since $f$ is smooth, the coefficients $a_{n,m}$, $n,\,m\in\Z$,
decay faster than any polynomial in $|n|,\,|m|$.
Using the estimate \eqref{TkalmeqQkpolT2} in the same way as with
\eqref{fourierlegendre}, this shows formula \eqref{quanteq} in the case
of $M=\T^2$, and concludes the proof of Theorem \ref{mainthquant-1}.

\subsection{Traces of quantizations}
\label{trsec}

Note that in the previous section, we showed
in particular that for geometric quantizations of
$M=S^2$ or $\T^2$, the constant
$c\in\R$ appearing in Lemma \ref{finequant} is an integer,
uniquely determined by the condition $\dim H_k=k+c$
for all $k\in\N$ big enough. This fact can be refined
for geometric
quantizations satisfying the following additional axiom.

\begin{defin}\label{trdef}
{\rm A geometric quantization $\{T_k:C^\infty(M)\to\cL(H_k)\}_{k\in\N}$
of a closed symplectic manifold
$(M,\om)$ of dimension $\dim M=2d$
is said to satisfy the {\it trace axiom} if
there exists a function $R\in C^\infty(S^2)$ such that
for all $f\in C^\infty(S^2)$, we have
\begin{equation}\label{trfla}
\tr\;T_k(f) = \left(\frac{k}{2\pi}\right)^d\,
\int_{M} f\,R_k\,\frac{\omega^d}{d!}\;,
\end{equation}
for a sequence of functions $R_k\in C^\infty(M)$
satisfying the following estimate as $k\to+\infty$,
$$R_k=1+\frac{1}{k}\,R+\bigo(1/k^2)\,.$$
}
\end{defin}

We then have the following refinement of Lemma \ref{finequant},
relating this trace with the coefficient $C_2^-$.

\begin{thm}\label{trconj}
Let $M=S^2$ or $\T^2$ be endowed with the
standard volume form $\om$ of volume $2\pi$.
Then if
$\{T_k:C^\infty(M)\to\cL(H_k)\}_{k\in\N}$
is a geometric quantization
with $C_1^+\equiv 0$
satisfying the trace axiom of Definition \ref{trdef},
we have for all $f,\,g\in C^\infty(M)$,
\begin{equation}\label{c2R}
C_2^-(f,g)=-\frac{i}{2}R\,\{f,g\}\,.
\end{equation}
\end{thm}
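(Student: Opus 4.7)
\medskip

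The plan is to exploit the vanishing of the trace of a commutator $[T_k(f),T_k(g)]$, combined with the structural description of $C_2^-$ provided by the assumption $C_1^+\equiv 0$. Since $\dim M=2$ gives particularly rigid cohomological data, I expect the proof to split into three quick steps plus one delicate step where the ``constant of integration'' must be killed.

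\emph{Step 1 (integral identity from cyclicity).} By axioms (P2) and (P3),
\begin{equation*}
[T_k(f),T_k(g)] = \frac{i}{k}T_k(\{f,g\}) + \frac{2}{k^2}T_k(C_2^-(f,g)) + \bigo(1/k^3)
\end{equation*}
in operator norm. Since $\dim H_k=k+\bigo(1)$, the trace of any $\bigo(1/k^3)$ operator is $\bigo(1/k^2)$. I will apply $\tr$ to both sides (LHS is zero), expand using the trace axiom $\tr T_k(h)=\tfrac{k}{2\pi}\int h R_k\om$ with $R_k = 1+R/k+\bigo(1/k^2)$, and use $\int\{f,g\}\om=0$. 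Isolating the $1/k$ coefficient yields
\begin{equation*}
\int_M C_2^-(f,g)\,\om \;=\; -\frac{i}{2}\int_M R\,\{f,g\}\,\om\,.
\end{equation*}

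\emph{Step 2 (Hochschild structure of $C_2^-$).} The hypothesis $C_1^+\equiv 0$ is exactly what was needed earlier in the section to conclude that $C_2^-$ is an anti-symmetric Hochschild cocycle, whence by \cite[Th.\,2.15]{RG} one has $C_2^-(f,g)=\frac{i}{2}\alpha(\sgrad f,\sgrad g)$ for a real $2$-form $\alpha$. Since $\dim M=2$, $\alpha=\beta\,\om$ for a unique $\beta\in C^\infty(M)$, giving $C_2^-(f,g)=-\frac{i}{2}\beta\{f,g\}$.

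\emph{Step 3 (integration by parts).} Substituting the form of $C_2^-$ into Step 1 yields $\int_M(\beta-R)\{f,g\}\,\om=0$ for all $f,g\in C^\infty(M)$. Since Hamiltonian vector fields preserve $\om$, one has $\int_M u\{f,g\}\om = \int_M\{u,f\}g\,\om$; applied with $u=\beta-R$ and $g$ arbitrary, this forces $\{\beta-R,f\}=0$ for every $f$. Non-degeneracy of $\om$ on the connected surface $M$ then gives $\beta-R=c_0$ for some constant $c_0\in\R$.

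\emph{Step 4 (vanishing of the additive constant --- the hard part).} The integral identity of Step 1 is insensitive to constants (both sides annihilate $c_0$), so a refined argument is needed to obtain $c_0=0$. The natural strategy is to test the quantization against an explicit pair of functions for which $\beta$ and $R$ can be computed pointwise. In the case $M=S^2$, Theorem \ref{mainthm-1} gives a local normal form for $T_k(u_j)$ up to $\bigo(1)$ error in terms of an irreducible $\mathfrak{su}(2)$-representation, and one can extract both $\beta$ and $R$ from the trace and commutator expansions of these operators at appropriate orders. In the case $M=\T^2$, one analogously tests with the functions $u_1,u_2$ from \eqref{uj=expi}, using the normal form of Theorem \ref{mainthmT2}. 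This pointwise comparison --- which will be the main obstacle in writing out the proof --- pins down $c_0=0$ and completes the identity $\beta=R$, hence Theorem \ref{trconj}.
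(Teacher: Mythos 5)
Your Steps 1--3 are correct, and Step 3 is in fact a nice simplification of the paper's argument: the paper invokes the non-trivial fact that every zero-mean function on a closed surface is a finite sum of Poisson brackets (citing Banyaga) to conclude that the relevant function is constant, whereas your integration by parts ($\int_M u\{f,g\}\,\om = \int_M\{u,f\}g\,\om$, hence $\{u,f\}\equiv 0$ for $u=\beta-R$) reaches the same conclusion more elementarily. The two arguments are equivalent in substance --- the paper first performs a change of variable $\delta$ to make $\beta$ a constant $c$ and then shows $R_\delta$ is constant, while you keep $\beta$ general and show $\beta-R$ is constant directly --- but yours is cleaner.

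Step 4 as you propose it, however, will not work. The normal forms supplied by Theorems \ref{mainthm-1} and \ref{mainthmT2} only determine $T_k(u_j)$ up to an $\bigo(1/k)$ error in operator norm. Extracting $R$ requires controlling the trace at order $1/k$, and extracting $\beta$ (i.e.\ $C_2^-$) requires controlling the product at order $1/k^2$, so a pointwise comparison of operators with $\bigo(1/k)$ precision cannot pin down the constant. The mechanism the paper actually uses to kill $c_0$ is much simpler and avoids the normal form entirely: apply the trace axiom \eqref{trfla} to $f=1$, using $T_k(1)=\id$, to obtain
\begin{equation*}
\dim H_k = \tr T_k(1) = \frac{k}{2\pi}\int_M R_k\,\om = k + \frac{1}{2\pi}\int_M R\,\om + \bigo(1/k)\,,
\end{equation*}
hence $\frac{1}{2\pi}\int_M R\,\om = \dim H_k - k$ for $k$ large. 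On the other hand, the proof of Theorem \ref{mainthquant-1} (together with Lemma \ref{finequant}) shows that $\dim H_k - k = c$, where $c = \frac{1}{2\pi}\int_M\alpha = \frac{1}{2\pi}\int_M\beta\,\om$ is the de Rham class of the two-form controlling $C_2^-$. Combining, $\int_M(\beta-R)\,\om = 0$, and since $\beta-R = c_0$ is constant by your Step 3, $c_0 = 0$. Note that the trace axiom automatically enforces the dimension hypothesis \eqref{eq-dim-1}, so the earlier results do apply; this is the ingredient missing from your outline.
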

%
\begin{proof}
Let $T_k:C^\infty(M)\to\cL(H_k)$, $k\in\N$, be a
geometric quantization with
$C_1^+\equiv 0$ satisfying the trace axiom of
Definition \eqref{trdef},
and recall the form $\alpha\in\Om^2(M,\R)$ of
formula \eqref{tilc2-}. Let $c\in\R$ and $\theta\in\Om^1(M,\R)$ be such that $\alpha=c\,\om+d\theta$,
as in formula \eqref{alphaA}, and write
\begin{equation}\label{Rtheta}
d\theta=:R_\theta\,\om\,,
\end{equation}
with $R_\theta\in C^\infty(M)$. Considering the
change of variable \eqref{chgeofvar} induced by
the derivation $\delta:C^\infty(M)\to C^\infty(M)$
defined by $\delta f:=\theta(\sgrad f)$, we compute
\begin{equation}\label{deltaf=Rtheta}
\int_{M}\,\delta f\,\om =-\int_{M}\,f\,d\theta =
-\int_{M}\,R_\theta\,f\,\om\;.
\end{equation}
Then one readily computes that the quantization
$\{T_k^\delta:C^\infty(M)\to\cL(H_k)\}_{k\in\N}$ induced by $\delta$
as in \eqref{chgeofvar}
also satisfies the trace axiom of Definition
\ref{trdef}, where the
function $R\in C^\infty(M)$ is replaced by the function
$R_\delta:=R-R_\theta$. On the other hand, we know from
the proof of Lemma \ref{finequant} that
$C_{1,\delta}^+=C_1^+\equiv 0$ and
$C_{2,\delta}^-=-\frac{i}{2}c\,\{\cdot,\cdot\}$, and
from the proof of Theorem
\ref{mainthquant-1} above that $c\in\R$ is an integer
satisfying $\dim H_k=k+c$ for all
$k\in\N$ big enough.
Applying formula \eqref{trfla} to $f=1$ and using that $T_k^\delta(1)=\id$,
we get
\begin{equation}\label{RAmean}
\frac{1}{2\pi}\int_{M}\,R_\delta\,\om=c\,.
\end{equation}
On the other hand, using the axioms (P2) and (P3),
we get for any $f,\,g\in C^\infty(M)$ that
as $k\to+\infty$,
\begin{equation}
\begin{split}
i\left(1-\frac{c}{k}\right)\tr\,T_k^\delta(\{f,g\})&=k\,\tr\left([T_k^\delta(f),T_k^\delta(g)]+\bigo(1/k^3)\right)\\
&=\bigo(1/k)\,.
\end{split}
\end{equation}
Now as every function with zero mean can be written as a sum
of Poisson brackets (see e.g. \cite[Theorem 1.4.3]{Banyaga}), we get that
\begin{equation}
\int_{M}\,f\,\om=0\quad\text{implies}\quad\tr\,T^\delta_k(f)=\bigo(1/k)
~\text{as}~k\to+\infty\,.
\end{equation}
Using formula \eqref{trfla} again, we see that this is
possible if and only $R_\delta$ is constant, equal to $c\in\Z$
by formula \eqref{RAmean}. We thus have $R=c+R_\theta$,
and by formulas \eqref{dtheta} and
\eqref{Rtheta}, we get
\begin{equation}
C_2^-(f,g)=-\frac{i}{2}c\,\{f,g\}-\frac{i}{2}R_\theta\,\{f,g\}
=-\frac{i}{2}R\,\{f,g\}\,.
\end{equation}
This gives the result.
\end{proof}

Together with Theorem \ref{mainthmT2}, Theorem
\ref{trconj}
implies the following extension of Theorem \ref{mainthquant-1}
in a special case. Denote by  $\tau_{p}:\T^2\to\T^2$ the translation by an element $p\in\T^2$.

\begin{thm}\label{3/2eqth}
Let $\{Q_k,\,T_k:C^\infty(\T^2)\to\cL(H_k)\}_{k\in\N}$ be two geometric
quantizations of the torus satisfying the trace axiom of Definition
\ref{trdef},
with the same sequence of Hilbert spaces, same $C_1^+$, and same
$R\in C^\infty(\T^2)$. Assume furthermore that the function $R$ is constant,
and that the bi-differential operator $C_1^+$ is translation invariant.
Then there exist a sequence $\{U_k:H_k\to H_k\}_{k\in\N}$
of unitary operators and a sequence $\{p_k\in\T^2\}_{k\in\N}$ of points
in $\T^2$ such that for any $f\in C^\infty(\T^2)$, we have as $k\to+\infty$,
\begin{equation}\label{3/2eqfla}
\|U_k^{-1}Q_k(\tau_{p_k}^*f)U_k-T_k(f)\|_{op}=\bigo(1/k^{3/2})\,.
\end{equation}
\end{thm}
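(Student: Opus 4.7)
My plan is to apply Theorem \ref{mainthmT2} to approximate the operators $T_k(u_j)$ and $Q_k(u_j)$, for $j=1,2$ and $u_j = e^{2\pi i q_j}$ the coordinate exponentials on $\T^2$, by irreducible $*$-representations of the quantum torus $\cA_{1/(k+c)}$ at precision $\bigo(1/k^{3/2})$, and then to exploit the action of $\T^2$ on the set of such representations to identify them up to translation and unitary conjugation. This directly yields the unitaries $U_k$ and the points $p_k \in \T^2$ of the statement on the basic functions $u_j$; the estimate for general $f\in C^\infty(\T^2)$ then follows by Fourier decomposition.

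To apply Theorem \ref{mainthmT2}, I would first reduce to the case where both quantizations share the same star product up to $\bigo(1/k^3)$. The assumption $C_1^T = C_1^Q$, together with Theorem \ref{trconj} and the equality of the constant function $R$, gives $C_2^{T,-} = C_2^{Q,-}$, so the associativity relation \eqref{delHC2} implies that $C_2^{Q,+} - C_2^{T,+}$ is a symmetric Hochschild cocycle, hence a coboundary $\partial_H D_2$ by \cite[Th.\,2.15]{RG}. A second-order change of variable $T_k \mapsto T_k^{D_2}$ aligns the $C_2^+$-terms at the cost of $\bigo(1/k^2)$, within tolerance. A further first-order change of variable by an operator $D$ with $\partial_H D = C_1^+$ (chosen translation-invariant by averaging over $\T^2$, using that $C_1^+$ is translation-invariant), followed by a second-order adjustment to Moyal-Weyl form, produces, exactly as in the torus proof of Theorem \ref{mainthquant-1}, quantizations $T_k^\sharp, Q_k^\sharp$ whose operators $T_k^\sharp(u_j), Q_k^\sharp(u_j)$ satisfy the hypotheses of Theorem \ref{mainthmT2} at order $\bigo(1/k^3)$. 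This yields irreducible $*$-representations $X^T, X^Q \colon \cA_{1/(k+c)} \to \End(H_k)$ with $\|T_k^\sharp(u_j) - X_j^T\|_{op}$ and $\|Q_k^\sharp(u_j) - X_j^Q\|_{op}$ both of order $\bigo(1/k^{3/2})$.

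By the explicit classification \eqref{X1X2}, any two irreducible $*$-representations of $\cA_{1/(k+c)}$ on the same $(k+c)$-dimensional Hilbert space are related by $X_j^Q = e^{-2\pi i p_{k,j}}\, U_k^{-1} X_j^T U_k$ for some $p_k \in \T^2$ and unitary $U_k \in \End(H_k)$. Since $\tau_{p_k}^* u_j = e^{2\pi i p_{k,j}} u_j$, this immediately produces $\|U_k^{-1} Q_k^\sharp(\tau_{p_k}^* u_j) U_k - T_k^\sharp(u_j)\|_{op} = \bigo(1/k^{3/2})$. An induction on $|n| + |m|$ using axiom (P3) with the (now identical and translation-invariant) $C_1$ and $C_2$ of $T_k^\sharp, Q_k^\sharp$, along the lines of the computation \eqref{algcomput}, extends the bound to $u_1^n u_2^m$ with polynomial growth in $(|n|, |m|)$, and Fourier expansion concludes the estimate for all smooth $f$. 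Transferring back from $T_k^\sharp, Q_k^\sharp$ to $T_k, Q_k$: the $\bigo(1/k^2)$ second-order corrections of the changes of variable are within tolerance, and the $\bigo(1/k)$ first-order correction by the translation-invariant $D$ cancels exactly because $D$ acts on each Fourier mode $u_1^n u_2^m$ by a scalar, and therefore commutes with both $\tau_{p_k}^*$ and unitary conjugation.

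\textbf{Main obstacle.} The delicate point is preserving the $\bigo(1/k^{3/2})$ estimate through the transfer from $T_k^\sharp, Q_k^\sharp$ back to $T_k, Q_k$, since a first-order change of variable naively introduces $\bigo(1/k)$ differences that would break the target bound. This is resolved by the hypothesis that $C_1^+$ is translation-invariant, which allows the correcting operator $D$ to be chosen translation-invariant; the resulting $\bigo(1/k)$ corrections on the $T_k$-side and the $\tau_{p_k}^*$-translated $Q_k$-side then coincide and cancel when comparing the two sides of \eqref{3/2eqfla}.
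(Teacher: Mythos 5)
Your proposal follows essentially the same route as the paper: normalize both quantizations by changes of variable (including a common first-order change by a translation-invariant $D$ killing $C_1^+$, plus second-order adjustments to Moyal–Weyl form), apply Theorem \ref{mainthmT2} at precision $\bigo(1/k^{3/2})$ on the generators $u_1, u_2$, identify the two resulting irreducible $*$-representations of the quantum torus up to a unitary and a translation by $p_k \in \T^2$, propagate the estimate to $u_1^n u_2^m$ and then to all smooth $f$ by Fourier expansion, and finally transfer the bound back to $T_k, Q_k$ using that the first-order change of variable is translation-invariant. You have correctly identified the key point of the argument — that translation-invariance of $D$ is exactly what makes the $\bigo(1/k)$ first-order correction transferrable — which is the delicate step the paper itself treats tersely.

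A few small remarks. Your preliminary step aligning $C_2^{T,+}$ with $C_2^{Q,+}$ via a second-order change $D_2$ is redundant: the subsequent second-order adjustment of each quantization to Moyal–Weyl form already harmonizes the $C_2^+$'s, and any discrepancy in the choice of $D_2$ is an $\bigo(1/k^2)$ effect, within tolerance. Also note that Theorem \ref{trconj} as stated assumes $C_1^+ \equiv 0$, so strictly speaking one should apply it to the normalized quantizations $T_k^D, Q_k^D$ rather than to $T_k, Q_k$ directly; the conclusion $C_2^{T,-}=C_2^{Q,-}$ for the original quantizations then follows because the same $D$ is used for both, so the transformation rule \eqref{cjA} for $C_2^-$ is invertible with the same inverse on both sides. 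Finally, the phrase that the first-order correction "cancels exactly" is slightly imprecise: what actually happens is that on each Fourier mode $u_1^n u_2^m$ the correction acts by the same scalar factor $1 + d_{n,m}/k$ on both $T_k^D(u_1^n u_2^m)$ and $U_k^{-1}Q_k^D(\tau_{p_k}^* u_1^n u_2^m) U_k$, so the factor can be divided out (for the relevant polynomially bounded range of $(n,m)$), which preserves the $\bigo(1/k^{3/2})$ bound. This is the argument the paper compresses into "As the common change of variable is invariant by translation, this readily implies the result." None of these remarks affect the correctness of your plan.
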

\begin{proof}
First note that as $C_1^+$ is translation invariant, the differential operator
$D:C^\infty(\T^2)\to C^\infty(\T^2)$ appearing in formula
\eqref{adef-1} can be chosen to be translation invariant
as well. Consider the common change of variable defined for all
$f\in C^\infty(\T^2)$ and $k\in\N$ by
\begin{equation}\label{QkDTkD}
Q_k^D(f):=Q_k\left(f+\frac{1}{k}D(f)\right)
\quad\text{and}\quad T_k^D(f):=T_k\left(f+\frac{1}{k}D(f)\right)\,,
\end{equation}
so that $C_{1,D}^+\equiv 0$. Since both $D$ and $\om$ are translation invariant, and $D$ vanishes on constants,
we have
\begin{equation}
\int_{\T^2}\,D(f)\,\om =0\;,
\end{equation}
for all $f\in C^\infty(\T^2)$.
We then see that
the quantizations \eqref{QkDTkD} also
satisfy the trace axiom of Definition \ref{trdef}
with function $R\in C^\infty(\T^2)$ unchanged, and by Theorem
\ref{trconj}, they have same $C_{2,D}^-$, given
for all $f,\,g\in C^\infty(\T^2)$
by the formula
\begin{equation}
C_{2,D}^-(f,g)=-\frac{i}{2}R\,\{f,g\}\,.
\end{equation}
Furthermore,
the trace axiom of Definition \ref{trdef} implies
in particular that $\limsup_{k\to+\infty}\dim H_k/k<2$.
As $R$ is constant by assumption,
we can then follow the proof of Theorem
\ref{mainthquant-1} in Section \ref{proofquantsec}
in the torus case $\T^2$, replacing the quantizations
$\{Q_k,\,T_k:C^\infty(\T^2)\to\cL(H_k)\}_{k\in\N}$
by the quantizations
$\{Q_k^D,\,T_k^D:C^\infty(\T^2)\to\cL(H_k)\}_{k\in\N}$
constructed above.
Using the full strength of Theorem \ref{mainthmT2},
we get unitary operators
$X_1,\,X_2\in\End(H_k)$ satisfying
$X_1X_2=e^{2\pi i/(k+c)}X_2X_1$
such that the following analogue of
formula \eqref{TkalmrepT2} holds,
\begin{equation}\label{Tkalmrep3/2}
\|T_k^D(u_j)-X_j\|_{op}=\bigo(1/k^{3/2})\quad\text{for all}\quad
j=1,\,2\,.
\end{equation}
Furthermore, the same holds for $Q_k^D$ with operators
$\til{X}_1,\,\til{X}_2\in\End(H_k)$ such that
\begin{equation}\label{tilX=X3/2}
\til{X}_j=e^{2\pi i p_j}U^{-1}X_jU\quad\text{for all}\quad
j=1,\,2\,,
\end{equation}
for a unitary operator $U:H_k\to H_k$ and
$p:=(p_1,p_2)\in\T^2\simeq \R^2/\Z^2$.
On the other hand, note that by definition \eqref{uj=expi}
of $u_j\in C^\infty(\T^2,\C)$ for all $j=1,\,2$,
if $\tau_p:\T^2\to\T^2$ is the translation
operator by $p\in\T^2$, then we have
\begin{equation}
\tau_p^*u_j=e^{2\pi ip_j}u_j\,.
\end{equation}
Using now the commutation relation \eqref{Poisu1u2} and the fact that
$C_{1,D}^+\equiv 0$, we get from axiom (P3) a constant $\alpha>0$ and
a constant $N\in\N$ such that for any $m,\,n\in\Z$, we have
\begin{equation}
\begin{split}
\left\|T_k^D(u_1)T_k^D(u_1^nu_2^m)-\left(1+ \frac{2\pi in}{k}\right)
T_k^D(u_1^{n+1}u_2^m)\right\|_{op}
&\leq\frac{\alpha}{k^2}(|n|+|m|)^{N}\,,\\
\left\|T_k^D(u_2)T_k^D(u_1^nu_2^m)-\left(1+ \frac{2\pi i m}{k}\right)
T_k^D(u_1^{n}u_2^{m+1})\right\|_{op}
&\leq\frac{\alpha}{k^2}(|n|+|m|)^{N}\,.
\end{split}
\end{equation}
Using this estimate and through a traightforward adaptation
of the proof in Section \ref{proofquantsec} for $M=\T^2$,
we then get a sequence of unitary
operators $\{U_k:H_k\to H_k\}_{k\in\N}$,
$k\in\N$ and a sequence of points $\{p_k\in\T^2\}_{k\in\N}$,
such that for any $f\in C^\infty(S^2)$,
we have the following estimate as $k\to+\infty$,
\begin{equation}\label{quanteqTkD3/2}
\|U_k^{-1}Q_k^D(\tau_{p_k}^*f)U_k-T_k^D(f)\|_{op}
=\bigo(1/k^{3/2})\,,
\end{equation}
where $\tau_{p}:\T^2\to\T^2$ denotes the translation by $p\in\T^2$. As the common change of variable \eqref{QkDTkD}
is invariant by translation, this readily implies the result.
\end{proof}

Theorem \ref{trconj} is of specific interest in the theory of
\emph{deformation
quantization} of the Poisson algebra
$(C^\infty(M),\{\cdot,\cdot\})$. To see this,
consider the following
extension of axiom (P3).

\begin{defin}\label{totexp}{\rm
A geometric quantization $\{T_k:C^\infty(M)\to\cL(H_k)\}_{k\in\N}$
of a closed symplectic manifold
$(M,\om)$ is said to satisfy the \emph{star product axiom} if
there exists a collection of bi-differential operators
$C_j$, $j\in\N$, such that for all $m\in\N$
and all $f,\,g\in C^\infty(M)$,
\begin{equation}\label{totexpfla}
T_k(f)T_k(g) = T_k\left(fg+\sum_{j=1}^{m-1}
\frac{1}{k^j}\,C_j(f,g)\right)+ \bigo(1/k^{m})\;.
\end{equation}
}
\end{defin}

The name for this axiom is justified by the fact that,
together with the other axioms of
Definition \ref{quantdef}, this induces a
\emph{differential star product} $*$ on the ring
of formal power series
$C^\infty(M,\C)[[\hbar]]$, with formal parameter $\hbar$.
Specifically, the formula
\begin{equation}\label{stardef}
f*g:= fg+\sum_{j=1}^\infty\hbar^j C_j(f,g)\,,
\end{equation}
for all $f,\,g\in C^\infty(M)$, defines
an associative unital $\C[[\hbar]]$-linear
product $*$ on $C^\infty(M,\C)[[\hbar]]$ satisfying
$f*g-g*f= i\hbar\{f,g\}+\bigo(\hbar^2)$.
Setting $\hbar=1/k$, we see that \eqref{totexp}
reads formally as the {\it star product axiom}
\begin{equation} \label{totexp-1}
T_k(f)T_k(g)= T_k(f*g)\;,
\end{equation}
where this equality is understood as an asymptotic expansion
with respect to the operator norm.

Working with formal power series in $\hbar$, one
can extend the notions
\eqref{chgeofvar} and \eqref{TD2ndorder}
of a change of variable over any subset $U\subset M$
as a map
$A: C^\infty(U,\C)[[\hbar]]\longrightarrow
C^\infty(U,\C)[[\hbar]]$ satisfying $A(1)=1$ and
\begin{equation}\label{starchgeofvar}
A(f):=f + \sum_{j=1}^{+\infty} \hbar^j D_j\,f\,,
\end{equation}
for all compactly supported $f\in C^\infty(U)$,
where $D_j$ are differential operators
for all $j\in\N$. This acts on a star product
$*$ via the formula
\begin{equation}\label{stareq}
f*_A g:= A^{-1}(A(f) * A(g))\,,
\end{equation}
where $*_A$ is defined on compactly supported functions
$f,\,g\in C^\infty(U,\C)$. In the theory of
deformation quantization, this is also called a
\emph{star-equivalence}.
For change of variables of the form $A(f)=f+\hbar D\,f$
for any $f\in C^\infty(M)$,
one readily checks that $*_A$ is
the star product
\eqref{totexp-1} associated to the geometric quantization
$\{T_k^D:C^\infty(M)\to\cL(H_k)\}_{k\in\N}$ of \eqref{chgeofvar}.

Following
\cite[\S\,1,\,p.229]{NT1} (see also \cite[\S\,2,\,p.220]{Kar}),
one can define the
\emph{canonical trace} of a differential
star product $\star$ over a closed symplectic manifold
$(M,\om)$ of dimension $\dim M=2d$ as the map
$\tr_\hbar:C^\infty(M)[[\hbar]]\to\C[[\hbar]]$ such that
for any $f\in C^\infty(M)$
supported over a contractible Darboux chart $U\subset M$,
we have
\begin{equation}\label{cantrdef}
\tr_\hbar(f)=(2\pi\hbar)^{-d}\int_X\,A_U(f)\,\frac{\om^d}{d!}\,,
\end{equation}
where $A_U:C^\infty(U)[[\hbar]]\to C^\infty(U)[[\hbar]]$ is
a change of variable making $\star$ equal to the usual
Moyal-Weyl star product over $\R^{2d}$ in these
Darboux charts. We will
not need the full definition of the Moyal-Weyl star product,
but only that it satisfies $C_1^+=C_2^-=0$.
The following result is then a
consequence of Theorem \ref{trconj}.

\begin{cor}\label{thm-trace}
Let $M=S^2$ or $\T^2$ be endowed with the standard volume form $\om$
of the total area $2\pi$. Let
$\{T_k:C^\infty(M)\to\cL(H_k)\}_{k\in\N}$ be a
geometric quantization
satisfying the trace axiom of Definition \ref{trdef} and the
star product axiom of Definition \ref{totexp}.
Then for all $f\in C^\infty(S^2)$, we have the
asymptotic expansion
$$ \tr\,T_k(f)=\tr_{\hbar}(f)  + \bigo(1/k)\;,$$
as $k = 1/\hbar \to +\infty$.
\end{cor}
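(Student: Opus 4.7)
The plan is to match the asymptotic expansions of $\tr T_k$ and $\tr_\hbar$ at the leading order $\hbar^{-1}$ and the next order $\hbar^0$, by combining the trace axiom with the trace-like behavior of $\tr T_k$ on the induced star product algebra. First, by the change of variable of Lemma \ref{finequant}, I may assume $C_1^+\equiv 0$: the quantization $T_k^D$ still satisfies all three axioms of interest, and the canonical trace of the equivalent star product transforms in a controlled way that does not affect the claim. Theorem \ref{trconj} then identifies $C_2^-(f,g)=-\frac{i}{2}R\{f,g\}$, with $R$ the density function of the (modified) trace axiom.

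The main analytic input is traciality. The star product axiom of Definition \ref{totexp} and the cyclicity of the matrix trace on finite-dimensional $H_k$ give $\tr T_k(f*g-g*f)=\bigo(1/k^m)$ for every $m\in\N$, while the canonical trace satisfies $\tr_\hbar(f*g-g*f)=0$ formally in $\C[[\hbar]]$. Writing $\tr_\hbar(f)=\frac{1}{2\pi\hbar}\int_M f\,\omega+\frac{1}{2\pi}\int_M f\rho_1\,\omega+\bigo(\hbar)$ for some $\rho_1\in C^\infty(M)$ and plugging in $f*g-g*f=i\hbar\{f,g\}-i\hbar^2 R\{f,g\}+\bigo(\hbar^3)$, the order-$\hbar^1$ term of the trace identity gives $\int_M\{f,g\}(\rho_1-R)\,\omega=0$ for all $f,g\in C^\infty(M)$. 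As already used in the proof of Theorem \ref{trconj}, every zero-mean smooth function on $S^2$ or $\T^2$ is a finite sum of Poisson brackets, so $\rho_1-R$ is a constant $\kappa$.

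To pin down $\kappa=0$, I would evaluate both traces at $f=1$: by the trace axiom and the integrality of $\dim H_k$, $\tr T_k(1)=k+c+\bigo(1/k)$ with $c=\frac{1}{2\pi}\int_M R\,\omega\in\Z$, while the Nest-Tsygan definition of $\tr_\hbar(1)$ via a partition of unity subordinate to Darboux charts and the local Moyal--Weyl normalizations should yield the matching expansion $\tr_\hbar(1)=k+c+\bigo(1/k)$. Combining the two expansions then produces $\tr T_k(f)=\tr_\hbar(f)+\bigo(1/k)$, and reversing the change of variable yields the corollary for the original quantization.

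The main obstacle lies precisely in this last matching step: verifying that the order-$\hbar^0$ correction of $\tr_\hbar(1)$ equals the topological integer $c$ from the dimension count requires identifying the Hochschild-cohomological data governing the $\bigo(1)$-term of the canonical trace with the global content of the trace axiom. This compatibility is morally ensured by Fedosov theory and by the uniqueness (up to scalar) of traces on star product algebras of closed symplectic manifolds, but it requires care to make explicit since $\rho_1=R$ rather than $\rho_1=R+\kappa$ encodes a nontrivial compatibility between the normalization of $\tr_\hbar$ via local equivalences and the globally defined function $R$ produced by Theorem \ref{trconj}.
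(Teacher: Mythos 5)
Your route is genuinely different from the paper's: instead of computing the canonical trace directly, you exploit the traciality of both $\tr T_k$ and $\tr_\hbar$ on the commutator ideal of the induced star product. The computation you sketch is correct: with $\tr_\hbar(f) = \frac{1}{2\pi\hbar}\int_M f\,\om + \frac{1}{2\pi}\int_M\rho_1 f\,\om + \bigo(\hbar)$ as an ansatz, plugging $f*g-g*f = i\hbar\{f,g\} - i\hbar^2 R\{f,g\}+\bigo(\hbar^3)$ into $\tr_\hbar(f*g-g*f)=0$ gives $\int_M(\rho_1-R)\{f,g\}\,\om=0$, and since zero-mean functions are sums of Poisson brackets, $\rho_1-R=\kappa$ is a constant. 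But you do not close the gap of showing $\kappa=0$, and this gap is real: traciality alone can never pin down this constant, because the space of $\C[[\hbar]]$-linear traces on the star product algebra is free of rank one over $\C[[\hbar]]$, so any two traces differ by multiplication by an invertible power series $1+\kappa\hbar+\cdots$. The $\kappa$ you are left with is precisely the ambiguity that the \emph{normalization} of the Nest--Tsygan canonical trace is designed to remove.

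The way the paper closes this is by evaluating $\tr_\hbar$ directly from its definition \eqref{cantrdef}: for $f$ compactly supported in a Darboux chart $U$, one writes $A_U(f)=f+\hbar D_U f+\bigo(\hbar^2)$, observes that $\til C_1^+\equiv 0$ forces $D_U$ to be a derivation, that $\til C_2^-\equiv 0$ then forces $D_U f = -\theta(\mathrm{sgrad}\,f)$ for a $1$-form $\theta$ with $C_2^-(f,g)=\frac{i}{2}d\theta(\mathrm{sgrad}\,f,\mathrm{sgrad}\,g)$, and then integrates by parts to identify $\frac{1}{2\pi}\int_M D_U f\,\om = \frac{1}{2\pi}\int_M fR_U\,\om$ with $d\theta = R_U\,\om|_U$. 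Since $C_2^-$ is global, $R_U=R$ on each chart, and the $\hbar^0$ density of $\tr_\hbar$ is exactly $R$, i.e.\ $\rho_1=R$ and $\kappa=0$. This local computation is exactly what your proposal delegates to the vague phrase \emph{should yield the matching expansion}; evaluating $\tr_\hbar(1)$ via a partition of unity requires the same identification $D_{U_i}f=-\theta_i(\mathrm{sgrad}\,f)$ with $d\theta_i=R\,\om|_{U_i}$ on each chart, so your detour does not actually avoid the work, it only determines the answer up to a constant and then still needs the explicit normalization to fix it. To make your argument a complete proof you would have to insert this local computation, at which point it collapses to the paper's direct approach.
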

\begin{proof}
Take $f\in C^\infty(M)$ to be compactly supported in a Darboux
chart $U\subset M$, and let
$A_U:C^\infty(U)[[\hbar]]\to C^\infty(U)[[\hbar]]$ be a local
change of variable making the induced star product \eqref{stardef}
equal to the Moyal-Weyl star
product. Let us write
\begin{equation}
A_U(f)=f+\hbar\,D_U\,f+O(\hbar^2)\,,
\end{equation}
and write $\til{C}_1$ and $\til{C}_2$ for the bi-differential
operators of \eqref{stardef} associated with the star
product $*_{A_U}$ over $U$.
Note that terms of order $\hbar^2$ and more do not affect
$\til{C}_1^+$ and $\til{C}_2^-$, and by formula
\eqref{cjA},
the condition
$\til{C}_1^+\equiv 0$
determines $D_U:C^\infty(U)\to C^\infty(U)$
up to a derivation.
In particular,
by the trace axiom \eqref{trfla}, one sees
that both the usual trace and the canonical
trace change the same way under a change of variable of the
form \eqref{chgeofvar}. By Lemma \ref{finequant},
it suffices to show the result for quantizations
which already satisfy $C_1^+\equiv 0$.

Let then $\{T_k:C^\infty(M)\to\cL(H_k)\}_{k\in\N}$ be
a geometric quantization with $C_1^+\equiv 0$
and satisfying the trace axiom \eqref{trfla},
so that
we are under the hypotheses of Theorem \ref{trconj}.
Then by formula \eqref{cjA}, the condition
$\til{C}_1^+\equiv 0$
implies that $D_U:C^\infty(U)\to C^\infty(U)$
has to be a derivation in that case.
Furthermore,
formulas \eqref{cjA} and \eqref{dtheta}
show that in order to also have
$\til{C}_2^-\equiv 0$, this derivation has to
be of the form $D_U\,f:=-\theta(\sgrad\,f)$ for all
compactly supported
$f\in C^\infty(U)$, where $\theta\in\Om^1(M,\R)$
satisfies
\begin{equation}\label{c2-exact}
C_2^-(f,g)=\frac{i}{2}d\theta(\textup{sgrad} f,\textup{sgrad} g)\,,
\end{equation}
for all compactly supported $f,\,g\in C^\infty(U)$.
Note that this is compatible with formula \eqref{tilc2-},
as all $2$-forms over a contractible open set $U\subset M$
are exact.
Then by definition \eqref{cantrdef}
of the canonical trace, for all $f\in C^\infty(M)$ with compact
support in $U\subset X$, we then have
\begin{equation}\label{eq-trace-final}
\begin{split}
\tr_\hbar(f)&=\frac{1}{2\pi\hbar}\int_X\,
\left(f-\hbar
\,\theta(\textup{sgrad}f)\right)\,\om
+O(\hbar)\\
&=\frac{1}{2\pi\hbar}\int_X\,\left(f+\hbar\,
fR_U\right)\,\om
+O(\hbar)\,,
\end{split}
\end{equation}
where $R_U\in C^\infty(U)$ is defined by the formula
\begin{equation}
d\theta=:R_U\,\om|_U\,.
\end{equation}
Therefore, by formula \eqref{c2-exact},
for all compactly supported
$f,\,g\in C^\infty(U)$, we have
\begin{equation}
C_2^-(f,g)=-\frac{i}{2}\,R_U\,\{f,g\}\,.
\end{equation}

By Theorem \ref{trconj}, the trace $\tr\,T_k(f)$ is given by the
last term in formula \eqref{eq-trace-final}, and hence coincides with the
canonical trace $\tr_\hbar(f)$ up to $\bigo(1/k)$. This completes
the proof of the corollary.
\end{proof}

Corollary \ref{thm-trace} naturally
leads to the following conjecture.

\begin{conjecture}\label{conjtrace} Let
$\{T_k:C^\infty(M)\to\cL(H_k)\}_{k\in\N}$ be a
geometric quantization of a closed symplectic manifold $(M,\om)$
satisfying the trace axiom of Definition \ref{trdef} and
the star product axiom of Definition \ref{totexp}.
Then for all $f\in C^\infty(M)$ and $m\in\N$,
we have the
asymptotic expansion
$$ \tr\,T_k(f)=\tr_{\hbar}(f)  + O(1/k^m) \;,$$
as $k = 1/\hbar \to+\infty$.
\end{conjecture}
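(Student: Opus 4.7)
The plan is to derive Conjecture \ref{conjtrace} from the uniqueness of traces on differential star products over closed symplectic manifolds. The basic point is that the ordinary operator trace is exactly cyclic, while the star product axiom identifies composition of $T_k$-operators with the star product $*$ up to errors of any polynomial order in $1/k$. Together these make $\tr T_k$ a semiclassically approximate trace on $*$, and uniqueness of the trace should force it to match the canonical trace $\tr_\hbar$ to all orders.

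Concretely, the star product axiom of Definition \ref{totexp} combined with cyclicity of the operator trace gives, for every $f,g \in C^\infty(M)$ and every $m \in \N$,
\begin{equation}
\tr\,T_k(f*g - g*f) = \tr[T_k(f), T_k(g)] + \bigo(k^{d-m}) = \bigo(k^{d-m}),
\end{equation}
where we use $\dim H_k = \bigo(k^d)$ for $\dim M = 2d$, supplied by the trace axiom. Writing $\tr T_k(f) = (k/2\pi)^d \int_M f R_k\,\omega^d/d!$ via the trace axiom, this says that $R_k\,\omega^d/d!$ is an approximate trace density for $*$, with error that is $o(k^{-N})$ for any $N$ relative to the leading order.

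I would then promote the partial expansion $R_k = 1 + k^{-1}R + \bigo(k^{-2})$ from the trace axiom to a full asymptotic expansion $R_k \sim \sum_{j\geq 0} k^{-j}\rho_j$ in $C^\infty(M)$. Inductively, at order $m \geq 2$, the approximate trace relation becomes a linear equation for $\rho_m$ whose right-hand side is determined by $\rho_0,\dots,\rho_{m-1}$ and the bi-differentials $C_1,\dots,C_m$ of \eqref{stardef}. On a closed symplectic manifold, the identity $\{C^\infty(M),C^\infty(M)\} = \{f : \int_M f\,\omega^d = 0\}$ (already invoked in the proof of Theorem \ref{trconj}) forces this equation to determine $\rho_m$ modulo constants; the remaining constant is pinned down by matching with the asymptotic expansion of $\tr T_k(1) = \dim H_k$ supplied by the trace axiom. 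The resulting formal density $\rho(\hbar) := \sum_j \hbar^j \rho_j$ satisfies $\rho_0 = 1$, and by the classical uniqueness theorem for traces on differential star products over closed symplectic manifolds (Fedosov, Nest--Tsygan, Karabegov), the functional $f \mapsto (2\pi\hbar)^{-d}\int_M f\rho(\hbar)\,\omega^d/d!$ must coincide with $\tr_\hbar$. Reading off the expansion order by order yields the conjecture.

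The main obstacle is justifying the inductive construction of the full asymptotic expansion of $R_k$. The trace axiom as stated only commits to the first two orders, and extending it requires either a cohomological analysis — verifying that the obstruction to extending the expansion at each order lies in a Hochschild cohomology class that vanishes on $(M,\omega)$, a symplectic analogue of the effective Whitehead argument used in Section \ref{sec-gen} but for Hochschild rather than Chevalley--Eilenberg cohomology — or a direct strengthening of the trace axiom to posit a complete asymptotic expansion $R_k \sim \sum_j k^{-j} \rho_j$ a priori. The latter route is more practical, since this expansion is known to hold for Berezin--Toeplitz quantizations via the off-diagonal asymptotics of the Bergman kernel (Ma--Marinescu, Charles); once it is in place, the uniqueness-of-trace step is standard and yields the conjecture with no further work.
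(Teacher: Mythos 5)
The statement you are addressing is an open conjecture in the paper (Conjecture \ref{conjtrace}): the authors do not supply a proof, and immediately after the statement they cite Hawkins \cite[Cor.\,10.5]{Haw00} only for the special case of Berezin--Toeplitz quantizations of closed K\"ahler manifolds. So there is no proof in the paper to compare against, and what you have produced is a strategy sketch for an open problem. You are candid about the first obstruction: the trace axiom of Definition \ref{trdef} only commits to two orders of $R_k$, whereas your cyclicity/star-product argument needs an asymptotic expansion of $\tr T_k$ to all orders in $1/k$, which does not follow from the axioms as stated and has to be either derived or postulated. That assessment is correct.

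There is, however, a second gap which your sketch treats as automatic. Even if a full expansion $R_k \sim \sum_j \rho_j k^{-j}$ with $\rho_0=1$ were available, producing a formal trace $\tau_\hbar(f) := (2\pi\hbar)^{-d}\int_M f\,\rho(\hbar)\,\om^d/d!$, the uniqueness theorem for traces on a differential star product over a closed symplectic manifold only asserts that the $\C[[\hbar]]$-module of traces has rank one: it yields $\tau_\hbar = c(\hbar)\,\tr_\hbar$ for some $c(\hbar)\in\C[[\hbar]]$, and matching the leading symbol pins $c$ down to $1+\bigo(\hbar)$, not to $1$. Showing $c(\hbar)=1$ is equivalent to proving $\dim H_k = \tr_\hbar(1) + \bigo(1/k^m)$ for every $m$, i.e.\ that the dimension of the quantum Hilbert space has, to all orders, the form predicted by the algebraic index theorem. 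That is precisely the hard input in Hawkins' argument for the Berezin--Toeplitz case (supplied there by Riemann--Roch and the Fedosov/Nest--Tsygan index formula), and it is not available for a general geometric quantization satisfying only axioms (P1)--(P3), the trace axiom, and the star product axiom. Your claim that ``the uniqueness-of-trace step is standard and yields the conjecture with no further work'' therefore proves only that $\tr T_k$ is asymptotic to \emph{some} trace on the star product, not that it coincides with the canonical one; closing that normalization gap is the real content of the conjecture.
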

The trace axiom of Definition \ref{trdef}
is a basic property of
Berezin-Toeplitz quantizations of closed K\"{a}hler manifolds,
and the fact that these quantizations
satisfy the star product axiom of Definition
\ref{totexp} has been
shown by Schlichenmaier in \cite{Sch}.
Then Conjecture \ref{conjtrace} for Berezin-Toeplitz
quantizations of closed K\"{a}hler manifolds
has been established by Hawkins in \cite[Cor.\,10.5]{Haw00}.

\begin{rem}\label{starrmk}
{\rm
As explained for instance in \cite[\S\,6]{RG}, there exists
a notion of characteristic class for differential star-products $*$
over symplectic manifolds, which has been
introduced by Deligne in \cite{Del} as an element $c(\star)$
of the affine space $\hbar^{-1}[\om]+H^2(M,\R)[[\hbar]]$.
By the work of Fedosov \cite{F2} and Nest and Tsygan \cite{NT1,NT2},
this class is known to classify star-products up to
\emph{star-equivalence} \eqref{stareq}.
Then we have the relation
\begin{equation}\label{Delclass}
c(\star)=\hbar^{-1}[\om]+c\,[\om]+\bigo(\hbar)\,,
\end{equation}
where $c\in\R$ is the constant produced from
$T_k:C^\infty(M)\to\cL(H_k)$, $k\in\N$
by Lemma \ref{finequant}. Then for geometric
quantizations satisfying star product axiom \eqref{totexp-1},
the proof of Theorem \ref{mainthquant-1}
computes this constant to be an integer via the formula $\dim H_k=k+c$ for all $k\in\N$.
The Deligne-Fedosov class
of the standard Berezin-Toeplitz quantizations of closed K\"{a}hler manifolds
has been computed by Hawkins in \cite[Th.\,10.6]{Haw00}
and Karabegov and Schlichenmaier in \cite{KS01}.}
\end{rem}


\medskip\noindent{\bf Acknowledgement.}
L.P. thanks University of Chicago, where
a part of this paper was written, for hospitality and an excellent research atmosphere.
We thank D.Treschev for a useful comment, and O. Shabtai for an attentive
reading of the manuscript and pointing out a number of mistakes.

\Addresses

\end{document}